\theoremstyle{plain}
\theoremstyle{plain}
\newtheorem{theorem}{Theorem}[section]
\newtheorem{lemma}[theorem]{Lemma}
\newtheorem{claim}[theorem]{Claim}
\newtheorem{definition}[theorem]{Definition}
\theoremstyle{definition}
\newtheorem{observation}[theorem]{Observation}
\newtheorem{remark}[theorem]{Remark}
\newtheorem{corollary}[theorem]{Corollary}
\DeclareSymbolFont{bbold}{U}{bbold}{m}{n}
\DeclareSymbolFontAlphabet{\mathbbold}{bbold}
\newcommand{\nats}{{\mathbb N}}
\newcommand{\eps}{{\varepsilon}}
\newcommand{\prob}{\mathrm{Prob}}
\newcommand{\expec}{\mathbb{E}}
\newcommand{\polylog}{\mathrm{polylog}}
\newcommand{\opt}{\mathrm{OPT}}
\newcommand{\alg}{\mathrm{ALG}}
\title{Approximating Red-Blue Set Cover and Minimum Monotone Satisfying Assignment}
\author{Eden Chlamt\'a\v{c}\thanks{Department of Computer Science, Ben-Gurion University. The work was done while the author was visiting and supported by TTIC. Email: \href{chlamtac@cs.bgu.ac.il}{chlamtac@cs.bgu.ac.il}} \and Yury Makarychev\thanks{Toyota Technological Institute at Chicago (TTIC). Supported by NSF awards CCF-1955173 and CCF-1934843. Email: \href{mailto:yury@ttic.edu}{yury@ttic.edu}} \and Ali Vakilian\thanks{Toyota Technological Institute at Chicago (TTIC). Email: \href{mailto:vakilian@ttic.edu}{vakilian@ttic.edu}}}
\date{}
\begin{document}

\maketitle
\begin{abstract}
We provide new approximation algorithms for the Red-Blue Set Cover and Circuit Minimum Monotone Satisfying Assignment (MMSA) problems. Our algorithm for Red-Blue Set Cover achieves $\tilde O(m^{1/3})$-approximation improving on the $\tilde O(m^{1/2})$-approximation due to Elkin and Peleg (where $m$ is the number of sets). Our approximation algorithm for MMSA$_t$ (for circuits of depth $t$) gives an $\tilde O(N^{1-\delta})$ approximation for $\delta = \frac{1}{3}2^{3-\lceil t/2\rceil}$, where 
$N$ is the number of gates and variables. No non-trivial approximation algorithms for MMSA$_t$ with $t\geq 4$ were previously known.

We complement these results with lower bounds for these problems: For Red-Blue Set Cover,  we provide a nearly approximation preserving reduction from Min $k$-Union that gives an $\tilde\Omega(m^{1/4 - \varepsilon})$ hardness under the Dense-vs-Random conjecture, while for MMSA we sketch a proof that an SDP relaxation strengthened by Sherali--Adams has an integrality gap of $N^{1-\varepsilon}$ where $\varepsilon \to 0$ as the circuit depth $t\to \infty$.

\end{abstract}

\section{Introduction}
In this paper, we study two problems, {\em Red-Blue Set Cover} and its generalization \textit{Circuit Minimum Monotone Satisfying Assignment}. Red-Blue Set Cover, a natural generalization of Set Cover, was introduced by~\cite{carr2000red}.  Circuit Minimum Monotone Satisfying Assignment, a problem more closely related to Label Cover, was introduced by~\cite{ABMP01} and~\cite{GM97}. 
\begin{definition}\label{def:rb-set-cover}
    In Red-Blue Set Cover, we are given a universe of $(k+n)$ elements $U$ partitioned into disjoint sets of red elements ($R$) of size $n$ and blue elements ($B$) of size $k$, that is $U= R \cup B$ and $R\cap B = \emptyset$, and a collection of sets $\mathcal{S} := \{S_1, \cdots, S_m\}$. The goal is to find a sub-collection of sets $\mathcal{F} \subseteq \mathcal{S}$ such that the union of the sets in $\mathcal{F}$ covers all blue elements while minimizing the number of covered red elements. 
    
    Besides Red-Blue Set Cover, we consider the {\em Partial Red-Blue Set Cover} problem in which we are additionally given a parameter $\hat k$, and the goal is cover at least $\hat k$ blue elements while minimizing the number of covered red elements. 
\end{definition}

\begin{definition}\label{def:MMSA}
The Circuit  Minimum Monotone Satisfying Assignment problem of depth $t$, denoted as MMSA$_t$, is as follows.
We are given a circuit $C$ of depth $t$ over Boolean variables $x_1,\dots, x_n$. Circuit $C$ has AND and OR gates: all gates at even distances from the root (including the output gate at the root) are AND gates; all gates at odd distances are OR gates. The goal is to find a satisfying assignment with the minimum number of variables $x_i$ set to 1 (true). 
\end{definition}
Note that $C$ computes a monotone function and the assignment of all ones is always a feasible solution.
Though the definitions of the problems are quite different, Red-Blue Set Cover and MMSA$_t$ are closely related. Namely, Red-Blue Set Cover is equivalent to MMSA$_3$.\footnote{Also observe that MMSA$_2$ is equivalent to Set Cover.} The correspondence is as follows: variables $x_1,\dots, x_n$ represent red elements; AND gates in the third layer represent sets $S_1,\dots, S_m$;  OR gates in the second layer represent 
blue elements. The gate for a set $S_j$ is connected to OR gates representing blue elements of $S_j$ and variables $x_i$ representing 
red elements of $S_j$. 
It is easy to see that an assignment to $x_1,\dots, x_n$ satisfies the circuit if and only if there exists a sub-family ${\cal F} \subseteq {\cal S}$ that covers 
all the blue elements, and only covers red elements corresponding to variables $x_i$ which are assigned $1$ (but not necessarily all of them).

\paragraph{Background.}
Red-Blue Set Cover and its variants are related to several well-known problems in combinatorial optimization including {\em group Steiner} and {\em directed Steiner} problems, {\em minimum monotone satisfying assignment} and {\em symmetric minimum label cover}.  Arguably, the interest to the general MMSA$_t$ problem is mostly motivated by its connection to complexity and hardness of approximation.

The Red-Blue Set Cover has applications in various settings such as {\em anomaly detection, information retrieval} and notably in {\em learning of disjunctions}~\citep{carr2000red}. 
Learning of disjunctions over $\{0,1\}^m$ is one of the basic problems in the PAC model. In this problem, given an arbitrary distribution $\mathcal{D}$ over $\{0,1\}^m$ and a target function $h^*: \{0,1\}^m\rightarrow \{-1,+1\}$ which denotes the true labels of examples, the goal is to find the best disjunction $f^*: \{0,1\}^m \rightarrow \{-1, +1\}$ with respect to $\mathcal{D}$ and $h^*$ (i.e., $f^*(x)$ computes a disjunction of a subset of coordinates of $x$). The problem of learning disjunctions can be formulated as an instance of the (Partial) Red-Blue Set Cover problem~\citep{awasthi2010improved}: we can think of the positive examples as blue elements (i.e., $h^*(x)=1$) and the negative examples as red elements (i.e., $h^*(x)=-1$). Then, we construct a set $S_i$ corresponding to each coordinate $i$ and the set $S_i$ contains an example $x$ if the $i$-th coordinate of $x$ is equal to $1$. Let $C\subset \{S_1,\cdots, S_m\}$. Then, the disjunction $f_C$ corresponding to $C$, i.e., $f_C(x) := \bigvee_{S_i \in C} x_i$, outputs one on an example $x$ if in the constructed Red-Blue Set Cover instance, the element corresponding to $x$ is covered by sets in $C$. 

As we observe in Section~\ref{sec:reduction-k-union}, Red-Blue Set Cover is also related to the Min $k$-Union problem which is a generalization of Densest $k$-Subgraph~\citep{chlamtac2016densest}. In Min $k$-Union, given a collection of $m$ sets $\mathcal{S}$ and a target number of sets $k$, the goal is to pick $k$ sets from $\mathcal{S}$ whose union has the minimum cardinality. Notably, under a hardness assumption, which is an extension of the ``Dense vs Random" conjecture for Densest $k$-Subgraph to hypergraphs, Min $k$-Union cannot be approximated better than  $\tilde\Omega(m^{1/4-\varepsilon})$~\citep{chlamtavc2017minimizing}. In this paper, we prove a hardness of approximation result for Red-Blue Set Cover by constructing a reduction from Min $k$-Union to Red-Blue Set Cover.

\subsection{Related work}
\citeauthor{carr2000red}~formulated the Red-Blue Set Cover problem and presented a $2\sqrt{m}$-approximation algorithm for the problem when every set contains only one blue element. Later, \cite{elkin2007hardness} showed that it is possible to obtain a $2\sqrt{m\log (n+k)}$ approximation in the general case of the problem. 
This remained the best known upper bound for Red-Blue Set Cover prior to our work. No non-trivial algorithms for MMSA$_t$ were previously known for any $t\geq 4$.

On the hardness side, \cite{dinur2004hardness} showed that  MMSA$_3$ is hard to approximate within a factor of $O(2^{\log^{1-\epsilon} m})$ where $\epsilon = 1/\log\log^c m$ for any constant $c<1/2$, if $P\neq NP$. 
As was observed by \cite{carr2000red}, this implies a factor of $O(2^{\log^{1-\epsilon} m})$ hardness for Red-Blue Set Cover as well. The hardness result holds even for the special case of the problem where every set contains only one blue and two red elements. 


Finally, \cite{CNW16} gave a lower bound on a variant of MMSA in which the circuit depth $t$ is not fixed. Assuming a variant of the Dense-vs-Random conjecture, they showed that for every $\varepsilon > 0$, the problem does not admit an $O(n^{1/2-\varepsilon})$ approximation,  where $n$ is the number of variables, and an $O(N^{1/3-\varepsilon})$ approximation, where $N$ is the total number of gates and variables in the circuit.

\paragraph{Learning of Disjunctions.}
While algorithms for Red-Blue Set Cover return a disjunction with no error on positive examples, i.e., it covers all ``blue'' elements, it is straightforward to make those algorithms work for the case with two-sided error. 
A variant of the problem with a two-sided error is formally defined as {\em Positive–Negative Partial Set Cover}~\citep{miettinen2008positive} where the author showed that a $f(m,n)$-approximation for Red-Blue Set Cover implies a $f(m+n, n)$-approximation for Positive-Negative Partial Set Cover. Our result also holds for Partial Red-Blue Set Cover and a $c$-approximation for Partial Red-Blue Set Cover can be used to output a $c$-approximate solution of Positive-Negative Partial Set Cover. 

\cite{awasthi2010improved} designed an $O(n^{1/3+\alpha})$-approximation for any constant $\alpha>0$. While the proposed algorithm of~\citeauthor{awasthi2010improved} is an {\em agnostic learning} of disjunctions (i.e., the solution is not of form of disjunctions), employing an approximation algorithm of Red-Blue Set Cover, produces a disjunction as its output (i.e., the algorithms for Red-Blue Set Cover are {\em proper} learners). 

\paragraph{Geometric Red-Blue Set Cover.} The problem has been studied extensively in geometric models.~\cite{chan2015geometric} studied the setting in which $R$ and $B$ are sets of points in $\mathbb{R}^2$ and $\mathcal{S}$ is a collection of unit squares. They proved that the problem still remains NP-hard in this restricted instance and presented a PTAS for this problem. \cite{madireddy2022constant} designed an $O(1)$-approximation algorithm for another geometric variant of the problem, in which sets are unit disks. The problem has also been studied in higher dimensions with hyperplanes and axis-parallel objects as the sets~\citep{ashok2017multivariate,madireddy2021geometric,abidha2022red}.

\subsection{Our Results}
In this paper, we present new approximation algorithms for Red-Blue Set Cover, MMSA$_4$, and general MMSA$_{t}$. Additionally, we offer a new conditional hardness of approximation result for Red-Blue Set Cover. We also discuss the integrality gap of a basic SDP relaxation of MMSA$_{t}$ strengthened by Sherali–Adams when $t\to \infty$.

We start by describing our result for Red-Blue Set Cover.
\begin{theorem}\label{thm:main}
    There exists an $O(m^{1/3} \log^{4/3} n \log k)$-approximation algorithm for Red-Blue Set Cover where $m$ is the number of sets, $n$ is the number of red elements, and $k$ is the number of blue elements. 
\end{theorem}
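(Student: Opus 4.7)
My plan is to combine LP threshold rounding with a structural classification of sets by their blue-density, yielding a two-phase algorithm that achieves a three-way trade-off balancing at roughly $m^{1/3}$. I would begin with the natural LP relaxation: variables $x_S\in[0,1]$ per set and $y_r\in[0,1]$ per red element, covering constraints $\sum_{S\ni b} x_S\ge 1$ for every blue element $b$, linkage $y_r\ge x_S$ for every $r\in S$, and objective $\min\sum_r y_r$. Let $\alpha$ be a parameter to be chosen, and call a set $S$ \emph{heavy} if $|S\cap B|\ge k/\alpha$ and \emph{light} otherwise. Since the union of heavy sets in any solution covers at most $k$ blue elements, $\opt$ can contain at most $\alpha$ heavy sets.

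The algorithm would split naturally into two phases whose costs we balance. In the \emph{heavy phase}, I would enumerate (up to powers of two) the red budget $R_H$ that $\opt$ spends on heavy sets, then run a greedy maximum-coverage procedure over heavy sets whose cumulative red cost stays within $O(R_H)$, paying an $O(\log k)$ factor from the standard greedy set-cover analysis to produce a partial cover of blue elements using only heavy sets. In the \emph{light phase}, on the residual instance (light sets only, residual blue elements), I would solve the LP and apply randomized threshold rounding at level $1/\alpha$: each light set $S$ is sampled with probability $\min(1,\alpha x_S)$, with enough independent repetitions to drive the probability of any residual blue element remaining uncovered below $1/n$ by a union bound. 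The expected red cost from this phase is $O(\alpha\cdot\polylog(n))\cdot\opt$.

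Setting $\alpha$ to equate the contributions of the two phases should give $\alpha=m^{1/3}\cdot\polylog$. The precise exponent $\log^{4/3}n$ likely arises from a two-scale refinement of the light-phase rounding, in which residual blue elements are bucketed by the number of light sets covering them and the rounding threshold (and per-bucket budget) is rebalanced across $O(\log\log n)$ outer passes, each shedding a constant fraction of still-uncovered blue. Multiplying the light-phase cost by the $O(\log k)$ factor inherited from the heavy-phase greedy step yields the claimed $O(m^{1/3}\log^{4/3}n\log k)$-approximation.

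The main obstacle is the light phase: plain LP threshold rounding on its own gives only the Elkin--Peleg-style $\tilde O(\sqrt{m})$ bound, so pushing it to $m^{1/3}$ requires using \emph{both} sparsity conditions that are present after the heavy phase---every residual blue element lies in at least $\alpha$ light sets (each with small LP weight), and every light set contains at most $k/\alpha$ blue elements. Exploiting these two conditions simultaneously, either through a delicate union-bound analysis of the randomized rounding or through an LP-hierarchy lift that tracks pairs of sets, is where the bulk of the technical work lies; getting the logarithmic exponents to land at precisely $\log^{4/3}n$ will require a tight accounting of how many rounding passes are needed as the residual blue set shrinks.
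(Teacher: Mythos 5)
There is a genuine gap here, on two levels. First, your structural claim about heavy sets is false: from ``each heavy set has at least $k/\alpha$ blue elements'' and ``the union covers at most $k$ blue elements'' you cannot conclude that an optimal (even minimal) solution contains at most $\alpha$ heavy sets, because heavy sets may overlap almost entirely in their blue elements while each still contributes one new blue element; a minimal solution can contain up to $k$ heavy sets. Moreover, in Red-Blue Set Cover the cost is the number of covered red elements, not the number of sets, so bounding the count of heavy sets in $\opt$ would not control the cost of your heavy phase anyway. Second, and more importantly, you explicitly defer the entire mechanism that would break the $\sqrt{m}$ barrier: the two ``sparsity conditions'' you invoke in the light phase (every residual blue element lies in at least $\alpha$ light sets; this would let threshold rounding at level $1/\alpha$ work) are not established by your construction --- a residual blue element may well lie in a single light set --- and the threshold rounding's red cost is governed by $\sum_{S\ni r}x_S$, which the LP does not bound in terms of $y_r$. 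So the proposal identifies where the difficulty sits but does not contain an argument that resolves it.

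For comparison, the paper's route is quite different and it is worth seeing what actually supplies the $m^{1/3}$. Sets are binned by \emph{red}-degree (not blue-density) into classes $J_\alpha$ with $|\Gamma_{R_\alpha}(j)|\in[r_\alpha,2r_\alpha]$, while deleting at most $n_0=A\cdot\opt$ red elements of highest degree so that every surviving red element lies in at most $2mr_\alpha\log n/n_0$ sets of $J_\alpha$. The key new idea is a back-degree averaging argument: some red element $i_0$ in the optimal solution is contained in optimal sets of $J_\alpha$ that together cover at least $|B_\alpha|r_\alpha/\opt$ blue elements. Restricting the LP to the sets containing $i_0$ makes the number of candidate sets at most $2mr_\alpha\log n/n_0$, and the rounding then splits sets at LP value $\opt/(r_\alpha\hat A)$: the high-value sets contribute at most $r_\alpha\hat A$ red elements (via the budget constraint), and the low-value sets contribute at most $4mr_\alpha\log n/\hat A^2$ in expectation (via the per-set red-degree bound times the degree bound on $i_0$). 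Balancing these two bounds gives $\hat A=(4m\log n)^{1/3}$, and iterating yields the extra $\log k$. Without an analogue of the restriction to a common red element $i_0$ (or some other device that simultaneously caps the number of candidate sets and their per-set red size), your two-phase scheme will not get below $\tilde O(\sqrt{m})$.
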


As we demonstrate later, our algorithm also works for the Partial Red-Blue Set Cover. 
Our approach partitions the instance into subinstances 
in which all sets have a bounded number of red elements, say between $r$ and $2r$, and each red element appears in a bounded number of sets. Utilizing the properties of this partition, we show that we can always find a small collection of sets that preserves the right ratio of red to blue elements in order to make progress towards an $\Tilde{O}(m^{1/3})$-approximation algorithm.\footnote{Here, we abuse the $\Tilde{O}$ notation to hide polylog factors of $n,k$.} 
Then, by applying the algorithm iteratively until all blue elements are covered, we obtain the guarantee of Theorem~\ref{thm:main}. In each iteration, our analysis guarantees the feasibility of a local 
LP relaxation for which a simple randomized rounding obtains the required ratio of blue to red vertices.

Now we describe our results for the MMSA problem.
\begin{theorem}\label{thm:main-MMSA4}
    There exists an $\tilde O(N^{1/3})$-approximation algorithm for MMSA$_4$, where $N$ is the total number of gates and variables in the input instance. 
\end{theorem}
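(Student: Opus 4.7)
The plan is to reduce MMSA$_4$ to one or more instances of Red-Blue Set Cover and then invoke Theorem~\ref{thm:main}. The main difficulty compared to MMSA$_3$ (which \emph{is} Red-Blue Set Cover) is that each ``set'' in MMSA$_4$---a layer-3 AND gate $a$---has an internal Set Cover sub-problem induced by its layer-4 OR children, so we cannot naively identify $a$ with a fixed set of red elements (variables).

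As a first step, I would guess $k^* = \mathrm{OPT}$, paying an $O(\log n)$ factor. Then, for each layer-3 AND gate $a$, I would compute an $O(\log N)$-approximate minimum variable-cover $R_a$ of the layer-4 OR children of $a$ via the greedy Set Cover algorithm, and partition the AND gates into $O(\log N)$ geometric classes by $|R_a|$. For each class, I would construct a Red-Blue Set Cover sub-instance whose blue elements are the layer-2 OR gates of the original circuit and whose sets are the AND gates in that class, each equipped with its pre-computed red set $R_a$ and the blue parents of $a$. Applying Theorem~\ref{thm:main} to each such sub-instance yields an $\tilde O(N^{1/3})$-approximation for that sub-problem.

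The main obstacle is bounding the optimum of each constructed Red-Blue Set Cover sub-instance in terms of $\mathrm{OPT}$ of the original MMSA$_4$ instance: a variable in the optimal MMSA$_4$ solution that is shared by many AND gates may be double-counted when we commit to a greedy cover $R_a$ independently for each $a$. To address this, I would case-split on whether the optimal solution uses ``high-leverage'' variables---those shared across many AND gates or appearing in many layer-4 OR gates---and treat the two regimes differently. Low-leverage variables admit a direct charging argument that bounds the overcount; high-leverage variables are few in number (by a suitable counting argument exploiting that there are only $N$ gates total) and can be handled by enumeration, branching, or a preprocessing step that fixes their values before the reduction.

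Combining the solutions across all sub-instances and branches and taking the best solution over all guesses of $k^*$ should yield a total cost of $\tilde O(N^{1/3}) \cdot k^*$, giving the theorem. The most delicate step will be the sharing analysis: choosing the right leverage threshold and verifying that the enumeration over high-leverage variables does not blow up the approximation ratio. I expect this to rely on carefully exploiting the slack already present in the $\tilde O(\cdot)$ notation of Theorem~\ref{thm:main}, together with the fact that Partial Red-Blue Set Cover is also $\tilde O(m^{1/3})$-approximable (as noted after Theorem~\ref{thm:main}) so that partial progress per class can be iterated.
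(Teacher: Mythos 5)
There is a genuine gap, and it sits precisely at the point you yourself flag as ``the main obstacle.'' Committing to a per-gate cover $R_a$ (greedy or otherwise) \emph{before} deciding which AND gates to select changes the problem: the MMSA$_4$ cost of a family $\mathcal{F}$ of layer-3 gates is the size of a single, \emph{jointly} chosen cover of $\bigcup_{a\in\mathcal{F}}\Gamma_R(a)$, whereas the Red-Blue Set Cover instance you build charges $\bigl|\bigcup_{a\in\mathcal{F}}R_a\bigr|$ for covers computed independently per gate. The optimal solution may reuse the same $\opt$ variables across all of $J_{\opt}$, while the independently computed greedy covers $R_a$ can be essentially disjoint (greedy has no reason to prefer the optimal solution's variables over other valid covers of each gate's children), so the optimum of your constructed instance can exceed $\opt$ by a factor as large as $|J_{\opt}|$, i.e., up to $\Omega(N)$ rather than $\tilde O(N^{1/3})$. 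Your proposed repair does not close this. The ``high-leverage'' variables need not be few: $\sqrt N$ variables each appearing in $\sqrt N$ layer-4 OR gates is consistent with instance size $N$, so enumeration or branching over them is exponential. And in the ``low-leverage'' regime the charging argument still fails, because the difficulty is not that optimal variables are shared, but that the per-gate covers are chosen without coordination, so there is nothing in the optimal solution to charge the union $\bigcup_a R_a$ against.

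The paper's proof confronts exactly this issue and concludes that no preprocessing analogous to the Red-Blue binning exists; instead it coordinates the covers through a partially lifted LP (Appendix~\ref{sec:LP}) whose conditioned variables $\hat w^{(j)}_h = X^{(j)}_h/x_j$ tie each gate's fractional cover of $\Gamma_R(j)$ to a single global fractional cover $w$ of total weight at most $\opt$. The rounding then buckets gates and cover variables by LP value and splits into two cases: if most of the heaviest bucket $J_0$ avoids the ``expensive'' parameter triples, a direct random sample of $J_0$ together with oversampled covers succeeds (Lemma~\ref{lem:mmsa4-case-1}); otherwise a back-degree argument pivoting on a single heavy $h_0\in S$ (the analogue of the pivot $i_0$ in Lemma~\ref{lem:back-deg}) makes progress towards an $\tilde O(m/A^2)$ guarantee (Lemma~\ref{lem:mmsa4-case-2}), and balancing at $A=m^{1/3}$ gives the theorem. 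If you wish to salvage a reduction-style proof, you would at minimum need the red sets of your constructed instance to be defined relative to a joint (fractional) cover of all selected gates simultaneously---which is essentially what the lifted LP supplies.
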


Our algorithm for MMSA$_4$ is inspired by our algorithm for MMSA$_3$, though due to the complexities of the problem, the algorithm is significantly more involved. In particular, there does not seem to be a natural preprocessing step analogous to the partition we apply for Red-Blue Set Cover, and so we need to rely on a higher-moment LP relaxation and a careful LP-based partition which is built into the algorithm.

\begin{theorem}\label{thm:main-MMSAt}
    Let $t \geq 4$. Define $\delta = \frac{1}{3} \cdot 2^{3- \lceil t/2\rceil}$. There exists an $\tilde O(N^{1- \delta})$-approximation algorithm for MMSA$_t$ where $N$ is the total number of gates and variables in the input instance. 
\end{theorem}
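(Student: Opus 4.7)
The plan is to prove the theorem by induction on $t$ in steps of $2$, taking Theorem~\ref{thm:main} ($t=3$) and Theorem~\ref{thm:main-MMSA4} ($t=4$) as base cases, both matching the formula since $\delta_3=\delta_4=2/3$. Using the identity $\delta_t=\delta_{t-2}/2$, the target ratio rearranges as $\alpha_t=\tilde O(\sqrt{N\cdot\alpha_{t-2}})$, so the inductive step must reduce a depth-$t$ instance to depth-$(t-2)$ subinstances while losing only an additional $\sqrt{N}$ factor on top of the inductive guarantee.

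Given a depth-$t$ circuit $C$ of size $N$, label the top three layers as root AND, level-$1$ OR gates $o_1,\dots,o_{k_1}$, and level-$2$ AND gates $g_1,\dots,g_{m_2}$; below each $g_j$ sits a depth-$(t-2)$ subcircuit $C_{g_j}$ which is itself an MMSA$_{t-2}$ instance rooted at $g_j$. After binary-searching for $n^*=\opt(C)$, my plan is to invoke the inductive algorithm $A_{t-2}$ on every $C_{g_j}$ to obtain both an estimate $\hat c_{g_j}\le\alpha_{t-2}\cdot c^*_{g_j}$ of the subcircuit's minimum satisfying cost and a concrete assignment achieving $\hat c_{g_j}$. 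These estimates turn the top three layers into a weighted Red-Blue Set Cover-style problem: level-$1$ OR gates are blue elements, and each $g_j$ is a set covering its OR children with attached ``red'' cost $\hat c_{g_j}$. Bucket the gates into $O(\log N)$ geometric scale classes by $\hat c_g$, and within each class of scale $[r,2r]$ apply the partition-and-LP machinery from the proof of Theorem~\ref{thm:main}: restrict to a bounded-degree subinstance, solve the local LP, and use randomized rounding to extract a small collection of level-$2$ AND gates covering many remaining OR gates with near-optimal blue/red ratio; iterate across classes until every OR gate is covered, then return the union of the rounded top-level choices with the corresponding subcircuit assignments produced by $A_{t-2}$.

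The main obstacle, and the reason a naive reduction cannot achieve the stated ratio, is variable sharing across subcircuits: if the optimal $X^*$ selects level-$2$ gates $G^*$ whose subcircuit solutions reuse the same input variables, then $\sum_{g\in G^*}c^*_g$ may overcount $n^*=|X^*|$ by a factor as large as $|G^*|$, so treating the $\hat c_g$ as if they were disjoint loses too much. Resolving this requires replacing the independent per-gate invocations of $A_{t-2}$ with a single \emph{global} relaxation---a Sherali--Adams-style strengthening of the natural LP, using moment variables at the same order that appears in the integrality-gap discussion for MMSA$_t$---that simultaneously enforces the MMSA$_{t-2}$ constraints inside each $C_g$ and the AND-OR-AND coverage constraints at the top, charging each shared input variable only once. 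Tuning the cheap/expensive cost threshold to $\tau\asymp\sqrt{n^*/\alpha_{t-2}}$ then balances the top-level rounding loss against the inductive loss and yields the stated $\tilde O(\sqrt{N\cdot\alpha_{t-2}}) = \tilde O(N^{1-\delta_t})$ approximation, completing the induction.
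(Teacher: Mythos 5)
Your induction skeleton and recurrence are right: the paper indeed proves $A_{t} = O(\sqrt{N\cdot A_{t-2}}\log N)$ with the $\tilde O(N^{1/3})$ algorithms for depth $3$ and $4$ as base cases, and $\delta_t=\delta_{t-2}/2$ gives the stated exponent. But the inductive step is where the content lives, and yours has a genuine gap. You correctly identify the fatal obstacle with the top-down reduction --- the per-subcircuit costs $\hat c_{g}$ are not additive against $\opt$ because subcircuits share variables, and the approximate solutions returned by independent calls to $A_{t-2}$ on different $C_{g}$ cannot be coordinated to reuse variables the way the optimal assignment does --- but your proposed fix (``a single global Sherali--Adams-style relaxation\dots charging each shared input variable only once,'' plus an unjustified threshold $\tau\asymp\sqrt{n^*/\alpha_{t-2}}$) is a gesture at the difficulty rather than a construction: you give no relaxation, no rounding, and no argument for why the top-level coverage loss and the inductive loss would multiply to only $\tilde O(\sqrt{N\cdot A_{t-2}})$. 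As written, the step does not go through.

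The paper's mechanism is structurally different and sidesteps variable sharing entirely. It peels off the \emph{bottom} two layers $V_{2t+1},V_{2t+2}$ rather than the top three: after discarding every $j\in V_{2t}$ whose neighborhood $\Gamma_{V_{2t+1}}(j)$ has fractional set-cover value in $V_{2t+2}$ exceeding $\opt$, any surviving layer-$2t$ vertex can be ``paid for'' with $O(\opt\log N)$ variables via greedy set cover, so the recursive subproblem is an MMSA$_{2t}$ instance whose objective is simply the \emph{number} of layer-$2t$ vertices chosen --- summing costs over the chosen vertices is then harmless because it is only used as an upper bound when that number is small. The other half of the argument, which your proposal has no analogue of, is the coupling to a weak LP via the Ellipsoid method: the LP thresholding either puts enough mass on $V_{2t}^+$ (handled directly), or the recursive $A_{2t}$-approximation returns a small set (handled by greedy set cover), or its failure certifies a new valid constraint $\sum_{j\in V_{2t}\setminus V_{2t}^+}x_j\geq\lfloor A_{2t+2}/(2(1+\ln N)A_{2t})\rfloor+1$, i.e., the recursive approximation algorithm serves as a separation oracle. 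Balancing the threshold against the violated-constraint bound is what produces $A_{2t+2}=O(\sqrt{N\cdot A_{2t}}\log N)$. To repair your write-up you would either need to supply and analyze the global relaxation you allude to, or switch to this bottom-up, separation-oracle formulation.
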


Our algorithm for general MMSA$_t$ applies a recursion on the depth $t$, with our algorithms for Red-Blue Set Cover and MMSA$_4$ as the basis of the recursion. Each recursive step relies on an initially naive LP relaxation to which we add constraints as calls to the algorithm for smaller depth MMSA reveal new violated constraints.

We complement our upper bound for Red-Blue Set Cover with a hardness of approximation result.
\begin{theorem}\label{thm:hardness}
Assuming the Hypergraph Dense-vs-Random Conjecture, for every $\varepsilon > 0$, no polynomial-time algorithm achieves better than $O(m^{1/4-\varepsilon}/\log^2 k)$ approximation for Red-Blue Set Cover where $m$ is the number of sets and $k$ is the number of blue elements.    
\end{theorem}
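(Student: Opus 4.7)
The plan is to build a nearly approximation-preserving reduction from Min $k$-Union to Red-Blue Set Cover, losing only a $\mathrm{polylog}(k)$ factor in approximation, and combine it with the $\tilde\Omega(m^{1/4-\eps})$ hardness of Min $k$-Union under the Hypergraph Dense-vs-Random Conjecture of~\cite{chlamtavc2017minimizing}. Any $o(m^{1/4-\eps}/\log^2 k)$-approximation for RBSC, fed through the reduction, would then yield an $o(m^{1/4-\eps})$-approximation for Min $k$-Union, a contradiction.

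Min $k$-Union reduces almost trivially to \emph{Partial} Red-Blue Set Cover: given sets $T_1,\ldots,T_m$ over universe $V$, build an RBSC instance with red elements $\{r_v : v\in V\}$ and sets $S_i = \{r_v : v\in T_i\}\cup\{b_i\}$ where each $b_i$ is a private blue element of $S_i$, and set the coverage threshold to $\hat k = k$. Feasible Partial RBSC solutions then correspond bijectively to $k$-subsets of $\{T_i\}$ at matching cost. The real work is to reduce Partial RBSC to (full) RBSC with a $\mathrm{polylog}(k)$-factor loss. A natural attempt is, for each $d$ in a geometric sequence $\{1,2,4,\ldots,|R|\}$, to add an ``escape'' set $E_i^{(d)} = \{b_i\}\cup D_i^{(d)}$ with $d$ fresh red dummy elements, and then run the hypothetical RBSC approximation on each of the $O(\log|R|)$ instances and return the best extracted Min $k$-Union solution. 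For the guess $d\approx v/(m-k)$ (where $v$ is Partial OPT), the Partial optimum embeds into an RBSC solution of cost $\leq 2v$, and applying an $\alpha$-approximation yields $F = F_0 \sqcup (\text{escape sets})$ of cost $|\mathrm{red}(F_0)| + d\cdot|\{i : b_i\notin\bigcup F_0\}| \leq 2\alpha v$; if $|F_0|\geq k$, any $k$ sets in $F_0$ form a Min $k$-Union solution of cost $\leq 2\alpha v$.

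The main obstacle is the case $|F_0|<k$, where the algorithm ``escapes'' many blue elements rather than covering them directly; a naive escape-cost argument only gives $|F_0|\geq k$ for $\alpha$ smaller than a constant, while we need to handle $\alpha$ as large as $m^{1/4-\eps}$. The proof must therefore either replace the escape gadget with a cleverer blue-element construction that structurally forces $|F_0|\geq k$ with only polynomial blowup (for instance via a design-theoretic or randomized sub-sampling of the full $\binom{m}{k-1}$-sized $(m-k+1)$-covering), or augment $F_0$ with additional original sets at bounded red cost via a fractional charging argument exploiting the structure of the escape-set cost. The two $\log k$ factors in the final bound come, respectively, from (i) enumerating $O(\log|R|)=O(\log k)$ geometric guesses for $d$, and (ii) an $O(\log k)$-factor loss inside the covering or augmentation argument. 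Given such a reduction, the Min $k$-Union hardness of~\cite{chlamtavc2017minimizing} under the Hypergraph DvR Conjecture immediately yields Theorem~\ref{thm:hardness}.
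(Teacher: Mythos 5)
There is a genuine gap. Your proposal correctly identifies the overall strategy (a reduction from Min $k$-Union losing only $\mathrm{polylog}(k)$, combined with the $\Omega(m^{1/4-\eps})$ hardness of \cite{chlamtavc2017minimizing}), and you correctly diagnose exactly where your escape-set gadget breaks: when the RBSC approximation returns $|F_0|<k$ original sets and pays for escape sets instead, the cost accounting only forces $|F_0|\geq k$ when $\alpha\leq O(1)$, whereas you need it for $\alpha$ as large as $m^{1/4-\eps}$. But you then defer the resolution to an unspecified ``cleverer blue-element construction'' or ``fractional charging argument,'' and that unresolved step is precisely the content of the theorem. As written, the proposal does not constitute a proof.

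The paper's construction resolves the difficulty by abandoning private blue elements and escape sets entirely. It takes a common pool $B=[k]$ of blue elements and assigns to each set $S_i$ a random multiset $B_i$ of $\ell=\lceil\log_e k\rceil+1$ elements sampled from $[k]$ with replacement, setting $S_i'=S_i\cup B_i$ with $R=X$. Feasibility of an RBSC solution (covering all $k$ blue elements when each set covers at most $\ell$ of them) then \emph{structurally} forces any feasible solution to contain at least $k'=\lfloor k/\ell\rfloor$ sets, independent of the approximation factor --- this is exactly the forcing mechanism your gadget lacks. A coupon-collector argument shows that with probability $\geq 1-1/e$ the optimal $k$ sets for Min $k$-Union already cover all of $B$, so $\mathrm{OPT}_{RB}\leq\mathrm{OPT}_{MU}$. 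One call to an $\alpha$-approximation thus yields $k'\approx k/\log k$ sets of union at most $\alpha\cdot\mathrm{OPT}_{MU}$; iterating on the residual instance (whose optimum never exceeds $\mathrm{OPT}_{MU}$) for $T=O(\ell\log k)=O(\log^2 k)$ rounds accumulates $k$ sets of total cost $O(\log^2 k)\cdot\alpha\cdot\mathrm{OPT}_{MU}$. The two logarithmic factors therefore arise from the per-set blue-degree $\ell=O(\log k)$ and the $O(\log^2 k)$ iterations --- not from guessing a dummy-size parameter $d$ as in your accounting. To complete your proof you would need to supply this (or an equivalent) construction explicitly.
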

To show the hardness, we present a reduction from Min $k$-Union to Red-Blue Set Cover that preserves the approximation up to a factor of $\polylog(k)$. Then, the hardness follows from the standard conjectured hardness of Min $k$-Union~\citep{chlamtavc2017minimizing}. In our reduction, all elements of the given instance of Min $k$-Union are considered as the red elements in the constructed instance for Red-Blue Set Cover and we further complement each set with a sample size of $O(\log k)$ (with replacement) from a ground set of blue elements of size $k$. We prove that this reduction is approximation-preserving up to a factor of $\polylog(k)$.

\paragraph{Organization.}
In Section~\ref{sec:prelim}, we restate Red-Blue Set Cover and introduce some notation.
In Section~\ref{sec:alg-red-blue}, we present our algorithm for Red-Blue Set Cover.
We adapt this algorithm for Partial Red-Blue Set Cover in Appendix~\ref{sec:partial-rbsc}.
Then, in Section~\ref{sec:alg-MMSA-t} we give the algorithm for MMSA$_t$ with $t\geq 5$. This algorithm relies on the algorithm for MMSA$_4$, which we describe later in Section~\ref{sec:alg-MMSA-4}.
We present a reduction from Min $k$-Union to Red-Blue Set Cover, which yields a hardness of approximation result for Red-Blue Set Cover, in Section~\ref{sec:reduction-k-union}. We discuss the hardness of the general MMSA$_{t}$ problem in Section~\ref{sec:lb-mmsa}. There we present a proof sketch that the integrality gap of the basic SDP relaxation strengthened by $\tilde\Omega(\log n)$ rounds of Sherali--Adams is $\Omega(N^{1-O((\log t)/t)})$ (where $N$ is the total number of gates and variables).

\section{Preliminaries}\label{sec:prelim}
To simplify the description and analysis of our approximation algorithm for Red-Blue Set Cover, we restate the problem in graph-theoretic terms. Essentially we restate the problem as MMSA$_3$. Specifically, we think of a Red-Blue Set Cover instance as a tripartite graph $(B,J,R,E)$ in which all edges ($E$) are incident on $J$ and either $B$ or $R$. The vertices in $J$ represent the set indices, and their neighbors in $B$ (resp.\ $R$) represent the blue (resp.\ red) elements in these sets. Thus, our goal is to find a subset of the vertices in $J$ that is a dominating set for $B$ and has a minimum total number of neighbors in $R$. For short, we will denote the cardinality of these different vertex sets by $k=|B|$, $m=|J|$, and $n=|R|$.

Similarly, we think of a MMSA$_4$ instance as a tuple $(B, J, R, S, E)$. Here, $B$, $J$, and $R$ represent gates in the second, third, and fourth layers of the circuit (where layer $i$ consists of the gates at distance $i-1$ from the root), respectively; $S$ represents the variables; $E$ represent edges between gates/variables. Combinatorially, the goal is to obtain a subset of $J$ as above, along with a minimum dominating set in $S$ for the red neighbors (in $R$) of our chosen subset of $J$.


\paragraph{Notation.} We use $\Gamma(\cdot)$ to represent neighborhoods of vertices, and for a vertex set $U$, we use $\Gamma(U)$ to denote the union of neighborhoods of vertices in $U$, that is $\bigcup_{u\in U}\Gamma(u)$. We also consider restricted neighborhoods, which we denote by $\Gamma_T(u):=\Gamma(u)\cap T$ or $\Gamma_T(U)=\Gamma(U)\cap T$. We will refer to the cardinality of such a set, i.e.\ $|\Gamma_T(u)|$ as the \emph{$T$-degree} of $u$.

\begin{remark}
Note that, for every set index $j\in J$, the set $\Gamma(j)$ is simply the set $S_j$ in the set system formulation of the problem, and the set $\Gamma_R(j)$ (resp.\ $\Gamma_B(j)$) is simply the subset of red (resp.\ blue) elements in the set with index $j$. 
Similarly, $\Gamma_J(i)$ consists of indices $j$ representing those sets $S_j$ that contain element $i$, for any $i\in R\cup B$.
\end{remark}

For Red-Blue Set Cover algorithms, we introduce a natural notion of progress:

\begin{definition}
    We say that an algorithm for 
    Red-Blue Set Cover \emph{makes progress towards an $O(\tilde A\cdot\log k)$-approximation} if, given an instance 
    with an optimum solution containing $\opt$ red elements, the algorithm finds a subset $\hat J\subseteq J$ such that $$\frac{|\Gamma_R(\hat J)|}{|\Gamma_B(\hat J)|}\leq \tilde A\cdot\frac{\opt}{|B|}.$$ \label{def:progress}
\end{definition}

It is easy to see that if we have an algorithm which makes progress towards an $\tilde A$-approximation, then we can run this algorithm repeatedly (with decreasing $|B|$ parameter, where initially $|B|=k$) until we cover all blue elements and obtain an $O(\tilde A\cdot\log k)$-approximation.

For brevity, all logarithms are implicitly base 2 unless otherwise specified, that is, we write $\log(\cdot)\equiv\log_2(\cdot)$.

\section{Approximation Algorithm for Red-Blue Set Cover} \label{sec:alg-red-blue}
We begin by excluding a small number of red elements, and binning the sets $J$ into a small number of bins with uniform red-degree. For an $O(A)$-approximation, 
the goal will be to exclude at most $A\cdot\opt$ red elements (we may guess the value of $\opt$ by a simple linear or binary search). This is handled by the following lemma:
\begin{lemma}
    There is a polynomial time algorithm, which, given an input $(B,J,R,E)$ and parameter $n_0$, returns a set of at most $\log n$ pairs $(J_\alpha,R_\alpha)$ with the following properties:
    \begin{itemize}
        \item The sets $J_\alpha\subseteq J$ partition the set $J$.
        \item The sets $R_\alpha\subseteq R$ form a nested collection of sets, and the smallest among them (i.e., their intersection) has 
        cardinality at least $n-n_0$. That is, at most $n_0$ red elements are excluded by any of these sets.
        \item For every $\alpha$ there is some $r_\alpha$ such that every set $j\in J_\alpha$ has $R_\alpha$-degree (or restricted red set size)
        $$|\Gamma_{R_\alpha}(j)|\in[r_\alpha,2r_\alpha],$$
        \item and for every $\alpha$, every red element $i\in R_{\alpha}$ has $J_\alpha$-degree at most 
        (that is, the number of red sets in $J_\alpha$ that contain $i$)
        $$|\Gamma_{J_\alpha}(i)|\leq \frac{2mr_\alpha\log n}{n_0}.$$
    \end{itemize}
    \label{lem:red-size-binning}
\end{lemma}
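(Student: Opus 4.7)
The plan is a binning-and-exclusion argument: coarsely bin $J$ by the scale of the red-degree $|\Gamma_R(j)|$, use a double-counting step to exclude a small set of overrepresented red elements, and then re-bin by the restricted degree so that the window $[r_\alpha, 2r_\alpha]$ holds exactly.

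Let $L := \lceil \log n \rceil$ and $r_\alpha := 2^\alpha$ for $\alpha = 0, 1, \dots, L$, and tentatively set $\widetilde J_\alpha := \{j \in J : |\Gamma_R(j)| \in [r_\alpha, 2r_\alpha)\}$; zero-degree sets play no role and can be placed in any bin. With threshold $T_\alpha := \tfrac{2 m r_\alpha \log n}{n_0}$ and bad set $\widetilde B_\alpha := \{i \in R : |\Gamma_{\widetilde J_\alpha}(i)| > T_\alpha\}$, double-counting the edges between $\widetilde J_\alpha$ and $R$ gives
\[
|\widetilde B_\alpha| \cdot T_\alpha \;\leq\; \sum_{j \in \widetilde J_\alpha} |\Gamma_R(j)| \;\leq\; 2 r_\alpha m,
\]
so $|\widetilde B_\alpha| \leq n_0 / \log n$ and hence $|B| \leq n_0$ where $B := \bigcup_\alpha \widetilde B_\alpha$. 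I would then take $R_\alpha := R \setminus B$ for every $\alpha$---a trivially nested collection (all equal) whose intersection has size at least $n - n_0$.

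Condition~4 is then immediate: $i \in R \setminus B$ implies $i \notin \widetilde B_\alpha$ and hence $|\Gamma_{\widetilde J_\alpha}(i)| \leq T_\alpha$. The upper half of condition~3 is also immediate, $|\Gamma_{R_\alpha}(j)| \leq |\Gamma_R(j)| < 2 r_\alpha$. To enforce the lower half of condition~3 I would re-partition $J$ according to the restricted degree $d(j) := |\Gamma_{R \setminus B}(j)|$, setting $J_\alpha := \{j : d(j) \in [r_\alpha, 2r_\alpha)\}$. This makes condition~3 hold by definition. For condition~4 under the refined partition, each $j \in J_\alpha$ satisfies $|\Gamma_R(j)| \leq d(j) + |B| \leq 2r_\alpha + n_0$, so $j$ lies in some original bin $\widetilde J_\beta$ with $\beta \in \{\alpha, \ldots, \lceil \log(2 r_\alpha + n_0) \rceil\}$, i.e.\ a range of $O(1 + \log(1 + n_0/r_\alpha))$ indices.

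The main obstacle is bounding $|\Gamma_{J_\alpha}(i)|$ against $T_\alpha$ after this re-binning, particularly in the regime $r_\alpha \ll n_0$, where many original bins can contribute to a single $J_\alpha$. When $r_\alpha \gtrsim n_0$, summing the (geometrically increasing) $T_\beta$ over the relevant range gives $|\Gamma_{J_\alpha}(i)| = O(T_\alpha)$, which is absorbed by a mild constant adjustment of the leading factor in $T_\alpha$. When $r_\alpha \ll n_0$ the sum can be substantially larger, and one must argue more carefully: either tighten the initial threshold $T_\alpha$ by an extra constant or $\log$ factor, or replace the ``all-equal'' family by a sequential construction that processes bins from large $r_\alpha$ down to small, removing the bad elements of each bin only from strictly smaller $R_\alpha$'s. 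This produces a proper nested chain $R_L \supseteq \cdots \supseteq R_0$ and preserves the restricted-degree window up to an additive $O(n_0/\log n)$ error, absorbed by a small rescaling of $r_\alpha$ or by a slight widening of the interval; carefully carrying out this chain-based verification would then complete the lemma.
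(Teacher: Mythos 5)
Your first construction (all $R_\alpha$ equal to $R\setminus B$) has a genuine gap that you yourself identify but do not close: once you re-partition $J$ by the \emph{restricted} degree $d(j)=|\Gamma_{R\setminus B}(j)|$ so that condition~3 holds, a set $j$ landing in $J_\alpha$ may originate from a bin $\widetilde J_\beta$ with $r_\beta$ much larger than $r_\alpha$ (its degree dropped because many of its red elements were excluded), and the only control you have on $|\Gamma_{J_\alpha}(i)|$ is the sum of the thresholds $T_\beta$ over all contributing $\beta$, which is dominated by the largest $\beta$ and can exceed $T_\alpha$ by an unbounded factor when $r_\alpha\ll n_0$. Your proposed repair via ``a small rescaling of $r_\alpha$'' to absorb an additive $O(n_0/\log n)$ degree perturbation cannot work either: an additive error of $n_0/\log n$ is not a constant-factor perturbation of $r_\alpha$ precisely in the regime $r_\alpha\ll n_0/\log n$, which is the regime you flag as problematic. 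So as written the proof is incomplete.

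The correct fix --- and the paper's actual proof --- is the sequential variant you gesture at in your last sentence, but made \emph{adaptive}: process scales $r$ from the maximum red-degree downward; at each scale first delete the $n_0/\log n$ red elements of highest current $J$-degree, and only \emph{then} define the bin as $J'=\{j\in J\mid \deg_R(j)\in[r/2,r]\}$ where $\deg_R$ is taken with respect to the already-reduced $R$. Because the bin membership is decided by the restricted degree \emph{after} the deletion, condition~3 holds exactly by definition with no error term to absorb; a set whose degree falls below $r/2$ simply waits and is binned at a later (smaller) scale. Condition~4 then follows from the invariant that at the start of each stage every surviving $j\in J$ has current degree at most $r$, so there are at most $mr$ edges into $R$ and the deleted top $n_0/\log n$ elements include every element of degree above $mr\log n/n_0 = 2mr_\alpha\log n/n_0$ (with $r_\alpha=r/2$). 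Nestedness and the $n-n_0$ bound are immediate since $R$ only shrinks, by $n_0/\log n$ per stage over at most $\log n$ stages. Your double-counting bound $|\widetilde B_\alpha|\le n_0/\log n$ is the same counting step the paper uses; the missing idea is solely that the binning must be interleaved with the deletions rather than fixed in advance by the unrestricted degrees.
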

\begin{proof}
    Consider the following algorithm:
    \begin{itemize}
        \item Let $r$ be the maximum red-degree (i.e., $\max_{j\in J}\deg_R(j)$).
        \item Repeat the following while $J\neq\emptyset$:
        \begin{itemize}
            \item Delete the top $n_0/\log n$ $J$-degree red elements from $R$, along with their incident edges.
            \item After this deletion, let 
            $J'=\{j\in J\mid \deg_R(j)\in[r/2,r]\}$.
            \item If $J'$ is non-empty, add the current pair $(J',R)$ to the list of output pairs (excluding all elements deleted from $R$ so far).
            \item Remove the sets in $J'$ from $J$ (along with incident edges) and let $r\leftarrow r/2$.
        \end{itemize}
    \end{itemize}
    By the decrease in $r$, it is easy to see that this partitions $J$ into at most $\log n$ sets (or more specifically, $\log$ of the initial maximum red set size, $\max_{j\in J}\deg_R(j)$). Also note that at the beginning of each iteration, all red sets have size at most $r$, and so there are at most $mr$ edges to $R$, and the top $n_0/\log n$ $J$-degree red elements will have average degree (and in particular minimum degree) at most $mr/(n_0/\log n)$. Thus, after removing these red elements, all remaining red elements will have $J$-degree (and in particular $J'$-degree) at most the required bound of $\frac{2 m r_\alpha \log n}{n_0}$ where $r_\alpha = r/2$. Finally, since there at most $\log n$ iterations, the total number of red elements removed across all iterations is at most $n_0$.
\end{proof}

Our algorithm works in iterations, where at every iteration, some subset of blue elements is covered and removed from $B$. However, nothing is removed from $J$ or $R$. Thus the above lemma applies throughout the algorithm. Note that for an optimum solution $J_\opt\subseteq J$, for at least one of the subsets $J_\alpha$ in the above partition, the sets in $J_\opt\cap J_\alpha$ must cover at least a $(1/\log n)$-fraction of $B$. Thus, to achieve an $O(A)$ approximation, it suffices to apply the above lemma with parameter 
$n_0=\opt\cdot A$, and repeatedly make progress towards an $A$-approximation within one of the subgraphs induced on $(B,J_\alpha,R_\alpha)$. We will only pay at most another $\opt\cdot A$ in the final analysis by restricting our attention to these subinstances. 

Let us fix some optimum solution $J_\opt\subseteq J$ in advance. For any $\alpha$ in the above partition, let $J^{\alpha}_{\opt}=J_{\alpha}\cap J_{\opt}$ be the collection of sets in $J_\alpha$ that also belong to our optimum solution, and let $B_\alpha=\Gamma_B(J^{\alpha}_{\opt})$ be the blue elements covered by the sets in $J^{\alpha}_{\opt}$. Note that every blue element must belong by the feasibility of $J_\opt$ to at least one $B_\alpha$. 
In this context we can show the following:

\begin{lemma}
    For any $\alpha$ in the partition described in Lemma~\ref{lem:red-size-binning}, there exists a red element $i_0\in R_{\alpha}$ such that its optimum $J_{\alpha}$-restricted neighbors $\Gamma_{J^{\alpha}_{\opt}}(i_0)$ cover at least $|B_\alpha|r_\alpha/\opt$ blue elements. 
    \label{lem:back-deg}
\end{lemma}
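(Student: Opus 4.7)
The plan is a clean double counting argument on pairs $(i,b) \in R_\alpha \times B_\alpha$ that co-occur in a common set from the optimum restricted to layer~$\alpha$. Specifically, I would introduce
$$P := \bigl\{(i,b) \in R_\alpha \times B_\alpha : \exists\, j \in J^\alpha_\opt \text{ with } \{i,b\} \subseteq \Gamma(j)\bigr\}$$
and bound $|P|$ from below and above.

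For the lower bound, the key point is that every $b \in B_\alpha$ has, by definition of $B_\alpha = \Gamma_B(J^\alpha_\opt)$, some witness $j_b \in J^\alpha_\opt$ covering it. Because $j_b \in J_\alpha$, Lemma~\ref{lem:red-size-binning} guarantees $|\Gamma_{R_\alpha}(j_b)| \geq r_\alpha$, and each of these red elements pairs with $b$ to give an element of~$P$. Summing over $b$ yields $|P| \geq |B_\alpha|\, r_\alpha$.

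For the upper bound, reorganize by the red coordinate: $|P| = \sum_{i \in R_\alpha} |\Gamma_B(\Gamma_{J^\alpha_\opt}(i))|$, where I use that $\Gamma_B(\Gamma_{J^\alpha_\opt}(i)) \subseteq \Gamma_B(J^\alpha_\opt) = B_\alpha$ so no intersection with $B_\alpha$ is needed. The summand vanishes for any $i \notin R^\alpha_\opt := \Gamma_{R_\alpha}(J^\alpha_\opt)$, and crucially $|R^\alpha_\opt| \leq |\Gamma_R(J_\opt)| = \opt$ since the optimum covers at most $\opt$ red elements overall. Averaging over the support then produces some $i_0 \in R^\alpha_\opt$ with $|\Gamma_B(\Gamma_{J^\alpha_\opt}(i_0))| \geq |P|/|R^\alpha_\opt| \geq |B_\alpha|\, r_\alpha/\opt$, which is the desired red element.

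I do not foresee any serious obstacle; the only subtlety worth flagging is making sure the denominator in the averaging step is $\opt$ rather than the (potentially much larger) quantity $|R_\alpha|$. This is exactly what forces the restriction to $R^\alpha_\opt$ as the support of the sum, and it is the single place where the constraint ``$J_\opt$ covers at most $\opt$ red elements'' actually enters the proof.
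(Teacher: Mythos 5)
Your proof is correct and is essentially the same double-counting-and-averaging argument as the paper's: lower-bound the incidences between $B_\alpha$ and $\Gamma_{R_\alpha}(J^\alpha_\opt)$ by $|B_\alpha|r_\alpha$ using the minimum $R_\alpha$-degree $r_\alpha$, then average over the at most $\opt$ red elements in the support. The only (cosmetic) difference is that you count distinct pairs $(i,b)$, which makes the distinctness of the covered blue elements automatic, whereas the paper counts paths and retains one witness edge per blue element to guarantee that the paths through $i_0$ end at distinct blue elements; both yield the same bound.
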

\begin{proof}
    Consider the following subgraph of $(B_\alpha,J^\alpha_\opt,E(B_\alpha, J^\alpha_{\opt}))$. For every blue element $\ell\in B_\alpha$, retain exactly one edge to $J^\alpha_{\opt}$. Let $\hat F$ be this set of edges.
    
    Thus the blue elements $B_{\alpha}$ have at least $|B_{\alpha}|r_\alpha$ paths through $\hat F\times E(J^\alpha_{\opt},R_{\alpha})$ to the red neighbors of $J^\alpha_{\opt}$ in $R_\alpha$. Since there are at most $\opt$ such red neighbors, at least one of them, say $i_0\in\Gamma_{R_{\alpha}}(J^\alpha_{\opt})$, must be involved in at least a $1/\opt$ fraction of these paths. That is, at least $|B_{\alpha}|r_{\alpha}/\opt$ paths. Since the $\hat F$-neighborhoods of the vertices in $J^\alpha_{\opt}$ are disjoint (by construction), these paths end in distinct blue elements, thus, at least $|B_{\alpha}|r_{\alpha}/\opt$ elements in $B_{\alpha}$.
\end{proof}

Of course, we cannot know which red element will have this property, but the algorithm can try all elements and run the remainder of the algorithm on each one. Now, given a red element $i_0\in R_\alpha$, our algorithm proceeds as follows: Begin by solving the following LP.

\begin{align}
\text{max\ }&\sum_{\ell\in B}z_\ell \label{LP:blue}\\
&\sum_{i\in R_\alpha}y_i\leq \opt\label{LP:red}\\
&0\leq z_\ell\leq\min\{1,\sum_{j\in \Gamma_{J_\alpha}(i_0)\cap\Gamma_{J_\alpha}(\ell)} x_j\}&\forall \ell\in B\\
& 0\leq x_j\leq y_i\leq 1 &\forall j\in \Gamma_{J_\alpha}(i_0),i\in \Gamma_{R_\alpha}(j)\label{LP:red-expansion}
\end{align}

In the intended solution, $x_j$ is the indicator for the event that $j\in \Gamma_{J^\alpha_{\opt}}(i_0)$, 
$y_i$ is the indicator variable for the event that red vertex $i$ is in the union of red sets indexed by $\Gamma_{J^\alpha_\opt}(i_0)$ (and therefore in the optimum solution), and $z_\ell$ is the indicator variable for the event that the blue vertex $\ell\in B$ is is covered by some set in $\Gamma_{J^\alpha_\opt}(i_0)$. 
This LP is always feasible (say, by setting all variables to $0$), though since there are at most $\log n$ subinstances in the partition, for at least one $\alpha$ we must have $|B_\alpha|\geq |B|/\log n$, and then by Lemma~\ref{lem:back-deg}, there is also some choice of $i_0\in R_\alpha$ for which the objective function satisfies

\begin{equation}
    \sum_{\ell\in B}z_\ell\geq \frac{|B_\alpha| r_\alpha}{\opt}\geq \frac{|B|r_\alpha}{\opt\cdot\log n}.\label{eq:blue-LP-bound}
\end{equation}

The algorithm will choose $\alpha$ and $i_0$ that maximize the rescaled objective function $\sum_{\ell\in B}z_\ell/r_\alpha$, guaranteeing this bound. Finally, at this point, we perform a simple randomized rounding, choosing every set $j\in \Gamma_{J_\alpha}(i_0)$ independently with probability $x_j$. The entire algorithm is described in Algorithm~\ref{alg-main}.

\begin{algorithm}[t]
    \caption{Approximation Algorithm for Blue-Red Set Cover}
    \label{alg-main}
\begin{algorithmic}
\State \textbf{Input:} $B, J, R, E$
    \State{\textbf{guess} $\opt$} \Comment{e.g. using binary search}
    \State{$J_\alg = \emptyset$}\Comment{$J_{\alg}$ stores the current solution}
    \State{find} decomposition $\{(J_\alpha, R_\alpha)\}_\alpha$ as in Lemma~\ref{lem:red-size-binning}, with parameter $n_0=\opt\cdot m^{1/3}\log^{4/3}n\log^2 k.$ 
    \While{$B\neq \emptyset$}
    \For{all $\alpha$ and $i_0\in R_\alpha$}
       \State{Solve LP (\ref{LP:blue})-(\ref{LP:red-expansion})}. Let $\mathrm{LP}(\alpha, i_0)$ be its objective value
    \EndFor
    \State{\textbf{choose} the value of $\alpha$ and $i_0$ which maximizes $\mathrm{LP}(\alpha, i_0) / r_\alpha$} 
    \State{\textbf{let} $x, y, z$ be an optimal solution for this LP}
    \State{use solution $x$ and the method of conditional expectations to find $J^*\subseteq J_\alpha$ 
    \Statex\quad\quad s.t. 
    $|\Gamma_{R_\alpha}(J^*)|/|\Gamma_B(J^*)| \leq O(1)\cdot m^{1/3}\log^{4/3}n \cdot \opt/|B|$}\Comment{see the proof for details}
\State{\textbf{let} $J_\alg = J_\alg \cup J^*$}
\State{\textbf{let} $B = B\setminus \Gamma_B(J^*)$}
\State{\textbf{remove} edges incident to deleted vertices from $E$} 
\EndWhile
\State{\textbf{return} $J_\alg$}
\end{algorithmic}
\end{algorithm}

Now let $J^*\subseteq J_\alpha$ be the collection of sets added by this step in the algorithm. Let us analyze the number of blue elements covered by $J^*$ and the number of red elements added to the solution. First, noting that this LP acts as a max coverage relaxation for blue elements, the expected number of blue elements covered will be at least $$(1-1/e)\frac{|B|r_\alpha}{\opt\cdot\log n},$$
by the standard analysis and the bound~\eqref{eq:blue-LP-bound}.

Now let us bound the number of red elements added. Let $$J_+=\left\{j\in \Gamma_{J_\alpha}(i_0)\;\left|\; x_j\geq\frac{\opt}{r_{\alpha} \hat A}\right.\right\}$$ for a value of $\hat A$ to be determined later. By Constraint~\eqref{LP:red-expansion}, every red neighbor $i\in \Gamma_{R_\alpha}(J_+)$ will also have $y_i\geq \opt/(r_{\alpha}\hat A)$, and so by Constraint~\eqref{LP:red}, there can be at most $r_{\alpha}\hat A$ such neighbors. On the other hand, the expected number of red elements added by the remaining sets $j\in J^*\setminus J_+$ is bounded by
\begin{align*}
    \expec\left[\left|\bigcup_{j\in J^*\setminus J_+}\Gamma_{R_{\alpha}}(j)\right|\right] &\leq 2r_{\alpha}\expec\left[\left|J^*\setminus J_+\right|\right]&\text{by $R_{\alpha}$-degree bounds for $j\in J_{\alpha}$}\\
    &= 2r_{\alpha}\cdot\sum_{j\in\Gamma_{J_{\alpha}}(i_0)\setminus J_+}x_j\\
    &\leq 2r_{\alpha}\cdot\frac{\opt}{r_\alpha \hat A} \left|\Gamma_{J_{\alpha}}(i_0)\setminus J_+\right|&\text{by definition of }J_+\\
    &\leq \frac{2\opt}{\hat A}\cdot\frac{2mr_{\alpha}\log n}{\opt\cdot \hat A}&\text{by $J_{\alpha}$-degree bounds for $i\in R_{\alpha}$}\\
    &=\frac{4mr_{\alpha}\log n}{\hat A^2}
\end{align*}
These two bounds are equal when $r_{\alpha}\hat A=4mr_{\alpha}\log n/\hat A^2$, that is, when $\hat A=(4m\log n)^{1/3}$, giving us a total bound on the expected number of red elements added in this step of
$$\expec[|\Gamma_{R_{\alpha}}(J^*)|]\leq 2r_{\alpha}(4m\log n)^{1/3}\leq \frac{2r_{\alpha}(4m\log n)^{1/3}}{(1-1/e)|B|r_\alpha/(\opt\cdot\log n)}\cdot \expec[|\Gamma_B(J^*)|].$$
Thus, 
$$\expec[|\Gamma_{R_{\alpha}}(J^*)|] - 
\frac{2(4m\log^4 n)^{1/3}\opt}{(1-1/e)|B|} \cdot \expec{|\Gamma_B(J^*)|}
\leq 0.$$
Using the method of conditional expectations, we can derandomize the algorithm and find $J^*$ with a non-empty blue neighbor set such that $$\frac{|\Gamma_{R_{\alpha}}(J^*)|}{|\Gamma_B(J^*)|} \leq 
O(1)\cdot m^{1/3}\log^{4/3}n\cdot\frac{\opt}{|B|}.$$
Thus, we make progress (according to Definition~\ref{def:progress})
towards an approximation guarantee of $\tilde A\cdot k$ for $\tilde A = O\left(m^{1/3}\log^{4/3}n\right)$,
which, as noted, ultimately gives us the same approximation guarantee for Red-Blue Set Cover, proving Theorem~\ref{thm:main}. 

\section{Approximating MMSA\texorpdfstring{$_{t}$}{t} for \texorpdfstring{$t\geq 5$}{t >= 5}}
\label{sec:alg-MMSA-t}
We now turn to the general problem of approximating MMSA$_t$ for arbitrarily large (but fixed) $t$. We will build on our approximation algorithm for 
MMSA$_4$ (described in Section~\ref{sec:alg-MMSA-4}) by recursively calling approximation algorithms for the problem with smaller values of $t$, and using the result of this approximation as a separation oracle in certain cases.

We will denote the total size of our input by $N$, and we will denote our approximation factor for MMSA$_t$ by $A_t$. We will only describe an algorithm for even depth. There is a very slightly simpler but quite similar algorithm for odd depth, however the guarantee we are able to achieve for MMSA$_{2t-1}$ is nearly the same as for MMSA$_{2t}$ (up to an $O(\log N)$ factor). Since MMSA$_{2t-1}$ is essentially a special case of MMSA$_{2t}$, we focus only on even levels. 

\begin{lemma}
  For $t\geq 2$, if MMSA$_{2t}$ can be approximated to within a factor of $A_{2t}$, then we can approximate MMSA$_{2t+2}$ (and thus MMSA$_{2t+1}$) to within $O(\sqrt{N\cdot A_{2t}}\log N)$. 
  \label{lem:mmsa-even}
\end{lemma}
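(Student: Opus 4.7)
The plan is to set up an LP whose variables sit at the top of the circuit, to use the $A_{2t}$-approximation for MMSA$_{2t}$ both as an algorithmic subroutine and as the separation oracle of an Ellipsoid-based iteration, and to round on a carefully chosen LP threshold. The key structural observation is that the subcircuit rooted at any AND gate $j$ in layer $L_3$ (at distance $2$ from the root) of an MMSA$_{2t+2}$ instance has depth $2t$ with an AND root, and moreover any collection of such subcircuits can be glued under a single new AND root and collapsed to a single MMSA$_{2t}$ instance (the new root inherits all $L_4$ children, and the alternation continues correctly down to the variables). Thus the inductive hypothesis applies both to an individual subcircuit and to the joint sub-instance induced by a collection of $L_3$ gates.

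First I would guess $\opt$ by binary search, and then discard every $j \in L_3$ whose MMSA$_{2t}$ approximation applied to the subcircuit at $j$ returns more than $A_{2t}\cdot\opt$ variables, since any such $j$ has true cost exceeding $\opt$ and cannot belong to an optimum. I would then introduce an LP with variables $x_j \in [0,1]$ for each surviving $j \in L_3$ and $y_i \in [0,1]$ for each variable $i \in S$, together with the basic constraints $\sum_i y_i \leq \opt$ and $\sum_{j \in \Gamma_{L_3}(\ell)} x_j \geq 1$ for every $\ell \in L_2$, and further constraints introduced lazily by the separation oracle.

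In each iteration, given an LP solution $(x,y)$, the algorithm forms $J^+ = \{j\in L_3 : x_j \geq 2/A_{2t+2}\}$ for the target $A_{2t+2} = \Theta(\sqrt{N\cdot A_{2t}}\log N)$, combines the subcircuits at $J^+$ into a single MMSA$_{2t}$ instance as above, and applies the $A_{2t}$-approximation to obtain a variable set $U \subseteq S$ satisfying every subcircuit in $J^+$. If $J^+$ already dominates all of $L_2$ and $|U| \leq (A_{2t+2}/2)\cdot\opt$, we output $U$. Otherwise, the algorithm extracts a new valid linear inequality: either a lower bound of the form $\sum_{j \in L_3 \setminus J^+} x_j \geq \lfloor A_{2t+2}/(2A_{2t})\rfloor+1$ (justified because $|U|$ being large implies the true minimum for the combined sub-instance exceeds $A_{2t+2}\opt/(2A_{2t})$, so any integer solution must place that much mass outside $J^+$), or a tightening of some covering constraint at an uncovered $\ell \in L_2$ forcing more LP mass above the threshold. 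A short algebraic calculation (with the same shape as in the analogous odd-depth case) shows that such an inequality is violated by the current LP solution precisely when $A_{2t+2}$ is at least a constant multiple of $\sqrt{N\cdot A_{2t}}$, so the Ellipsoid method terminates in polynomially many iterations. The extra $\log N$ factor is absorbed by a standard $O(\log N)$-iteration outer loop that mops up any residual $L_2$ gates left uncovered by a single invocation.

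The hard part will be calibrating the violated-constraint certificate in the even-depth setting. In the analogous odd-depth reduction, every surviving bottom AND gate has at most $\opt$ variable children after pruning, which directly gives the convenient bound $|U_{\alg}|\cdot\opt$ on the cost of the returned MMSA$_{2t}$ solution. For even depth the subcircuit rooted at an $L_3$ gate can itself require up to $A_{2t}\cdot\opt$ variables even post-pruning, so the rounding threshold, the certificate, and the variable budget $\sum_i y_i \leq \opt$ have to be balanced simultaneously against both the covering obligation at $L_2$ and the inductive approximation factor. This balancing is exactly what produces the claimed $O(\sqrt{N\cdot A_{2t}}\log N)$ ratio.
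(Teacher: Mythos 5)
Your recursion goes in the opposite direction from the paper's, and that is where the argument breaks. The paper peels off the \emph{bottom} two layers: it treats the layer-$2t$ AND gates as the new ``variables'' of an MMSA$_{2t}$ instance, handling their true variable cost separately via the fractional-set-cover preprocessing plus a greedy $(1+\ln N)$-approximate set cover in $V_{2t+2}$ for each chosen gate's $V_{2t+1}$-neighborhood. Crucially, the paper's recursive sub-instance is ``satisfy the root, with $V_{2t}^+$ given for free, using only gates in $V_{2t}\setminus V_{2t}^+$''; since any integral solution restricted to $V_{2t}\setminus V_{2t}^+$ is feasible for that sub-instance, a large sub-instance optimum yields the \emph{valid} cut $\sum_{j\in V_{2t}\setminus V_{2t}^+}x_j \geq \lfloor A_{2t+2}/(2(1+\ln N)A_{2t})\rfloor+1$. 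Your version peels off the \emph{top} two layers and asks the MMSA$_{2t}$ oracle to satisfy \emph{all} subcircuits rooted at $J^+$ simultaneously. A large optimum for that combined sub-instance only says that no cheap assignment activates all of $J^+$; it says nothing about solutions that activate a strict subset of $J^+$ (one gate per $\ell\in L_2$ suffices for feasibility) and nothing outside $J^+$. Hence your proposed cut $\sum_{j\in L_3\setminus J^+}x_j\geq\lfloor A_{2t+2}/(2A_{2t})\rfloor+1$ is not valid for the integral optimum; the only inequality you can legitimately extract from that event is the no-good cut $\sum_{j\in J^+}x_j\leq |J^+|-1$, which cannot drive an Ellipsoid-style iteration to termination in polynomially many rounds. (There is also a units mismatch: the sub-instance optimum is measured in variables of $S$, while the cut bounds $x$-mass over $L_3$.)

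The fallback branch, ``a tightening of some covering constraint at an uncovered $\ell\in L_2$,'' has the same defect: the assertion that some neighbor of $\ell$ has $x_j\geq 2/A_{2t+2}$ is a disjunction over neighbors, not a linear inequality, and no valid linear surrogate exists because you do not know which neighbor the integral optimum activates. Your closing ``$O(\log N)$-iteration mop-up'' also presupposes a progress guarantee (a bounded cost-per-newly-covered-$L_2$-gate in each round) that the proposal never establishes when $J^+$ fails to dominate $L_2$. The correct identification you do make --- that after pruning, a bottom-level unit can be covered at cost roughly $\opt$ up to an approximation/log factor, and that this must be balanced against the threshold --- only becomes usable once the recursion is flipped to the paper's orientation, where the third-from-bottom layer is the objective of the recursive call and each of its surviving gates is certified (fractionally) coverable by $O(\opt\log N)$ variables; then the threshold/cut calculation you sketch does balance to $A_{2t+2}=\Theta(\sqrt{N\cdot A_{2t}}\log N)$.
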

\begin{proof}
  Denote our input as a layered graph with vertex layers $V_1,\ldots,V_{2t+2}$. Ideally, we would like to discard any vertex $j\in V_{2t}$ such that covering its neighbors $\Gamma_{2t+1}(j)$ requires more than $\opt$ vertices in $V_{2t+2}$, however, checking this precisely requires solving Set Cover. Instead, we discard any vertex $j\in V_{2t}$ for which the smallest {\em fractional} set cover\footnote{That is, $\min\bigg\{{\displaystyle\sum_{h\in S}z_h}\;\bigg|\; \forall i\in \Gamma_{V_{2t+1}}(j):{\displaystyle\sum_{h\in\Gamma_{V_{2t+2}}(i)}}z_h\geq 1;\;\forall h\in S:z_h\geq 0\bigg\}$.} in $V_{2t+2}$ of its neighbors $\Gamma_{V_{2t+1}}(j)$ has value greater than $\opt$. Such vertices cannot be included without incurring cost greater than $\opt$ and so we know they do not participate in any optimum solution. We begin with the following basic LP:
  \begin{align}
      &\sum_{h\in V_{2t+2}}w_h\leq\opt\label{LPt:opt}\\
      &y_i\leq \sum_{h\in\Gamma_{2t+2}(i)}w_h&\forall i\in V_{2t+1}\label{LPt:cover}\\
      &x_j\leq y_i &\forall j\in V_{2t},i\in\Gamma_{V_{2t+1}}(j)\label{LPt:neighbors}\\
      &x_j,y_i,w_h\in[0,1]&\forall j\in V_{2t}\forall i\in V_{2t+1}\forall h\in V_{2t+2}
  \end{align}
  Note that, as stated, this LP is trivial. Indeed, in the absence of any additional constraints, the all-zero solution is feasible. However, we will add new violated constraints as the algorithm proceeds. 
  
  Given a solution to the above linear program, our algorithm for MMSA$_{2t+2}$ is as follows:
  \begin{itemize}
      \item Let $V_{2t}^+=\{j\in V_{2t}\mid x_j\geq 2(1+\ln N)/A_{2t+2}\}$. Add these vertices to the solution.
      \item Let $V_{2t+2}^+=\Gamma_{V_{2t+2}}(\Gamma_{V_{2t+1}}(V_{2t}^+))$ be the neighbors-of-neighbors of $V_{2t}^+$.
      \item Apply a greedy $(1+\ln N)$-approximation for Set Cover to obtain a set cover (in $V_{2t+2}$) for $\Gamma_{V_{2t+1}}(V_{2t}^+)$, and add this set cover to the solution as well.
      \item Create an instance of MMSA$_{2t}$ by removing layers $V_{2t+1}$, $V_{2t+2}$, all vertices in $V_{2t}^+$, as well as their neighbors in $V_{2t-1}$, that is, $\Gamma_{V_{2t-1}}(V_{2t}^+)$, as these are already covered by $V_{2t}^+$.
      \item Apply an $A_{2t}$-approximation algorithm for MMSA$_{2t}$ to this instance, and let $U_{\alg}\subseteq V_{2t}\setminus V_{2t}^+$ be the result (or at least the portion belonging to layer $2t$).
      \item If $|U_{\alg}|\leq A_{2t+2}/(2+2\ln N)$, add the vertices in $U_{\alg}$ to the solution, as well as a greedy set cover (in $V_{2t+2}$) for the neighborhood $\Gamma_{V_{2t+1}}(U_{\alg})$.
      \item Otherwise (if $|U_{\alg}|> A_{2t+2}/(2+2\ln N)$), continue the Ellipsoid algorithm using the new violated constraint
      \begin{equation}
          \sum_{j\in V_{2t}\setminus V_{2t}^+}x_j\geq \left\lfloor\frac{A_{2t+2}}{2(1+\ln N) A_{2t}}\right\rfloor+1,
          \label{LPt:violated}
      \end{equation}
      and restart the algorithm (discarding the previous solution) using the new LP solution.
  \end{itemize}
  
  Let us now analyze this algorithm. By~\eqref{LPt:neighbors}, we know that all neighbors $i\in V_{2t+1}$ of $V_{2t}^+$ have LP value $y_i\geq 2(1+\ln N)/A_{2t+2}$. Thus, by~\eqref{LPt:cover}, if for every vertex $h\in V_{2t+2}$ we define $w^+_h=w_h\cdot A_{2t+2}/(2+2\ln N)$, then this is a fractional Set Cover for the $V_{2t+1}$-neighborhood $\Gamma_{V_{2t+1}}(V_{2t}^+)$, and by~\eqref{LPt:opt} it has total fractional value at most $\opt\cdot A_{2t+2}/(2+2\ln N)$. Thus, the greedy Set Cover $(1+\ln N)$-approximation algorithm will cover this neighborhood using at most $\opt\cdot A_{2t+2}/2$ vertices in $V_{2t+2}$. After this step, we may add at most $\opt\cdot A_{2t+2}/2$ additional vertices in $V_{2t+2}$ to our solution to obtain an $A_{2t+2}$-approximation.
  
  Now, suppose our MMSA$_{2t}$ approximation returns a set $U_{\alg}$ of cardinality $|U_{\alg}|\leq A_{2t+2}/(2+2\ln N)$. Clearly, adding to our solution the vertices of $U_{\alg}$ and a $V_{2t+2}$-Set Cover for its neighborhood $\Gamma_{V_{2t+1}}(U_{\alg})$ gives a feasible solution to our MMSA$_{2t+2}$ instance. Moreover, since by our preprocessing, the neighborhood $\Gamma_{V_{2t+1}}(j)$ of every $j\in U_{\alg}$ has a fractional Set Cover in $V_{2t+2}$ of value at most $\opt$, it follows that the union of all these neighborhoods, that is $\Gamma_{V_{2t+1}}(U_{\alg})$, has a fractional set cover in $V_{2t+2}$ of value at most $\opt\cdot|U_{\alg}|\leq\opt\cdot A_{2t+2}/(2+2\ln N)$. And so applying a greedy Set Cover algorithm for the neighborhood $\Gamma_{V_{2t+1}}(U_{\alg})$ contributes at most an additional $\opt\cdot A_{2t+2}/2$ vertices in $V_{2t+2}$ to our solution, as required.
  
  Finally, let us examine the validity of the final step (the separation oracle). If the $A_{2t}$-approximation for MMSA$_{2t}$ was not able to find a solution of size at most $A_{2t+2}/(2+2\ln N)$, then by definition, the value of any solution to our MMSA$_{2t}$ instance must be greater than $A_{2t+2}/((2+2\ln N)A_{2t})$. This is a subinstance of our original instance, so any solution to our original MMSA$_{2t+2}$ instance must also contain more than $A_{2t+2}/((2+2\ln N)A_{2t})$ vertices in $V_{2t}$. Thus, Constraint~\eqref{LPt:violated} is valid for any optimum solution. But when is it violated?
  
  By definition of $V_{2t}^+$, the current total LP value of $V_{2t}\setminus V_{2t}^+$ is at most $2(1+\ln N)N/A_{2t+2}$. And so the current LP solution violates~\eqref{LPt:violated} if
  $$\frac{2(1+\ln N)N}{A_{2t+2}} \leq\frac{A_{2t+2}}{2(1+\ln N)A_{2t}}\qquad\Longleftrightarrow\qquad A_{2t+2}\geq2(1+\ln N)\sqrt{N\cdot A_{2t}}.$$ Thus, we can obtain an approximation guarantee of $A_{2t+2}=O(\sqrt{N\cdot A_{2t}}\log N)$ as claimed.
\end{proof}

Thus, by induction on $t$, with the guarantee 
of Theorem~\ref{thm:main-MMSA4} for MMSA$_4$ as the basis of the induction, and Lemma
~\ref{lem:mmsa-even} for the inductive steps, we get a general approximation algorithm for MMSA$_t$ with approximation ratio $$O\left(N^{1-\frac132^{3-\lceil t/2\rceil}}\cdot(\log N)^{2+O(2^{-t/2})}\right).$$

\section{Reduction from Min \texorpdfstring{$k$}{k}-Union to Red-Blue Set Cover}\label{sec:reduction-k-union}
In this section, we first present a reduction from Min $k$-Union to Red-Blue Set Cover and then prove a hardness result for Red-Blue Set Cover. We start with formally defining the Min $k$-Union problem.

\begin{definition}[Min $k$-Union] 
In the Min $k$-Union problem, we are given a set $X$ of size $n$, a family $\cal S$ of $m$ sets $S_1,\dots, S_m$, and an integer parameter $k\geq 1$. The goal is to choose $k$ sets $S_{i_1},\dots,S_{i_k}$ so as to minimize the cost $\left|\bigcup_{t=1}^k S_{i_t}\right|$.
We will denote the cost of the optimal solution by $\mathrm{OPT}_{MU}(X, {\cal S}, k)$.
\end{definition}
Note that Min $k$-Union resembles the Red-Blue Set Cover Cover problem: in both problems, the goal is to choose some subsets $S_{i_1},\dots, S_{i_r}$ from a given family $\cal S$ so as to minimize the number of elements or red elements in their union. Importantly, however, the feasibility requirements on the chosen subsets $S_{i_1},\dots, S_{i_r}$ are different in  Red-Blue Set Cover Cover and Min $k$-Union; in the former, we require that the chosen sets cover all $k$ blue points but in the latter, we simply require that the number of chosen sets be $k$.
Despite this difference, we show that there is a simple reduction from Min $k$-Union to Red-Blue Set Cover.
\begin{claim}
There is a randomized polynomial-time reduction from Min $k$-Union to Red-Blue Set Cover that given an instance ${\cal I} = (X, {\cal S}, k)$ of Min $k$-Union returns an instance ${\cal I}' = (R, B, \{S_i'\}_{i\in[m]})$ of Red-Blue Set Cover satisfying the following two properties:
\begin{enumerate}
\item If $S'_{j_1},\dots,S'_{j_r}$ is a feasible solution for ${\cal I}'$ then $k'\leq r$ and the cost of solution $S_{j_1},\dots,S_{j_{k'}}$ for Min $k'$-Union where $k' = \lfloor k/\ell \rfloor$ and $\ell=\lceil \log_e k\rceil + 1$ does not exceed that of $S'_{j_1},\dots,S'_{j_r}$ for Red-Blue Set Cover: 
$$\mathrm{cost}_{MU}(S_{j_1},\dots,S_{j_{k'}})\equiv \left|\bigcup_{t=1}^{k'} S_{i_t}\right| \leq \left|\bigcup_{t=1}^{r} (S'_{i_t}\cap R)\right| = \mathrm{cost}_{RB}(S'_{j_1},\dots,S'_{j_{r}}).$$ 
This is true always no matter what random choices the reduction makes.
\item  $\mathrm{OPT}_{RB}(R,B, \{S'_i\}_i) \leq \mathrm{OPT}_{MU}(X, {\cal S}, k)$ with probability at least $1 - 1/e$.
\end{enumerate}
\end{claim}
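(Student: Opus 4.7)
The plan is to execute the randomized reduction hinted at in the introduction: set $R=X$, introduce a fresh universe $B$ of $k$ blue elements, and for each original set $S_i\in\mathcal{S}$ define $S_i'=S_i\cup T_i$, where $T_i$ is obtained by $\ell=\lceil\log_e k\rceil+1$ independent uniform samples from $B$ (with replacement). The construction is clearly polynomial time, and all randomness lives in the $T_i$'s. The two properties will then correspond to a deterministic counting argument and a coupon-collector-style probabilistic argument respectively.

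For property 1, I would note that $|T_i|\leq \ell$ (we sample only $\ell$ times), so any feasible collection $S_{j_1}',\ldots,S_{j_r}'$ must satisfy $r\ell\geq |\bigcup_t T_{j_t}|=k$ just in order to cover all blue elements. Hence $r\geq \lceil k/\ell\rceil\geq k'=\lfloor k/\ell\rfloor$. Because by construction $S_{j_t}'\cap R=S_{j_t}$ (blue elements are the only thing added), the cost of the prefix $S_{j_1},\ldots,S_{j_{k'}}$ as a Min $k'$-Union solution is at most the cost of the full red union $|\bigcup_{t=1}^r (S_{j_t}'\cap R)|$ by monotonicity of unions, yielding the stated inequality.

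For property 2, I would fix an optimal solution $S_{i_1^*},\ldots,S_{i_k^*}$ for Min $k$-Union and argue that its image $S_{i_1^*}',\ldots,S_{i_k^*}'$ is feasible for Red-Blue Set Cover with probability at least $1-1/e$. The red cost of this candidate solution is exactly $|\bigcup_t S_{i_t^*}|=\opt_{MU}$, so only feasibility requires verification. For a fixed $b\in B$, independence of the samples gives that $b$ is missed by all $k\ell$ draws with probability $(1-1/k)^{k\ell}\leq e^{-\ell}$, and a union bound over $B$ yields failure probability at most $k\cdot e^{-\ell}\leq 1/e$ by our choice of $\ell$. On the complementary event, the solution is feasible and witnesses $\opt_{RB}\leq \opt_{MU}$.

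The argument is largely mechanical; the only place that requires genuine calibration is the choice of $\ell$. It must be small enough that $k'=\lfloor k/\ell\rfloor$ is as large as possible, so that the reduction later loses only a $\mathrm{polylog}(k)$ factor when transferring Min $k$-Union hardness to Red-Blue Set Cover, yet large enough that the coupon-collector union bound in property 2 delivers constant success probability. The value $\ell=\lceil\log_e k\rceil+1$ is essentially the smallest choice that simultaneously achieves both, and I expect this balancing act to be the only subtle point in the proof.
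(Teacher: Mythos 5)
Your proposal is correct and follows essentially the same route as the paper: the identical construction ($R=X$, $k$ blue elements, $\ell=\lceil\log_e k\rceil+1$ samples with replacement per set), the same counting argument $r\ell\ge k$ for property 1, and the same $(1-1/k)^{k\ell}\le e^{-\ell}\le 1/(ek)$ plus union bound for property 2. No gaps.
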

\begin{proof}
We define instance $\cal I'$ as follows. Let $R = X$ and $B = [k]$. For every $i\in [m]$, let $R_i = S_i$; $B_i$ be a set of $\ell$ elements randomly sampled from $[k]$ with replacement, and $S_i' = R_i \cup B_i$. All random choices are independent. Now we verify that this reduction satisfies the required properties.

Consider a feasible solution $S_{j_1}',\dots,S_{j_r}'$ for ${\cal I}'$. Since this solution is feasible, $\cup_{t=1}^r B_t = B$. Now $|B_t| \leq \ell$ and thus $r \geq |B|/\ell = k/\ell \geq k'$, as required. Further,
$$\mathrm{cost}_{MU}(S_{j_1},\dots,S_{j_{k'}})\equiv \left|\bigcup_{t=1}^{k'} S_{j_t}\right| = \left|\bigcup_{t=1}^{k'} R_{j_t}\right| \leq \left|\bigcup_{t=1}^{r} R_{j_t}\right| \equiv \mathrm{cost}_{RB}(S'_{j_1},\dots,S'_{j_{r}}).$$
We have verified that item 1 holds.
Now, let $S_{i_1},\dots,S_{i_k}$ be an optimal solution for $\cal I$.
We claim that $S_{i_1}',\dots,S_{i_k}'$ is a feasible solution for ${\cal I}'$ with probability at least $1- 1/e$. To verify this claim, we need to lower bound the probability that $B_{i_1}\cup \dots\cup B_{i_k} = B$. Indeed, set $B_{i_1}\cup \dots\cup B_{i_k}$ consists of $k\ell$ elements sampled from $B$ with replacement. The probability that a given element $b\in B$ is not in 
$B_{i_1}\cup \dots\cup B_{i_k}$ is at most $(1-1/k)^{k\ell}\leq e^{-\ell} \leq \frac{1}{ek}$. By the union bound, the probability that there is some $b\in B \setminus (B_{i_1}\cup \dots\cup B_{i_k})$ is at most $k \times \frac{1}{ek} = \frac{1}{e}$. Thus, there is no such $b$ with probability at least $1 - 1/e$ and consequently $B_{i_1}\cup \dots\cup B_{i_k} = B$.  In that case, the cost of solution $S'_{i_1}, \dots, S'_{i_k}$ for Red-Blue Set Cover equals
$\left|\bigcup_{i=1}^k R'_{i_t}\right| = \left|\bigcup_{i=1}^k S_{i_t}\right|$, the cost of the optimal solution for Min $k$-Union.
\end{proof}

\begin{corollary} Assume that there is an $\alpha(m,n)$ approximation algorithm $\cal A$ for Red-Blue Set Cover (where $\alpha$ is a non-decreasing function of $m$ and $n$).Then there exists a randomized polynomial-time algorithm $\cal B$ for Min $k$-Union that finds $k'$ sets $S_{i_1},\dots, S_{i_{k'}}$ such that
$$\left| \bigcup_{t=1}^{k'} S_{i_t}\right|\leq \alpha(m,n) \mathrm{OPT}_{MU}(X, {\cal S}, k).$$ The failure probability is at most $1/n$.
\end{corollary}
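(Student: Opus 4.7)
The plan is to use the claim as a black-box reduction and amplify its success probability by independent repetition. Fix the input Min $k$-Union instance $\mathcal{I}=(X,\mathcal{S},k)$ and set $\ell=\lceil\log_e k\rceil+1$, $k'=\lfloor k/\ell\rfloor$. Algorithm $\mathcal{B}$ will run $T=\Theta(\log n)$ independent trials of the following basic routine and then output the best candidate found.

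The basic routine is: (i) invoke the randomized reduction from the claim to build a Red-Blue Set Cover instance $\mathcal{I}'=(R,B,\{S'_i\})$; (ii) run the approximation algorithm $\mathcal{A}$ on $\mathcal{I}'$ to obtain a feasible collection $S'_{j_1},\ldots,S'_{j_r}$; (iii) discard all but the first $k'$ indices and output the corresponding Min $k'$-Union candidate $S_{j_1},\ldots,S_{j_{k'}}$. Item~1 of the claim guarantees, with no assumption on the random choices, that $r\geq k'$ (so the truncation is well defined) and that
\[
\mathrm{cost}_{MU}(S_{j_1},\ldots,S_{j_{k'}})\;\leq\;\mathrm{cost}_{RB}(S'_{j_1},\ldots,S'_{j_r}).
\]
Hence every trial produces a feasible $k'$-set candidate for Min $k'$-Union; $\mathcal{B}$ outputs the candidate of minimum union size across the $T$ trials.

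For the approximation bound, item~2 of the claim tells us that in a single trial, with probability at least $1-1/e$, we have $\mathrm{OPT}_{RB}(\mathcal{I}')\leq \mathrm{OPT}_{MU}(\mathcal{I})$. On this event, the $\alpha(m,n)$-approximation guarantee of $\mathcal{A}$ yields
\[
\mathrm{cost}_{RB}(S'_{j_1},\ldots,S'_{j_r})\;\leq\;\alpha(m,n)\,\mathrm{OPT}_{RB}(\mathcal{I}')\;\leq\;\alpha(m,n)\,\mathrm{OPT}_{MU}(\mathcal{I}),
\]
and combining with the deterministic inequality above, the truncated candidate satisfies the desired bound. If any one of the $T$ independent trials achieves this event, then the minimum-cost candidate inherits the bound (all candidates are feasible, so taking the minimum can only help). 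The probability that every trial fails is at most $(1/e)^T$, which is at most $1/n$ by choosing $T=\lceil\ln n\rceil$, giving the claimed failure probability.

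The only subtlety I anticipate is checking that the ``truncation'' step in (iii) is harmless: item~1 of the claim is stated as bounding the cost of the first $k'$ sets by the full Red-Blue cost, and this holds deterministically, so $\mathcal{B}$ never outputs an infeasible candidate even in trials where the random sampling of blue elements was unlucky. Thus the only randomness that matters for correctness is the probability that at least one trial realizes $\mathrm{OPT}_{RB}\leq \mathrm{OPT}_{MU}$, which is exactly what the union-bound amplification controls.
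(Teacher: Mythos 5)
Your proposal is correct and matches the paper's proof: apply the reduction, run $\mathcal{A}$, truncate to $k'$ sets using item~1 of the claim, and amplify by repeating $\lceil\ln n\rceil$ independent times and keeping the best candidate, so the failure probability drops to $(1/e)^{\lceil\ln n\rceil}\leq 1/n$. Your write-up is just a more detailed version of the same argument.
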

\begin{proof}
We simply apply the reduction to the input instance of Min $k$-Union and then solve the obtained instance of Red-Blue Set Cover using algorithm $\cal A$. To make sure that the failure probability is at most $1/n$, we repeat this procedure $\lceil\log_e n\rceil$ times and output the best of the solutions we found.
\end{proof}
\begin{theorem}
Assume that there is an $\alpha(m,n)$ approximation algorithm $\cal A$ for Red-Blue Set Cover (where $\alpha$ is a non-decreasing function of $m$ and $n$).
Then there exists an $O(\log^2 k) \alpha(m,n)$ approximation algorithm for Min $k$-Union.
\end{theorem}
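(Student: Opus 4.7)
The plan is to iterate the preceding corollary, building a solution of $k$ sets incrementally over $T=O(\log^2 k)$ rounds. After $t$ rounds I will have selected a family $F_t\subseteq {\cal S}$ of fewer than $k$ sets and still need $k_r := k - |F_t|$ additional sets; I then invoke the corollary on the Min $k_r$-Union sub-instance defined by the collection ${\cal S}\setminus F_t$, obtain $\lfloor k_r/\ell_r\rfloor$ new sets (where $\ell_r = \lceil \log_e k_r\rceil + 1$), add them to $F_t$, and repeat. The output is simply $\bigcup_{S\in F_T} S$.

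First I would show that every sub-instance inherits the OPT bound of the original, i.e.\ $\mathrm{OPT}_{MU}({\cal S}\setminus F_t, k_r) \leq \mathrm{OPT}_{MU}(X,{\cal S},k)$. If $S^*$ is an original optimum of size $k$, then at most $|F_t| = k - k_r$ of its sets lie in $F_t$, so at least $k_r$ members of $S^*$ remain in ${\cal S}\setminus F_t$; any $k_r$ of them form a feasible sub-instance solution whose union is a subset of $\bigcup S^*$, hence of cost at most $\mathrm{OPT}_{MU}$. Since $\alpha$ is non-decreasing in $m$ and $n$, the corollary applied in round $t$ therefore outputs $\lfloor k_r/\ell_r\rfloor$ new sets whose union has cardinality at most $\alpha(m,n)\cdot \mathrm{OPT}_{MU}$. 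Summing over the $T$ rounds and using subadditivity of set cardinality (the cardinality of a union is at most the sum of cardinalities of the pieces), one gets
\[
\Bigl|\bigcup_{S\in F_T} S\Bigr| \leq \sum_{t=1}^{T} \alpha(m,n)\cdot \mathrm{OPT}_{MU} = T\cdot \alpha(m,n)\cdot \mathrm{OPT}_{MU}.
\]

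The remaining task is to bound $T$. In each round $k_r$ drops to $k_r - \lfloor k_r/\ell_r\rfloor$. For $k_r \geq \ell := \lceil \log_e k\rceil + 1$ this is a multiplicative contraction by a factor of at most $1-\Theta(1/\ell)$, and for $1\leq k_r < \ell$ we still have $\lfloor k_r/\ell_r\rfloor \geq 1$ so at least one set is added per round. A standard geometric-decay computation (together with at most $\ell$ additional rounds to clean up the small-$k_r$ regime) yields $T = O(\ell\log k) = O(\log^2 k)$, giving the claimed $O(\log^2 k)\,\alpha(m,n)$ approximation.

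The only delicate step is the sub-instance OPT-preservation argument; the rest is cost bookkeeping. The minor nuisance is failure probability: each call to the corollary fails with probability at most $1/n$, so a union bound gives total failure probability $O((\log^2 k)/n)$, and repeating each round $O(\log n)$ times and keeping the cheapest output drives this to $o(1)$ while only affecting runtime, not the approximation ratio.
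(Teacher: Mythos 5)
Your proposal is correct and follows essentially the same route as the paper: iterate the corollary on residual sub-instances, observe that each sub-instance's optimum is still bounded by $\mathrm{OPT}_{MU}(X,{\cal S},k)$ because enough sets of the original optimum survive, sum the per-round costs, and bound the number of rounds by $O(\ell\log k)=O(\log^2 k)$ via the multiplicative decay of the residual parameter. Your added care with the per-round $\ell_r$ and the amplification of the success probability are fine refinements of the same argument.
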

\begin{proof}
Our algorithm iteratively uses algorithm $\cal B$ from the corollary to find an approximate solution. First, it runs $\cal B$ on the input instance and gets $k_1 = k'$ sets. Then it removes the sets it found from the instance and reduces the parameter $k$ to $k -k_1$. Then the algorithm runs $\cal B$ on the obtained instance and gets $k_2$ sets. It again removes the obtained sets and reduces $k$ to $k - k_1 - k_2$ (here $k$ is the original value of $k$). It repeats these steps over and over until it finds $k$ sets in total. That is, $k_1 + \dots + k_T = k$ where $T$ is the number of iterations the algorithm performs. 

Observe that each of the instances of Min $k$-Union constructed in this process has cost at most $\mathrm{OPT}_{MU}(X, {\cal S}, k)$. Indeed, consider the subinstance ${\cal I}_{t+1}$ we solve at iteration $t+1$. Consider $k$ sets that form an optimal solution for $(X, {\cal S}, k)$. At most $k_1 +\dots + k_t$ of them have been removed from ${\cal I}_{t+1}$ and thus at least $k -k_1 -\dots - k_t$ are still present in ${\cal I}_{t+1}$. Let us arbitrarily choose $k -k_1 -\dots - k_t$  sets among them. The chosen sets form a feasible solution for ${\cal I}_{t+1}$ of cost at most 
$\mathrm{OPT}_{MU}(X, {\cal S}, k)$.

Thus, the cost of a partial solution we find at each iteration $t$ is at most $\alpha(m,n) \cdot\mathrm{OPT}_{MU}(X, {\cal S}, k)$. The total cost is at most $\alpha(m,n) \cdot T \cdot\mathrm{OPT}_{MU}(X, {\cal S}, k)$. It remains to show that $T\leq O(\log^2 k)$. We observe that the value of $k$ reduces by a factor at least $1 - 1/\ell$ in each iteration, thus after $t$ iterations it is at most $(1-1/\ell)^t k$. We conclude that the total number of iterations $T$ is at most $O(\ell \log k) = O(\log^2 k)$, as desired.
\end{proof}
Now we obtain a conditional hardness result for Reb-Blue Set Cover from a corollary from the Hypergraph Dense-vs-Random Conjecture.
\begin{corollary}[\cite{chlamtavc2017minimizing}]
Assuming the Hypergraph Dense-vs-Random Conjecture, for every $\varepsilon > 0$, no polynomial-time algorithm for Min $k$-Union achieves better than $\Omega(m^{1/4-\varepsilon})$ approximation. 
\end{corollary}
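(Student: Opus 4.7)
The plan is to derive this corollary directly from the Hypergraph Dense-vs-Random Conjecture via a planted-solution distinguishing argument, mirroring the standard Densest $k$-Subgraph lower bound. The conjecture posits that for appropriately tuned parameters $(n,m,d,k,p)$ it is computationally hard to distinguish between two distributions on Min $k$-Union instances: the \emph{random} distribution $\mathcal{R}$, in which $m$ sets of size $d$ are drawn independently and uniformly from a ground set $X$ of size $n$; and the \emph{planted} distribution $\mathcal{P}$, in which one additionally fixes a subset $P\subseteq X$ of size $p$ and replaces $k$ sets by independent uniform $d$-subsets of $P$. In $\mathcal{P}$ the planted $k$ sets form a feasible solution of cost at most $p$.

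First I would specify the parameter regime ensuring that, in $\mathcal{R}$, every choice of $k$ sets has union of cardinality $\Omega(p\cdot m^{1/4-\varepsilon})$ with high probability. This is the usual concentration plus union bound argument: for any fixed family of $k$ sets, the expected union size concentrates around its mean by a Chernoff/second moment calculation, and one takes a union bound over all $\binom{m}{k}$ candidate families. The parameters are chosen so that this lower bound still holds after the union bound, which forces the exponent $1/4$, exactly as in the DkS analysis. In parallel, the planted instance satisfies $\mathrm{OPT}_{MU}\le p$ by construction.

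Second, I would argue that any polynomial-time algorithm $\mathcal{A}$ with approximation ratio $o(m^{1/4-\varepsilon})$ yields a distinguisher: in $\mathcal{P}$, $\mathcal{A}$ returns a solution of value $o(p\cdot m^{1/4-\varepsilon})$; in $\mathcal{R}$, by the above concentration, \emph{every} solution (hence $\mathcal{A}$'s) has value $\Omega(p\cdot m^{1/4-\varepsilon})$. Thresholding $\mathcal{A}$'s output therefore separates $\mathcal{R}$ from $\mathcal{P}$ in polynomial time, contradicting the conjecture.

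The main obstacle is the concentration step, namely proving the random-case lower bound on the minimum union of $k$ sets. The delicate part is that sets may overlap in complicated ways, so naive independence arguments do not directly apply; one must bound the probability that a chosen family concentrates on an atypically small subset of $X$, then union-bound over all $\binom{m}{k}$ families without losing the gap. This is precisely the technical core of the Chlamtac--Dinitz--Makarychev lower bound for Min $k$-Union, and the exponent $1/4-\varepsilon$ is tight against this particular union-bound versus concentration tradeoff; with that lemma in hand, the reduction to a distinguisher is immediate and completes the proof.
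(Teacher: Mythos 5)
The paper does not prove this corollary at all: it is imported verbatim as a black box from \cite{chlamtavc2017minimizing}, so there is no in-paper argument to compare against. Your sketch correctly reconstructs the standard planted-versus-random distinguishing argument underlying that citation --- in the random distribution every choice of $k$ sets has a large union (certified by a first-moment/union-bound computation over all $\binom{m}{k}$ families in the conjectured-hard parameter regime), in the planted distribution $\mathrm{OPT}_{MU}\leq p$, and any $o(m^{1/4-\varepsilon})$-approximation would therefore separate the two distributions --- which is exactly how the cited work derives the bound, with the one technical core (the random-case concentration and the optimization of parameters yielding the exponent $1/4$) correctly identified and deferred to that reference.
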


Theorem~\ref{thm:hardness} immediately follows.

\section{Approximation Algorithm for MMSA\texorpdfstring{$_{4}$}{4}}
\label{sec:alg-MMSA-4}
Consider an instance $(B,J,R,S,E)$ of MMSA$_4$. As we did for Red-Blue Set Cover, we will focus on making progress towards a good approximation.

\begin{definition}
    We say that an algorithm for 
    MMSA$_4$ \emph{makes progress towards an $ O(A)$-approximation} if, given an instance 
    with an optimum solution containing at most $\opt$ vertices in $S$, the algorithm finds a subset $\hat J\subseteq J$ and a subset $\hat S\subseteq S$ such that $\Gamma_R(\hat J)\subseteq\Gamma_R(\hat S)$ (a valid partial solution) and  $$\frac{|\hat S|}{|\Gamma_B(\hat J)|}\leq A\cdot\frac{\opt}{|B|}.$$ 
\end{definition}

As before, it is easy to see that given such an algorithm, we can run such an algorithm repeatedly to obtain an actual $\tilde O(A)$ approximation for MMSA$_4$. In fact, in the rest of this section we will {\em only} discuss an algorithm which makes progress towards an $O(A)$-approximation.

For the sake of formulating an LP relaxation with a high degree of uniformity, we will actually focus on a partial solution which covers a large fraction of blue elements in a uniform manner:

\begin{lemma}
  For any cover $J_0\subseteq J$ of the blue elements $B$, there exist subsets $J'\subseteq J_0$ and $B'\subseteq B$ and a parameter $\Delta>0$ with the following properties:
  \begin{itemize}
    \item Every vertex $j\in J'$ has $B'$-degree in the range $\deg_{B'}(j)\in[\Delta,2\Delta]$.
    \item Every blue element $\ell\in B'$ has at least one neighbor in $J'_{\Delta}$ and at most $2e\ln(2k)$ neighbors.
    \item We have the cardinality bound $|B'|\geq 
    |B|/(\log k\log m)$.
\end{itemize}
  \label{lem:mmsa4-partial-solution}
\end{lemma}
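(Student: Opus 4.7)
The proof is a two-level dyadic bucketing with a sub-sampling step in the middle: bucketing $B$ by $J_0$-degree loses $\log m$, bucketing the $J$-side by $\tilde B$-degree loses $\log k$, and the sub-sampling in between normalizes blue $J$-degrees to the $O(\log k)$ range demanded by the lemma. Concretely, since $J_0$ covers $B$, every $\ell\in B$ has $|\Gamma_{J_0}(\ell)|\geq 1$; I partition $B$ into dyadic buckets $B^{(i)}=\{\ell:|\Gamma_{J_0}(\ell)|\in[2^{i-1},2^i)\}$ for $i=1,\dots,\lceil\log m\rceil$, and let $B_1$ be the largest, so $|B_1|\geq |B|/\lceil\log m\rceil$ and every $\ell\in B_1$ has $|\Gamma_{J_0}(\ell)|\in[D,2D)$ for some fixed $D$.

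\textbf{Step 2 (sub-sample $J_0$).} Set $p=\min\{1,\ln(2k)/D\}$ and form $\tilde J_0\subseteq J_0$ by including each $j$ independently with probability $p$. For each $\ell\in B_1$, $|\Gamma_{\tilde J_0}(\ell)|\sim\mathrm{Bin}(|\Gamma_{J_0}(\ell)|,p)$ has mean $\mu\in[\ln(2k),2\ln(2k))$, and Chernoff gives both $\Pr[|\Gamma_{\tilde J_0}(\ell)|=0]\leq e^{-\mu}\leq 1/(2k)$ and $\Pr[|\Gamma_{\tilde J_0}(\ell)|>2e\ln(2k)]\leq e^{-\mu}\leq 1/(2k)$ (the upper tail uses $(1+\delta)\ln(1+\delta)-\delta\geq 1$ for $1+\delta\geq e$). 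Hence $\tilde B:=\{\ell\in B_1:|\Gamma_{\tilde J_0}(\ell)|\in[1,2e\ln(2k)]\}$ satisfies $\E[|\tilde B|]\geq (1-1/k)|B_1|$, and I derandomize (e.g., method of conditional expectations) to fix a realization of $\tilde J_0$ with $|\tilde B|\geq|B_1|/2$. When $p=1$ no sampling happens and $\tilde B=B_1$ outright.

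\textbf{Step 3 (bucket $\tilde J_0$ and read off $J',B',\Delta$).} Partition the vertices $j\in\tilde J_0$ with $|\Gamma_{\tilde B}(j)|\geq 1$ into the $\lceil\log k\rceil+1$ dyadic buckets $\tilde J^{(h)}=\{j:|\Gamma_{\tilde B}(j)|\in[2^{h-1},2^h)\}$. Every $\ell\in\tilde B$ has at least one neighbor in $\tilde J_0$, so $\tilde B\subseteq\bigcup_h\Gamma_B(\tilde J^{(h)})$, and by pigeonhole some bucket $J':=\tilde J^{(h^*)}$ covers $|\Gamma_B(J')\cap\tilde B|\geq|\tilde B|/(\lceil\log k\rceil+1)$. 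Set $\Delta=2^{h^*-1}$ and $B'=\Gamma_B(J')\cap\tilde B$. Then for $j\in J'$, $\Gamma(j)\cap\tilde B\subseteq\Gamma_B(J')\cap\tilde B=B'$, so $|\Gamma_{B'}(j)|=|\Gamma_{\tilde B}(j)|\in[\Delta,2\Delta]$; for $\ell\in B'$, $1\leq|\Gamma_{J'}(\ell)|\leq|\Gamma_{\tilde J_0}(\ell)|\leq 2e\ln(2k)$; and combining the size bounds from Steps 1--3 gives $|B'|=\Omega\!\left(|B|/(\log k\log m)\right)$.

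\textbf{Main obstacle.} The delicate step is calibrating $p$ in Step 2. The rate $p=\ln(2k)/D$ is chosen so that the post-sampling Binomial mean lies in $[\ln(2k),2\ln(2k))$, which is exactly the sweet spot where both Chernoff tails --- the lower tail that an $\ell$ loses all of its $J_0$-neighbors, and the upper tail that it retains more than $2e\ln(2k)$ of them --- are each bounded by $e^{-\mu}\leq 1/(2k)$. Any smaller $p$ and we lose a constant fraction of $B_1$ from the lower tail; any larger $p$ and the degree bound required of $B'$ is violated. Once $p$ is set correctly, the remainder of the proof is routine dyadic pigeonholing and the observation that $j\in J'$ implies $\Gamma(j)\cap\tilde B\subseteq B'$, which makes the $[\Delta,2\Delta]$ degree window for $J'$ automatic.
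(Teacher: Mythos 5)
Your proof is correct and follows essentially the same route as the paper's: dyadic bucketing of $B$ by $J_0$-degree (losing $\log m$), sub-sampling $J_0$ at rate $\ln(2k)/D$ to normalize blue degrees into $[1,2e\ln(2k)]$, then dyadic bucketing of the sampled set by blue degree (losing $\log k$). Your Step 2 is in fact slightly more careful than the paper's (which asserts that with positive probability \emph{every} $\ell\in B_D$ survives both tails, a union bound that is borderline at failure probability $k\cdot\frac{1}{k}$); by keeping only the survivors $\tilde B$ you avoid that issue at the cost of a factor $2$ in the final cardinality bound, which is why you land at $\Omega(|B|/(\log k\log m))$ rather than the stated constant-free bound --- an immaterial difference for how the lemma is used.
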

\begin{proof}
  Partitioning the blue elements by their $J_0$-degrees, there is some $D$ such that the set $B_D=\{\ell\in B\mid \deg_{J_0}(\ell)\in[D,2D]\}$ has cardinality $|B_D|\geq |B|/\log m$. If we sampled every $j\in J_0$ independently with probability $\ln(2k)/D$, we would get some subset $J'\subseteq J_0$ and with positive probability, every $\ell\in B_D$ would still be covered by $J'$, and no $\ell\in B_D$ would have more than $2e\ln(2k)$ neighbors in $J'$. Finally, if we binned the vertices in $J'$ by their $B_D$-degrees, then some bin $J'_{\Delta}=\{j\in J'\mid \deg_{B_D}(j)\in[\Delta,2\Delta]\}$ would cover at least a $1/\log k$ fraction of elements in $B_D$, call them $B'$, and every element $\ell\in B'$ would still have (at least one and) at most $2e\ln(2k)$ neighbors in $J'_{\Delta}$.
\end{proof}

\paragraph{Simplifying assumptions.} We can make the following assumptions which will be useful in the analysis of our algorithm. First, we may assume that for every $j\in J$, the red neighborhood $\Gamma_R(j)$ has a fractional set cover in $S$ of weight at most $\opt$. That is, the standard LP relaxation for covering $\Gamma_R(j)$ using $S$ has optimum value at most $\opt$. If we have guessed the correct value of $\opt$, then we know that no $j\in J$ whose red neighborhood cannot be covered with cost $\opt$ can participate in an optimum solution, and can therefore be discarded. We may also assume that for some $\eps>0$, the value $\Delta$ above is at most $k/m^{\eps}$. The reason is that otherwise, the blue elements $B'$ can be covered with at most $\tilde O(m^{\eps})$ vertices in $J$, and we know that for each of these, its red neighborhood can be covered by a set of size $\tilde O(\opt)$ in $S$, and thus we can make progress towards an $\tilde O(m^{\eps})$ approximation.

Guessing the value of $\Delta\in[k]$ above and the value of the optimum $\opt$, we can write the following LP relaxation:
\begin{align}
&\sum_{h\in S}w_h\leq \opt\\
&\sum_{\ell\in B}z_{\ell}\geq |B|/(\log k\log m)\label{LP4:B-weight}\\
&z_{\ell}\leq \sum_{j\in\Gamma_{J}(\ell)}x_j^{\ell}\leq 2e\ln(2k)z_{\ell}&\forall \ell\in B\label{LP4:ell-deg}\\
&\Delta x_j\leq\sum_{\ell\in\Gamma_B(j)}x_j^{\ell}\leq 2\Delta x_j&\forall j\in J \label{LP4:j-deg}\\
&0\leq x_j^{\ell}\leq x_j,z_{\ell}\leq 1&\forall \ell\in B\forall j\in J \label{LP4:j-ell}\\
&\sum_{h\in\Gamma_S(i)}w_h\geq y_i &\forall i\in R\label{LP4:i-cover}\\
&0\leq x_j\leq y_i &\forall (j,i)\in E(J,R)\label{LP4:xy}
\end{align}
We further strengthen this LP by partially lifting the above constraints. Specifically, for every $a\in J\cup S$, $j\in J$, $h\in S$, $i\in R$, and $\ell\in B$ we introduce variables $X_h^{(a)}, X_{\ell}^{(a)}, X_{\ell,j}^{(a)}, X_{j}^{(a)}, X_i^{(a)}$, and lift all the above constraints accordingly. For a precise definition, see Appendix~\ref{sec:LP}. For any $j\in J$ such that $x_j>0$ or $h\in S$ such that $w_h>0$, we will denote the ``conditioned" variables by $\hat w^{(j)}_h=X_{h}^{(j)}/x_j$, $\hat z^{(h)}_{\ell}=X_{\ell}^{(h)}/w_h$, etc. 
\begin{remark}
  The above linear program is a relaxation for the partial solution given by Lemma~\ref{lem:mmsa4-partial-solution}. Specifically, given an optimal solution $(J_{\opt},S_{\opt})$, applying the lemma to $J_0=J_{\opt}$, we have the following feasible solution: Set the variables $z_\ell$ and $x_j$ to be indicators for $B'$ and $J'$ as in the lemma, respectively, and the variables $x_{j}^{\ell}$ to be indicators for $J'\times B'$. Set the the variables $w_h$ to be indicators for $S_{\opt}$, and the variables $y_i$ to be indicators for the red neighbors $\Gamma_R(J')$ of $J'$.
\end{remark}

Let us examine some useful properties of this relaxation. First of all, we note that it approximately determines the total LP value of $J$ (since the LP assigns total LP value $\tilde\Theta(|B|)$ to $B$):
\begin{claim}
  Any solution satisfying constraints~\eqref{LP4:ell-deg}-\eqref{LP4:j-ell}, has total LP weight in $J$ bounded by
  $$\frac{1}{2\Delta} \sum_{\ell\in B}z_{\ell}\leq \sum_{j\in J} x_j \leq \frac{2e\ln(2k)}{\Delta}\sum_{\ell\in B}z_{\ell}.$$
  \label{clm:LP4-J-weight}
\end{claim}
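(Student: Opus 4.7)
The plan is to bound the total LP weight in $J$ by a simple double-counting argument on the ``edge'' variables $x_j^{\ell}$. Both constraints~\eqref{LP4:ell-deg} and~\eqref{LP4:j-deg} control the sum $\sum_{j\in J,\ell\in B} x_j^{\ell}$, the first by summing over $\ell$ for fixed $j$ (and summing over $j$) and the second by summing over $j$ for fixed $\ell$ (and summing over $\ell$). Since $x_j^{\ell}$ is supported on pairs with $j\in\Gamma_J(\ell)$ (equivalently $\ell\in\Gamma_B(j)$), the two orderings yield the same total, and equating the resulting bounds will immediately give the claim.

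More explicitly, first I would sum~\eqref{LP4:j-deg} over $j\in J$ to obtain
\[
\Delta \sum_{j\in J} x_j \;\leq\; \sum_{j\in J}\sum_{\ell\in\Gamma_B(j)} x_j^{\ell} \;=\; \sum_{j\in J,\ell\in B} x_j^{\ell} \;\leq\; 2\Delta \sum_{j\in J} x_j.
\]
Then I would sum~\eqref{LP4:ell-deg} over $\ell\in B$ to obtain
\[
\sum_{\ell\in B} z_{\ell} \;\leq\; \sum_{\ell\in B}\sum_{j\in\Gamma_J(\ell)} x_j^{\ell} \;=\; \sum_{j\in J,\ell\in B} x_j^{\ell} \;\leq\; 2e\ln(2k) \sum_{\ell\in B} z_{\ell}.
\]
Combining the lower bound from the first display with the upper bound from the second gives $\Delta\sum_j x_j \leq 2e\ln(2k)\sum_\ell z_\ell$, i.e.\ the upper bound on $\sum_j x_j$ in the claim. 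Combining the lower bound from the second display with the upper bound from the first gives $\sum_\ell z_\ell \leq 2\Delta\sum_j x_j$, i.e.\ the lower bound on $\sum_j x_j$ in the claim.

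There is essentially no obstacle here: the proof is a two-line double-counting identity, and~\eqref{LP4:j-ell} is not even needed beyond ensuring the variables $x_j^{\ell}$ are nonnegative so that the summation interchange is valid. The only thing worth flagging is the convention that $x_j^{\ell}=0$ whenever $j\notin \Gamma_J(\ell)$, so that the inner sums in~\eqref{LP4:ell-deg} and~\eqref{LP4:j-deg} agree with the full double sum $\sum_{j,\ell} x_j^{\ell}$; this is implicit in the LP formulation since the variable $x_j^{\ell}$ is only introduced for edges $(j,\ell)\in E(J,B)$.
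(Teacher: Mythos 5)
Your proof is correct and is essentially the same double-counting argument the paper uses: both bound $\sum_{j,\ell} x_j^{\ell}$ from each side using~\eqref{LP4:j-deg} and~\eqref{LP4:ell-deg} and exchange the order of summation over the edge set between $J$ and $B$. The paper merely presents it as a single chain of inequalities rather than two displays combined at the end; the content is identical.
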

\begin{proof}
  \begin{align*}
    \frac{1}{2\Delta} \sum_{\ell\in B}z_{\ell}
    \leq\frac{1}{2\Delta}\sum_{\ell\in B}\sum_{j\in \Gamma_{J
    }(\ell)}x_j^{\ell}
    &=\frac{1}{2\Delta}\sum_{j\in J
    }\sum_{\ell\in\Gamma_B(j)}x_j^{\ell}&\text{by~\eqref{LP4:ell-deg}}\\
    &\leq \sum_{j\in J
    } x_j&\text{by~\eqref{LP4:j-deg}}\\
    &\leq \frac{1}{\Delta}\sum_{j\in J}\sum_{\ell\in\Gamma_B(j)}x_j^{\ell}&\text{by~\eqref{LP4:j-deg}}\\
    &%
    = \frac{1}{\Delta}\sum_{\ell\in B}\sum_{j\in \Gamma_{J
    }(\ell)}x_j^{\ell}\leq \frac{2e\ln(2k)}{\Delta}\sum_{\ell\in B}z_{\ell}.&\text{by~\eqref{LP4:ell-deg}}
\end{align*}
\end{proof}

These constraints also determine a useful combinatorial property: in any feasible solution, the number of blue neighbors a subset of $J$ has is (at least) proportional to the LP value of that set.
\begin{claim}
  For any solution satisfying constraints~\eqref{LP4:ell-deg}-\eqref{LP4:j-ell}, and any subset of vertices $\hat J\subseteq J$, the number of blue neighbors of $\hat J$ is bounded from below by:
  $$|\Gamma_B(\hat J)|
  \geq \frac{1
    }{4e\ln(2k)\log k\log m}\cdot \frac{x(\hat J)}{x(J
    )}\cdot |B|.$$
  \label{clm:LP4-blue-neighborhood}
\end{claim}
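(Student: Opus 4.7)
The plan is to obtain the bound in two short steps. First, I would derive a raw combinatorial inequality $|\Gamma_B(\hat J)|\geq \Delta\cdot x(\hat J)/(2e\ln(2k))$ by double-counting the LP edge weights between $\hat J$ and $B$; then I would convert $\Delta$ into the ratio $|B|/x(J)$ by invoking Claim~\ref{clm:LP4-J-weight} together with~\eqref{LP4:B-weight}.

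For the first step, consider the quantity $Q=\sum_{j\in \hat J}\sum_{\ell\in\Gamma_B(j)} x_j^\ell$ and estimate it in two ways. Summing the lower bound in~\eqref{LP4:j-deg} over $j\in\hat J$ gives $Q\geq \Delta\cdot x(\hat J)$. Swapping the order of summation, and noting that any $\ell$ with $x_j^\ell > 0$ for some $j\in\hat J$ must lie in $\Gamma_B(\hat J)$, we may also write $Q=\sum_{\ell\in\Gamma_B(\hat J)}\sum_{j\in\Gamma_J(\ell)\cap \hat J} x_j^\ell \leq \sum_{\ell\in\Gamma_B(\hat J)}\sum_{j\in\Gamma_J(\ell)} x_j^\ell$, and then the upper bound in~\eqref{LP4:ell-deg} together with $z_\ell\leq 1$ from~\eqref{LP4:j-ell} yields $Q\leq 2e\ln(2k)\cdot |\Gamma_B(\hat J)|$. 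Chaining the two estimates gives the raw bound.

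For the second step, I would use the lower half of Claim~\ref{clm:LP4-J-weight} together with~\eqref{LP4:B-weight}: $x(J)\geq \frac{1}{2\Delta}\sum_{\ell\in B}z_\ell\geq \frac{|B|}{2\Delta\log k\log m}$, which rearranges to $\Delta\geq \frac{|B|}{2\log k\log m\cdot x(J)}$. Substituting back into the raw bound gives
\[
|\Gamma_B(\hat J)|\geq \frac{\Delta\cdot x(\hat J)}{2e\ln(2k)}\geq \frac{|B|\cdot x(\hat J)}{4e\ln(2k)\log k\log m\cdot x(J)},
\]
which is precisely the claimed inequality.

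There is no real obstacle here: the argument is essentially a double-counting calculation, and the only thing to watch is that the factor of $2$ from Claim~\ref{clm:LP4-J-weight} combines with the $2e\ln(2k)$ from the double-count to produce exactly the factor of $4e\ln(2k)$ in the target bound. Although the claim as stated only cites~\eqref{LP4:ell-deg}--\eqref{LP4:j-ell}, the $\log k\log m$ on the right-hand side already signals that~\eqref{LP4:B-weight} must be in play, and this is what lets us convert $\sum_\ell z_\ell$ into $|B|$ in the final step.
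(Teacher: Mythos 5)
Your proof is correct and is essentially the paper's own argument: the paper proves the same double-counting inequality as a single chain starting from $|\Gamma_B(\hat J)|\geq\sum_{\ell\in\Gamma_B(\hat J)}z_\ell$ and then invokes Claim~\ref{clm:LP4-J-weight} and~\eqref{LP4:B-weight} in exactly the way you do, with the same constants. Your observation that~\eqref{LP4:B-weight} is needed despite not being cited in the claim's statement is accurate; the paper's proof uses it too.
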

\begin{proof}
  \begin{align*}
    |\Gamma_B(\hat J)|&\geq\sum_{\ell\in\Gamma_B(\hat J)}z_{\ell}\\
    &\geq \frac{1}{2e\ln(2k)}\sum_{\ell\in\Gamma_B(\hat J)}\sum_{j\in\Gamma
    (\ell)}x_j^{\ell}&\text{by~\eqref{LP4:ell-deg}}\\
    &\geq \frac{1}{2e\ln(2k)}\sum_{\ell\in\Gamma_B(\hat J)}\sum_{j\in\Gamma_{\hat J}(\ell)}x_j^{\ell}&\text{since }\hat J\subseteq J\\
    &= \frac{1}{2e\ln(2k)}\sum_{j\in\hat J}\sum_{\ell\in\Gamma_B(j)}x_j^{\ell}\\
    &\geq \frac{\Delta}{2e\ln(2k)}\sum_{j\in\hat J
    }x_j&\text{by~\eqref{LP4:j-deg}}\\
    &\geq\frac{1}{2e\ln(2k)}\cdot \frac{z(B)}{2x(J
    )}\cdot x(\hat J_0)&\Delta\geq \frac{z(B)}{2x(J)}\text{ by Claim~\ref{clm:LP4-J-weight}}\\
    &=\frac{1}{4e\ln(2k)}\cdot \frac{x(\hat J_0)}{x(J
    )}\cdot z(B)\\
    &\geq \frac{1
    }{4e\ln(2k)\log k\log m}\cdot \frac{x(\hat J)}{x(J
    )}\cdot |B|&\text{by~\eqref{LP4:B-weight}}
\end{align*}
\end{proof}

A fractional variant of the above covering property for the blue vertices is the following:
\begin{claim}
  For any solution satisfying constraints~\eqref{LP4:ell-deg}-\eqref{LP4:j-ell}, and any subset of vertices $\hat J\subseteq J$, at least $\eps|B|$ vertices $\ell\in B$ satisfy
  $$\sum_{j\in\Gamma_{\hat J}(\ell)}x_j^{\ell}\geq \frac{1}{4\log k\log m} \cdot\frac{x(\hat J)}{x(J)},$$
    where
    $$\eps=\frac{1}{8e\ln(2k)\log k\log m}\cdot \frac{x(\hat J)}{x(J)}.$$
  \label{clm:LP4-blue-neighborhood-frac}
\end{claim}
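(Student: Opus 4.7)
The proposal is to mimic the proof of Claim~\ref{clm:LP4-blue-neighborhood} but, instead of just lower bounding $|\Gamma_B(\hat J)|$, to extract a bound on how many blue vertices receive substantial fractional coverage from $\hat J$. Define
$$T:=\sum_{\ell\in B}\sum_{j\in\Gamma_{\hat J}(\ell)} x_j^{\ell},\qquad \tau:=\frac{1}{4\log k\log m}\cdot\frac{x(\hat J)}{x(J)},\qquad L:=\biggl\{\ell\in B\ :\ \sum_{j\in\Gamma_{\hat J}(\ell)} x_j^{\ell}\ge \tau\biggr\},$$
so that we want to show $|L|\ge\eps|B|$.

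First I would lower bound $T$. Swapping the order of summation and using~\eqref{LP4:j-deg} gives
$T=\sum_{j\in\hat J}\sum_{\ell\in\Gamma_B(j)} x_j^{\ell}\ge \Delta\, x(\hat J)$. Plugging in the bound $\Delta\ge z(B)/(2x(J))$ from Claim~\ref{clm:LP4-J-weight} and then using~\eqref{LP4:B-weight} to bound $z(B)\ge |B|/(\log k\log m)$ yields
$$T\ge \frac{|B|\, x(\hat J)}{2\, x(J)\log k\log m}=2|B|\tau.$$

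Next I would upper bound the per-element contribution. By~\eqref{LP4:ell-deg} combined with $z_{\ell}\le 1$ from~\eqref{LP4:j-ell}, for every blue vertex $\ell\in B$ we have $\sum_{j\in\Gamma_{\hat J}(\ell)} x_j^{\ell}\le \sum_{j\in\Gamma_J(\ell)} x_j^{\ell}\le 2e\ln(2k)$. Splitting $T$ according to membership in $L$,
$$T\ \le\ 2e\ln(2k)\cdot|L|\ +\ \tau\cdot(|B|-|L|)\ \le\ 2e\ln(2k)\cdot|L|\ +\ |B|\tau.$$
Combining this with $T\ge 2|B|\tau$ gives $2e\ln(2k)\cdot|L|\ge |B|\tau$, i.e.
$$|L|\ \ge\ \frac{|B|\tau}{2e\ln(2k)}\ =\ \frac{1}{8e\ln(2k)\log k\log m}\cdot\frac{x(\hat J)}{x(J)}\cdot|B|\ =\ \eps|B|,$$
which is exactly the claimed bound.

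There is no real obstacle here: the only step requiring any thought is verifying that the threshold $\tau$ is chosen so that the mass below threshold is at most half of $T$, leaving enough mass to be carried by the "above threshold" vertices, each of which contributes at most $2e\ln(2k)$. Since this is essentially an averaging/Markov-type argument on top of the calculation already performed in Claim~\ref{clm:LP4-blue-neighborhood}, I would keep the proof short and largely symbolic.
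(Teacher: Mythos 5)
Your proof is correct and follows essentially the same route as the paper's: the same lower bound on the total mass $T=\sum_{\ell}\sum_{j\in\Gamma_{\hat J}(\ell)}x_j^{\ell}$ via~\eqref{LP4:j-deg}, Claim~\ref{clm:LP4-J-weight}, and~\eqref{LP4:B-weight}, and the same per-vertex cap of $2e\ln(2k)$ from~\eqref{LP4:ell-deg}; the paper merely packages the final averaging step as a proof by contradiction rather than a direct Markov-type split.
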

\begin{proof}
  Let $$B_0=\left\{\ell\in B\;\left| \sum_{j\in\Gamma_{\hat J}(\ell)}x_j^{\ell}\geq \frac{1}{4\log k\log m} \cdot\frac{x(\hat J)}{x(J)}\right.\right\}.$$ We need to show that $|B_0|\geq\eps|B|$. Assume for the sake of contradiction that $|B_0|<\eps|B|$. On the one hand, we know that
  \begin{align*}
      \sum_{\ell\in B}\sum_{j\in\Gamma_{\hat J}(\ell)}x^j_{\ell}&=\sum_{j\in \hat K}\sum_{\ell\in\Gamma_B(\ell)}x^j_{\ell}\\
      &\geq\Delta\cdot\sum_{j\in\hat J}x_j&\text{by~\eqref{LP4:j-deg}}\\
      &=\Delta x(J)\cdot\frac{x(\hat J)}{x(J)}\\
      &\geq \frac{1}{2}\sum_{j\in \hat J}\sum_{\ell\in\Gamma_B(\ell)}x^j_{\ell}\cdot \frac{x(\hat J)}{x(J)}&\text{by~\eqref{LP4:j-deg}}\\
      &=\frac{1}{2}\sum_{\ell\in B}\sum_{j\in\Gamma_J(\ell)}x^j_{\ell}\cdot \frac{x(\hat J)}{x(J)}\\
      &\geq \frac12\sum_{\ell\in B}z_\ell\cdot \frac{x(\hat J)}{x(J)}&\text{by~\eqref{LP4:ell-deg}}\\
      &\geq \frac{|B|}{2\log k\log m}\cdot \frac{x(\hat J)}{x(J)}.&\text{by~\eqref{LP4:B-weight}}
  \end{align*}
  On the other hand, by our assumption, we have
  \begin{align*}
      \sum_{\ell\in B}\sum_{j\in\Gamma_{\hat J}(\ell)}x^j_{\ell}&=\sum_{\ell\in B_0}\sum_{j\in\Gamma_{\hat J}(\ell)}x^j_{\ell} +\sum_{\ell\in B\setminus B_0}\sum_{j\in\Gamma_{\hat J}(\ell)}x^j_{\ell}\\
      &<\sum_{\ell\in B_0}\sum_{j\in\Gamma_{\hat J}(\ell)}x^j_{\ell}+\frac{|B|}{4\log k\log m} \cdot\frac{x(\hat J)}{x(J)}&\text{by def.\ of }B_0\\
      &\leq 2e\ln(2k)\sum_{\ell\in B_0}z_\ell+\frac{|B|}{4\log k\log m} \cdot\frac{x(\hat J)}{x(J)}&\text{by~\eqref{LP4:ell-deg}}\\
      &\leq |B_0|\cdot 2e\ln(2k)+\frac{|B|}{4\log k\log m} \cdot\frac{x(\hat J)}{x(J)}\\
      &<\frac{|B|}{2\log k\log m} \cdot\frac{x(\hat J)}{x(J)},&\text{assuming }|B_0|< \eps|B|
  \end{align*}  
  which contradicts the previous bound. Thus we must have $|B_0|\geq\eps|B|$ as required.
\end{proof}

\begin{algorithm}[p]
    \caption{Approximation Algorithm for MMSA$_4$}
    \label{alg-mmsa4}
\begin{algorithmic}
\State \textbf{Input:} $B, J, R, S, E$
    \State{\textbf{Guess} $\opt,\Delta$ and solve the LP} \Comment{e.g. using binary search}
    \State{\textbf{Choose} parameter $s$ such that the LP weight of the bucket $J_s=\{j\in J\mid 2^{-s}\leq x_j\leq 2^{-(s-1)}\}$, that is, $\sum_{j\in J_s}x_j$ is maximized, and let $x_0=2^{-s}$ and $J_0=J_s$.}
    \For{every $j\in J_0$ and $i\in \Gamma_R(j)$}
       \State{\textbf{Choose} a new parameter $s$ such that the conditioned LP weight of the bucket\\ $S^{ji}_s=\{h\in S\mid 2^{-s}\leq \hat w_h^{(j)}\leq 2^{-(s-1)}\}$, that is, $\sum_{h\in S^{ji}_s}\hat w_h^{(j)}$, is maximized, and let $\beta_{ji}=2^{-s}$.}
       \State{\textbf{Choose} parameter $t$ such that the sub-bucket $\hat S^{ji}_t=\{h\in S^{ji}_s\mid 2^{-t}\leq w_h\leq 2^{-(t-1)}\}$ has maximum cardinality $|\hat S^{ji}_t|$, and let $\gamma_{ji}=2^{-t}$ and $\hat\Gamma_j(i)=\hat S^{ji}_t$.}
    \EndFor
    \For{every $j\in J_0$ and $\beta,\gamma$}
       \State{\textbf{Let} $\Gamma^R_{\beta,\gamma}(j)=\{i\in\Gamma_R(j)\mid \beta_{ji}=\beta,\gamma_{ji}=\gamma\}$.}
       \State{\textbf{Let} $\Gamma^S_{\beta,\gamma}(j)=\bigcup_{i\in\Gamma^R_{\beta,\gamma}(j)}\hat\Gamma_j(i)$.}
    \EndFor
    \For{every $\beta,\gamma$ and $D\in\{2^{s-1}\mid s\in\lceil\log|S|\rceil\}$}
       \State{\textbf{Let} $J^D_{\beta,\gamma}=\{j\in J_0\mid \Gamma^R_{\beta,\gamma}(j)\neq\emptyset,|\Gamma^S_{\beta,\gamma}(j)|\in[D,2D]\}$.}
        \State{\textbf{Let} $T^D_{\beta,\gamma}=\{\langle j, \Gamma^R_{\beta,\gamma}(j),\Gamma^S_{\beta,\gamma}(j)\rangle\mid j\in J^D_{\beta,\gamma}(j)\}$.}
    \EndFor
    \State{\textbf{Let} $P_1=\{\langle\beta,\gamma,D\mid \beta/\gamma>A, \beta D> A\cdot\opt/x(J_0)\rangle\}$, and
    $$J_1=\bigcup_{\langle\beta,\gamma,D\rangle\in P_1}
    J^D_{\beta,\gamma}.$$}
    \If{$|J_1|<|J_0|/2$}
      \State{\textbf{Let} $J_{\alg}=\emptyset$.}
      \For{all $j\in J_0\setminus J_1$}
        \State{Independently add $j$ to $J_{\alg}$ with probability $x_0$.}
      \EndFor
      \State{\textbf{Let} $S_{\alg}=\emptyset$.}
      \For{every $\beta$}
        \State{\textbf{Let} $S_{\beta}=\bigcup_{j\in J_{\alg}}\bigcup_{\gamma}\Gamma^S_{\beta,\gamma}(j)$.}
          \For{all $h\in S_{\beta}$}
            \State{Independently add $h$ to $S_{\alg}$ with probability $\min\{1,\beta\cdot 12\log|S|\log(|S|^2m)\ln n\}$.}
        \EndFor
      \EndFor
    \ElsIf{$|J_1|\geq|J_0|/2$}
      \State{\textbf{Choose }$\langle\beta,\gamma,D\rangle\in P_1$ that maximize the cardinality $|J_{\beta,\gamma}^D|$, and let $J_2=J_{\beta,\gamma}^D$.}
      \State{\textbf{Let } $S_{\tilde D}=\{h\in S\mid\{j'\in J_2\mid h\in \Gamma_{\beta,\gamma}^S(j')\}|\in[\tilde D,2\tilde D]\}$ for every $\tilde D\in \{2^{s-1}\mid s\in\lceil\log|J_2|\rceil\}$.}
      \State{\textbf{Choose }$\tilde D$ that maximizes the cardinality $|\{\langle j,h\rangle\in J_2\times S_{\tilde D}\mid h\in \Gamma_{\beta,\gamma}^S(j)\}|$, and let $\tilde S=S_{\tilde D}$.}
      \State{\textbf{Choose }$h_0\in \tilde S$ that maximizes the total LP value $\sum_{j\in J_2:\Gamma_{\beta,\gamma}^S(j)\ni h_0}\hat x_j^{(h_0)}$.} 
      \State{\textbf{Let} $J_{\alg}=\{j\in J_2\mid h_0\in\Gamma_{\beta,\gamma}^S(j)\}$.}
      \State{\textbf{Let} $S_{\alg}=\emptyset$.}
      \For{every $h\in \bigcup_{j\in J_{\alg}}\Gamma_S(\Gamma_R(j))$}
        \State{Independently add $h$ to $S_{\alg}$ with probability $\min\{1,\hat w^{h_0}_h\cdot 4\gamma\ln n/(x_0\beta)\}$.}
      \EndFor
    \EndIf
\end{algorithmic}
\end{algorithm}

Let us now analyze the approximation guarantee of Algorithm~\ref{alg-mmsa4}. 
We begin by stating simple lower bounds on the total LP value of the set $J_0$ as well as the vertices in the set.

\begin{lemma}
  The set $J_0$ defined in Algorithm~\ref{alg-mmsa4} has LP value at least $x(J)/(2\log m)$ and the lower bound on the individual LP values in the set is bounded by $x_0\geq 1/m$.
  \label{lem:J0-x0-bounds}
\end{lemma}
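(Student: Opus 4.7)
My plan is a standard geometric bucketing argument on the LP values $\{x_j\}_{j \in J}$, leveraging only the cardinality bound $|J| = m$.

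The core observation is that vertices with very small $x_j$ cannot accumulate significant LP mass. Specifically, any $j$ with $x_j < 1/m$ contributes less than $1/m$ to $x(J)$, so summing over at most $m$ such vertices, the total mass across all buckets $J_s = \{j : x_j \in [2^{-s}, 2^{-(s-1)}]\}$ with $s > \log m$ is strictly less than $m \cdot (1/m) = 1$. In the non-degenerate regime $x(J) \geq 2$ — which follows by combining the lower bound of Claim~\ref{clm:LP4-J-weight} with the LP constraint~\eqref{LP4:B-weight} on $z(B)$ (together with the simplifying assumption $\Delta \leq k/m^{\eps}$) — this is at most $x(J)/2$, so the $\log m$ buckets $J_1, \ldots, J_{\log m}$ collectively contain at least $x(J)/2$ LP mass.

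By pigeonhole over these $\log m$ candidate buckets, some bucket $J_s$ with $s \in \{1, \ldots, \log m\}$ has $\sum_{j \in J_s} x_j \geq x(J)/(2\log m)$. Since this value dominates any individual bucket at $s > \log m$ (each of which has mass at most $1$), the algorithm's maximizing choice $J_0$ necessarily lies within $\{J_1, \ldots, J_{\log m}\}$, yielding the first bound $x(J_0) \geq x(J)/(2\log m)$. The second bound is then immediate from the bucket index: $x_0 = 2^{-s} \geq 2^{-\log m} = 1/m$.

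The only conceptual step is the truncation via the cardinality of $J$, which rules out LP mass concentrating in arbitrarily small $x_j$ values; the rest is routine pigeonhole. I do not anticipate any significant obstacle, though some care is needed to verify that the regime $x(J) \geq 2$ is always in force when the algorithm is applied to a meaningful residual instance.
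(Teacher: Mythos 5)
Your proof is correct and follows essentially the same route as the paper: invoke Claim~\ref{clm:LP4-J-weight} with the assumption $\Delta\leq k/m^{\eps}$ to get $x(J)\geq m^{\eps}/(\log k\log m)$, truncate the buckets with $x_j<1/m$ (total mass at most $1$), and pigeonhole over the remaining $\log m$ buckets. One small quantitative note: for the conclusion $x_0\geq 1/m$ you actually need $x(J)/(2\log m)>1$ rather than just $x(J)\geq 2$, but this is immediate from the polynomial lower bound $x(J)=\tilde\Omega(m^{\eps})$ you already derived.
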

\begin{proof}
  By our simplifying assumptions, we have $\Delta\leq k/m^{\eps}$, and therefore, by Claim~\ref{clm:LP4-J-weight}, the set $J$ has total LP weight at least
  \begin{align*}
      \sum_{j\in J}x_j&\geq \frac{1}{2\Delta}\sum_{\ell\in B}z_\ell\\
      &\geq \frac{k}{\Delta\log k\log m}&\text{by~\eqref{LP4:B-weight}}\\
      &\geq \frac{m^{\eps}}{\log k\log m},
  \end{align*}
  and so in particular, $x(J)=\omega(1)$ and the total LP weight of vertices in $J$ with LP value at most $1/m$ is bounded by $\sum_{j\in J:x_j\leq 1/m}x_j\leq 1\leq x(J)/2$, and there is some $s\in[\log m]$ such that $J_s$ has LP weight at least $x(J)/(2\log m)$, which gives our lower bound on $J_0$ (the heaviest bucket). Moreover, we know that the heaviest bucket can't be $J_s$ for $2^{-s}<1/m$, since the total LP weight of all vertices with at most this LP value is at most $\tilde O(x(J)/m^{\eps})$. Thus, $x_0\geq 1/m$.
\end{proof}

Next, we examine the bucketing of neighbors in $S$, and give a lower bound on the number of vertices in these bucketed sets.

\begin{lemma}
  In Algorithm~\ref{alg-mmsa4}, for every vertex $j\in J_0$, and every red neighbor $i\in\Gamma_R(j)$, the bucketed set of neighbors $\hat\Gamma_j(i)$ of $i$ has cardinality bounded from below by $|\hat\Gamma_j(i)|\geq1/(6\beta_{ji}\log|S|\log(|S|^2m))$.
  \label{lem:beta-ji-size}
\end{lemma}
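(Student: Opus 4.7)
The plan is to reduce the lemma to a two-stage dyadic bucketing calculation on the conditioned weights at $j$. Lifting constraint~\eqref{LP4:xy} (multiplying by $x_j$ and using $x_j^2=x_j$ for 0/1 variables) gives $X_i^{(j)}\geq x_j$ and hence $\hat y_i^{(j)}\geq 1$ for every $i\in\Gamma_R(j)$. Lifting~\eqref{LP4:i-cover} in the same way gives
\[
\sum_{h\in\Gamma_S(i)}\hat w^{(j)}_h\;\geq\;\hat y_i^{(j)}\;\geq\;1.
\]
Combined with the standard Sherali--Adams marginal inequalities $X_h^{(j)}\leq\min\{w_h,x_j\}$ (which yield both $\hat w^{(j)}_h\in[0,1]$ and $w_h\geq \hat w^{(j)}_h\cdot x_j$), this tells us that the conditioned weights restricted to $\Gamma_S(i)$ are at most $|S|$ reals in $[0,1]$ summing to at least $1$. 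This is the only LP input the proof needs.

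Next I would analyze the first bucketing step. Values $\hat w^{(j)}_h<1/(2|S|)$ collectively contribute at most $|S|\cdot 1/(2|S|)=1/2$ to the sum, so the remaining $\geq 1/2$ weight is distributed over the $\lceil\log(2|S|)\rceil=O(\log|S|)$ dyadic buckets with $s\leq\lceil\log(2|S|)\rceil$; averaging, some such bucket has conditioned weight $\Omega(1/\log|S|)$. Since the algorithm picks $s$ maximizing this weight over \emph{all} $s$, the chosen bucket $S^{ji}_s$ inherits the same lower bound, and since every $h\in S^{ji}_s$ satisfies $\hat w^{(j)}_h\leq 2\beta_{ji}$, we conclude $|S^{ji}_s|\geq \Omega\bigl(1/(\beta_{ji}\log|S|)\bigr)$. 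Combined with the trivial bound $|S^{ji}_s|\leq |S|$, this also yields the a priori bound $\beta_{ji}\geq \Omega\bigl(1/(|S|\log|S|)\bigr)$, which is the crucial quantitative ingredient for the second step.

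For the second bucketing, Lemma~\ref{lem:J0-x0-bounds} gives $x_j\geq x_0\geq 1/m$, so $w_h\geq \hat w^{(j)}_h\cdot x_j\geq \beta_{ji}/m$ for every $h\in S^{ji}_s$. The values $\{w_h\}_{h\in S^{ji}_s}$ thus lie in $[\beta_{ji}/m,1]$, and using the lower bound on $\beta_{ji}$ from the previous paragraph this interval contains at most $\lceil\log(m/\beta_{ji})\rceil\leq O(\log(|S|^2m))$ non-empty dyadic sub-buckets. Since the algorithm selects the maximum-cardinality sub-bucket,
\[
|\hat\Gamma_j(i)|=|\hat S^{ji}_t|\;\geq\;\frac{|S^{ji}_s|}{O(\log(|S|^2m))}\;\geq\;\Omega\!\left(\frac{1}{\beta_{ji}\log|S|\log(|S|^2m)}\right),
\]
and careful tracking of the constants (using $\lceil\log(2|S|)\rceil\leq 2\log|S|$ and the analogous estimate in the second stage) recovers the stated $1/6$.

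The main obstacle is not conceptual but bookkeeping: the algorithm's chosen $s$ is not guaranteed to lie in the ``heavy'' range $s\leq\lceil\log(2|S|)\rceil$ used to establish the bucket-weight bound, and without care $\beta_{ji}$ could in principle be tiny and inflate $\log(m/\beta_{ji})$ past $\log(|S|^2 m)$. The a priori cardinality bound $|S^{ji}_s|\leq|S|$, however, forces $\beta_{ji}\geq \Omega(1/(|S|\log|S|))$ regardless of which $s$ the algorithm picks, and this is exactly strong enough to keep the sub-bucket count at $O(\log(|S|^2m))$ and close the argument self-consistently.
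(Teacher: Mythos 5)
Your proof is correct and follows essentially the same two-stage dyadic bucketing argument as the paper: derive $\sum_{h\in\Gamma_S(i)}\hat w^{(j)}_h\geq 1$ from the lifted constraints, truncate the tiny weights to confine the heavy mass to $O(\log|S|)$ buckets, and use $w_h\geq\beta_{ji}x_0\geq\beta_{ji}/m$ to bound the number of non-empty sub-buckets by $O(\log(|S|^2m))$. The only (harmless) deviations are that the paper truncates at $1/|S|^2$ and concludes $\beta_{ji}\geq 1/|S|^2$ directly from the bucket weights, whereas you truncate at $1/(2|S|)$ and get the needed lower bound on $\beta_{ji}$ from the cardinality bound $|S^{ji}_s|\leq|S|$; your constants come out slightly different from the stated $1/6$, but this only perturbs the constant $12$ in the algorithm's sampling probability and nothing downstream.
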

\begin{proof}
  Fix vertices $j\in J_0$ and $i\in\Gamma_R(j)$. Let us begin by examining our choice of $\beta_{ji}$. Note that lifting Constraint~\ref{LP4:xy}, we get $x_j=X^{(j)}_j\leq X^{(j)}_i(\leq x_j)$, and so $\hat y^{(j)}=X^{(j)}_i/x_j=1$. Lifting Constraint~\eqref{LP4:i-cover}, we thus get
  $$\sum_{h\in\Gamma_S(i)}\hat w^{(j)}_h\geq 1.$$
  Note that the total LP weight of the set $S'_{ji}=\{h\in\Gamma_S(i)\mid \hat w^{(j)}_h\leq 1/|S|^2\}$ is at most $1/|S|\leq \hat w^{(j)}(\Gamma_S(i))/3$. Therefore, the total $\hat w^{(j)}$ LP weight of the bucketed sets $S_s^{ji}$ for $s$ such that $2^{-s}\geq 1/|S|^2$ is at least $\frac23\hat w^{(j)}(\Gamma_S(i))$, and at least one of these bucketed sets has LP weight at least a $1/(2\log|S|)$-fraction of this, or at least $\hat w^{(j)}(\Gamma_S(i))/(3\log|S|)\geq 1/(3\log|S|)$. This gives a lower bound on the LP weight of the bucket which defines $\beta_{ji}$. Also, the heaviest bucket cannot be $S_s^{ji}$ for $s$ such that $2^{-s}\leq 1/|S|^2$, since even the total weight of these buckets is at most $1/|S|=o(1/(3\log|S|))$. In particular, this means that $\beta_{ji}\geq 1/|S|^2$. Moreover, for $s$ such that $2^{-s}=\beta_{ji}$, since the total conditional LP weight of $S^{ji}_s$ is at least $1/(3\log|S|)$, and every vertex in the set has conditional LP value at most $2\beta_{ji}$, the cardinality of the set must be at least $1/(6\beta_{ji}\log|S|)$.
  
  Now let us examine the second stage of bucketing. Note that for every $h\in\Gamma_S(i)$, we have
  $$w_h\geq X^{(j)}_h=\hat w^{(j)}_h\cdot x_j\geq \beta_{ji}x_0\geq 1/(|S|^2m)$$ (and, of course, $w_h\leq 1$). Therefore, the number of non-empty buckets $\hat S^{ji}_t$ is at most $\log(|S|^2m)$, and at least one of them must have cardinality at least $|S^{ji}_s|/\log(|S|^2m)$, which, along with our lower bound on $|S^{ji}_s|$ above, gives us the required lower bound on $|\hat\Gamma_j(i)|$.
\end{proof}

Note that from the above proof, we also get upper-bounds on the number of values of $\beta_{ji}$ and $\gamma_{ji}$ that can produce non-empty buckets. In particular, we get the following bound:
\begin{observation}
  The total number of possible values for $\beta_{ji}$ is at most $2\log|S|$, and the total number of possible values for $\gamma_{ji}$ is at most $\log(|S|^2m)$. Along with the range of values for $D$, the total number of triples $\langle \beta,\gamma,D\rangle$ for which $J^{D}_{\beta,\gamma}$ is non-empty is at most $2\log^2|S|\log(|S|^2m)$.\label{obs:triple-bound}
\end{observation}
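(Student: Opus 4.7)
The plan is to read off each of the three coordinates of the triple $\langle \beta,\gamma,D\rangle$ from range bounds that are already (explicitly or implicitly) established in the proof of Lemma~\ref{lem:beta-ji-size} and in Lemma~\ref{lem:J0-x0-bounds}, and then simply multiply. There is no substantive algorithmic or combinatorial content beyond bookkeeping; the only subtlety is isolating the correct endpoints of the relevant dyadic ranges.

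First, I would bound the number of possible values of $\beta_{ji}$. By construction $\beta_{ji}=2^{-s}$ where $S^{ji}_s$ is the heaviest conditioned-weight bucket, so $\beta_{ji}\leq 1$ gives $s\geq 0$. The proof of Lemma~\ref{lem:beta-ji-size} already shows that no bucket below the threshold $1/|S|^2$ can be the heaviest: even taking the union of all such low buckets, their total conditioned weight is at most $1/|S|$, strictly less than the $1/(3\log|S|)$ mass attained by at least one bucket above the threshold. Hence $s\in\{0,1,\dots,2\lceil\log|S|\rceil\}$, giving at most $2\log|S|$ values of $\beta_{ji}$.

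Second, I would bound the possible values of $\gamma_{ji}$ in the same way. Since $\gamma_{ji}=2^{-t}$ is the LP-weight bucket of some $h\in\hat\Gamma_j(i)$, every such $h$ satisfies $w_h\leq 1$ (so $t\geq 1$) and, as already noted in the proof of Lemma~\ref{lem:beta-ji-size}, $w_h\geq X_h^{(j)}=\hat w^{(j)}_h\,x_j\geq \beta_{ji}\,x_0$. Combining the first-step bound $\beta_{ji}\geq 1/|S|^2$ with $x_0\geq 1/m$ from Lemma~\ref{lem:J0-x0-bounds} yields $w_h\geq 1/(|S|^2 m)$ and hence $t\leq \log(|S|^2 m)$, so there are at most $\log(|S|^2 m)$ values of $\gamma_{ji}$.

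Finally, by the algorithm's definition $D$ ranges over $\{2^{s-1}\mid s\in\lceil\log|S|\rceil\}$, contributing at most $\log|S|$ values. Multiplying the three bounds gives at most $2\log|S|\cdot\log|S|\cdot\log(|S|^2 m)=2\log^2|S|\,\log(|S|^2 m)$ triples $\langle\beta,\gamma,D\rangle$ for which $J^D_{\beta,\gamma}$ could be non-empty, which is exactly the claim. The ``hard part,'' if anything, is simply making sure the endpoint bounds $\beta_{ji}\geq 1/|S|^2$ and $w_h\geq 1/(|S|^2m)$ are explicitly in hand; both follow immediately from work already done, so no new argument is required.
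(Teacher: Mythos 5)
Your proof is correct and follows exactly the route the paper intends: the observation is stated in the paper as an immediate consequence of the proof of Lemma~\ref{lem:beta-ji-size}, which already establishes $\beta_{ji}\geq 1/|S|^2$ and $w_h\geq \beta_{ji}x_0\geq 1/(|S|^2m)$ (using $x_0\geq 1/m$ from Lemma~\ref{lem:J0-x0-bounds}), and you combine these endpoint bounds with the dyadic range of $D$ by multiplication just as the paper does.
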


The algorithm proceeds by separating the buckets corresponding to parameters for which the simple rounding (which samples a random subset of $J$ of size $\tilde\Omega(J)$) makes progress towards an approximation guarantee of $\tilde O(A)$. If a large fraction of vertices in $J_0$ participate exclusively in such buckets, then the algorithm applies this rounding. The following lemma gives the analysis of the algorithm in this case.

\begin{lemma}
  In Algorithm~\ref{alg-mmsa4}, if $|J_1|<|J_0|/2$, then with high probability the algorithm samples a subset $J_{\alg}\subseteq J$ which covers an $\tilde\Omega(1)$-fraction of blue vertices, and a subset of $S$ of size $\tilde O(A\cdot\opt)$ which covers all the red neighbors of $J_{\alg}$. \label{lem:mmsa4-case-1}
\end{lemma}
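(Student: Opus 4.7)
The plan is to establish three things separately: that $|\Gamma_B(J_{\alg})|=\tilde\Omega(|B|)$, that $S_{\alg}$ covers every red neighbor of $J_{\alg}$, and that $|S_{\alg}|\leq\tilde O(A\cdot\opt)$.

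For blue coverage, I would apply Claim~\ref{clm:LP4-blue-neighborhood-frac} to $\hat J=J_0\setminus J_1$. Because $|J_1|<|J_0|/2$ and every $j\in J_0$ satisfies $x_j\in[x_0,2x_0]$, we get $x(J_0\setminus J_1)\geq x(J_0)/4$; combined with Lemma~\ref{lem:J0-x0-bounds} (which yields $x(J_0)\geq x(J)/(2\log m)$), the parameters $\varepsilon$ and $\tau$ of the claim are both $\tilde\Omega(1)$. For each of the $\varepsilon|B|$ ``good'' blue vertices $\ell$, the bound $x_j^\ell\leq 2x_0$ forces $|\Gamma_{J_0\setminus J_1}(\ell)|\geq\tau/(2x_0)$, whence $\Pr[\ell\in\Gamma_B(J_{\alg})]\geq 1-e^{-\tau/2}=\tilde\Omega(1)$ under the independent $x_0$-sampling. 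A Chernoff bound over the independent indicators then gives $|\Gamma_B(J_{\alg})|=\tilde\Omega(|B|)$ with high probability.

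For red coverage, I would use Lemma~\ref{lem:beta-ji-size}: every red neighbor $i$ of $j\in J_0$ has $\hat\Gamma_j(i)\subseteq S_{\beta_{ji}}$ with $|\hat\Gamma_j(i)|\geq 1/(6\beta_{ji}\log|S|\log(|S|^2m))$. The algorithm samples each $h\in S_{\beta_{ji}}$ independently with probability $\min\{1,\beta_{ji}C\}$, where $C=12\log|S|\log(|S|^2m)\ln n$. When $\beta_{ji}C\geq 1$, coverage is automatic; otherwise $\Pr[\hat\Gamma_j(i)\cap S_{\alg}=\emptyset]\leq(1-\beta_{ji}C)^{|\hat\Gamma_j(i)|}\leq e^{-2\ln n}=n^{-2}$. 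A union bound over the at most $mn$ pairs $(j,i)$ with $j\in J_{\alg}$ ensures that all red neighbors are covered with probability $1-o(1)$.

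The main obstacle is bounding $|S_{\alg}|$. I would expand $\E[|S_{\alg}|]\leq\sum_{j\in J_0\setminus J_1}x_0\sum_{(\beta,\gamma)}|\Gamma^S_{\beta,\gamma}(j)|\min(1,\beta C)$, then split each summand by the case that the corresponding triple $\langle\beta,\gamma,D\rangle$ falls outside $P_1$. In case B ($\beta D\leq A\cdot\opt/x(J_0)$), the bound $|\Gamma^S_{\beta,\gamma}(j)|\leq 2D$ yields a per-term contribution of at most $2CA\opt/x(J_0)$; summing over the $\polylog$ buckets of each $j$ and over $j\in J_0\setminus J_1$ with the $x_0$-weight (whose total is at most $x(J_0)$) cancels the $x(J_0)$ factor to give an overall case-B contribution of $\tilde O(A\opt)$. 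In case A ($\beta/\gamma\leq A$), I would use $|\Gamma^S_{\beta,\gamma}(j)|\leq \opt/\gamma$ together with the per-$j$ identity $\sum_{(\beta,\gamma)}\gamma|\Gamma^S_{\beta,\gamma}(j)|\leq\sum_h w_h\leq\opt$, giving $\sum_{(\beta,\gamma)\textrm{ case A}}|\Gamma^S_{\beta,\gamma}(j)|\min(1,\beta C)\leq AC\opt$ per $j$; the remaining sum over $j$ is then controlled using the simplifying assumption (which forces $x(J)=\tilde O(|B|/\Delta)\leq\tilde O(A)$ in the regime where the LP-based algorithm is invoked). Finally a standard Chernoff bound for the independent Bernoulli sampling inside each bucket $S_\beta$ converts the expectation bound into a high-probability bound $|S_{\alg}|=\tilde O(A\opt)$. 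Combining this with the blue and red coverage guarantees proves the lemma.
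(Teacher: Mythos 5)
Your blue-coverage argument, your red-feasibility argument, and your case-B cost accounting all match the paper's proof. The genuine gap is in case A. By expanding $\expec[|S_{\alg}|]\leq\sum_{j}x_0\sum_{(\beta,\gamma)}|\Gamma^S_{\beta,\gamma}(j)|\min(1,\beta C)$ you commit to a per-$(j,h)$ accounting, and your case-A estimate $\sum_{(\beta,\gamma)}|\Gamma^S_{\beta,\gamma}(j)|\min(1,\beta C)\leq AC\opt$ is a bound \emph{per vertex} $j$. Summing over $j\in J_0\setminus J_1$ with weight $x_0$ then leaves a residual factor $x_0|J_0\setminus J_1|\approx x(J_0)$, so you obtain $\tilde O(A\cdot\opt\cdot x(J_0))$ rather than $\tilde O(A\cdot\opt)$. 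The rescue you invoke does not exist: the simplifying assumption $\Delta\leq k/m^{\eps}$ yields a \emph{lower} bound $x(J)\geq m^{\eps}/\polylog$ (this is precisely how Lemma~\ref{lem:J0-x0-bounds} is proved), and there is no upper bound of the form $x(J)=\tilde O(A)$; even if such a bound held, it would only give $\tilde O(A^2\cdot\opt)$. In case B the per-$j$ decomposition happens to work because the defining inequality $\beta D\leq A\cdot\opt/x(J_0)$ is already normalized by $x(J_0)$; in case A it is not, and the loss is real.

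The correct argument for case A, which is what the paper does, avoids the per-$j$ decomposition entirely and bounds the \emph{union} $S^1_{\beta,\gamma}=\bigcup_{j\in J_{\alg}}\Gamma^S_{\beta,\gamma}(j)$ globally: every $h$ in this union satisfies $w_h\geq\gamma$, so $|S^1_{\beta,\gamma}|\leq\opt/\gamma$ regardless of how many vertices of $J_{\alg}$ are adjacent to it. Each such $h$ is sampled (in the round for $\beta$) once, with probability at most $\beta C$, so the expected contribution of this set is at most $\beta C\opt/\gamma\leq AC\opt$ \emph{in total}, and summing over the $O(\log|S|\log(|S|^2m))$ relevant pairs $(\beta,\gamma)$ from Observation~\ref{obs:triple-bound} gives $\tilde O(A\cdot\opt)$. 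Your per-$j$ sum double-counts exactly the high-weight elements $h$ shared among many $j\in J_{\alg}$, which is the typical situation when $\gamma$ is large relative to $\beta$. (A minor additional point: the indicators $\1[\ell\in\Gamma_B(J_{\alg})]$ are not independent across $\ell$, since distinct blue vertices share neighbors in $J_0\setminus J_1$, so a plain Chernoff bound does not apply; a concentration argument for functions of the independent choices $\{j\in J_{\alg}\}$, or a Markov/repetition argument, is needed there.)
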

\begin{proof}
  Let us begin by analyzing the number of blue vertices covered. First, we can bound the LP value of the set $J_0\setminus J_1$ by
  \begin{align*}
      x(J_0\setminus J_1&)=x(J)-x(J_1)&\text{since }J_1\subseteq J_0\\
      &=x(J)-\frac{x(J_1)}{|J_1|}\cdot |J_1|\\
      &\geq x(J)-2\cdot\frac{x(J_0\setminus J_1)}{|J_0\setminus J_1|}\cdot |J_1|&\text{since }\forall j\in J:x_j\in[x_0,2x_0]\\
      &>x(J)-2\cdot\frac{x(J_0\setminus J_1)}{|J_0\setminus J_1|}\cdot |J_0\setminus J_1|&\text{since }|J_1|<|J_0|/2\\
      &= x(J)-2x(J_0\setminus J_1)\qquad \Longrightarrow\qquad x(J_0\setminus J_1)> x(J_0)/3.
  \end{align*}
  In particular, by Lemma~\ref{lem:J0-x0-bounds}, we get that $x(J_0\setminus J_1)>x(J)/(6\log m)$. Thus, by Claim~\ref{clm:LP4-blue-neighborhood-frac}, there is a blue subset of $B_0\subseteq B$ of cardinality
  $$|B_0|\geq \frac{|B|}{48e\ln(2k)\log k\log^2 m},$$
  such that for every $\ell\in B_0$, we have
  $$\sum_{j\in\Gamma_{J\setminus J_0}(\ell)}x_j^{\ell}\geq\frac{1}{24\log k\log^2m}.$$
  Since for every neighbor $j$ above we have $x_j^{\ell}\leq x_j\leq 2x_0$, we get that every $\ell\in B_0$ has at least
  $$|\Gamma_{J_0\setminus J_1}(\ell)|\geq \frac{1}{48\log k\log^2m}\cdot \frac{1}{x_0}$$
  neighbors in $J_0\setminus J_1$. Thus, when the algorithm samples every vertex $j\in J_0\setminus J_1$ independently with probability $x_0$, with high probability this covers at least
  $$\Omega\left(\frac{1}{\log k\log^2m}\right)\cdot|B_0|=\Omega\left(\frac{|B|}{\log^3k\log^4m}\right)$$ blue vertices.
  
  Before analyzing the cost incurred by the algorithm, let us note that it outputs a feasible solution with high probability. Indeed, by Lemma~\ref{lem:beta-ji-size}, for every $j\in J_{\alg}$ and every red neighbor $i\in\Gamma_(j)$, the red vertex $i$ has at least $1/(6\beta_{ji}\log|S|\log(|S|^2m))$ neighbors in $S$, and by definition all of them belong to $S_\beta$ as defined at this point in the algorithm. Therefore, every such neighbor $i$ is covered with probability at least $1-1/n^2$, and with high probability, all red neighbors of $J_{\alg}$ are covered.
  
  We now turn to analyzing the cost $|S_{\alg}|$ of this solution. By definition of $J_1$, for any set $J_{\beta,\gamma}^D$ that intersects $J_0\setminus J_1$, we must have $\beta/\gamma\leq A$ or $\beta D\leq A\cdot\opt/(x(J_0))$. Equivalently, for any $j\in J_0\setminus J_1$, if $\Gamma^S_{\beta,\gamma}(j)\neq\emptyset$, then either
  \begin{align}
      &\frac{\beta}{\gamma}\leq A,\qquad\qquad\text{or}\label{J01-cond-1}\\
      &\beta |\Gamma^S_{\beta,\gamma}(j)|\leq\frac{2A\cdot\opt}{x(J_0)}.\label{J01-cond-2} 
  \end{align}
  And so for every $\beta$, we can divide the set $S_\beta$ into two (potentially overlapping) parts $S_\beta=S_\beta^1\cup S_\beta^2$, where
  $$S_{\beta}^1=\bigcup_{j\in J_{\alg}}\bigcup_{\gamma:\beta/\gamma\leq A}\Gamma^S_{\beta,\gamma}(j)\qquad\text{and}\qquad S_{\beta}^2=\bigcup_{j\in J_{\alg}}\bigcup_{\gamma:\beta |\Gamma^S_{\beta,\gamma}(j)|\leq 2A\cdot\opt/x(J_0)}\Gamma^S_{\beta,\gamma}(j).$$
  We will analyze the number of vertices sampled from these two parts separately.
  
  Beginning with $S_\beta^1$, for every $\gamma$ which satisfies~\eqref{J01-cond-1}, consider the subset
  $$S_{\beta,\gamma}^1=\bigcup_{j\in J_{\alg}}\Gamma^S_{\beta,\gamma}(j).$$ Note that the union of these subsets is exactly $S_\beta^1$. Note that for every vertex $h\in S_{\beta,\gamma}^1$, we have $w_h\geq \gamma$, and so since the total LP value is at most $\opt$, we have that $|S_{\beta,\gamma}^1|\leq \opt/\gamma$. With high probability, the number of vertices sampled from this set (in the iteration corresponding to $\beta$) is at most $O(\beta\cdot\log(S)\log(|S|m)\log n\cdot \opt/\gamma)$, and since by Observation~\ref{obs:triple-bound}, the number of $\langle \beta,\gamma\rangle$ pairs that can contribute to $S_{\alg}$ is at most $O(\log|S|\log(|S|m)$, we get a total upper bound on this contribution to $S_\alg$ of 
  $$O(\beta\cdot\log^2|S|\log^2(|S|m)\log n\cdot \opt/\gamma)\leq O(A\cdot\opt\cdot \log^2|S|\log^2(|S|m)\log n),$$
  where the inequality follows from~\eqref{J01-cond-1}.
  
  Now let us examine the contribution from the various $S_\beta^2$. Note that for any $j\in J_0\setminus J_1$ and $\beta,\gamma$ that satisfy~\eqref{J01-cond-2}, the expected number of vertices that $\Gamma^S_{\beta,\gamma}$ will contribute to $S_{\alg}$ is at most
  $$O\left(|\Gamma^S_{\beta,\gamma}(j)\cdot \beta\cdot \log|S|\log(|S|m)\log n|\right)\leq O\left(\frac{A\cdot \opt}{x(J_0)}\cdot \log|S|\log(|S|^2m)\log n\right),$$
  and since with high probability we have $|J_{\alg}|=O(x(J_0))$, combined with the above bound on possible contributing pairs $\langle \beta,\gamma\rangle$, we get a total bound of
  $$O(A\cdot\opt\cdot \log^2|S|\log^2(|S|m)\log n)$$ as in the previous contribution to $S_{\alg}$, which concludes the proof of the upper bound on $|S_{\alg}|$. 
\end{proof}

Finally, we turn to the remaining case in Algorithm~\ref{alg-mmsa4}, when $|J_1|\leq |J_0|/2$. The analysis of this case rests on a back-degree argument similar (though significantly more involved) to the argument in Lemma~\ref{lem:back-deg} for Red-Blue Set Cover. Indeed, we show the following:

\begin{lemma}
  If $|J_1|\geq |J_0|/2$, then for $\beta,\gamma,D,h_0$ and the set $J_{\alg}$ as defined by the algorithm in this case, we have
  $$\sum_{j\in J_{\alg}}\hat x_j^{(h_0)}\geq\frac{|J_0|Dx_0\beta}{\opt}\cdot\frac{1}{4\log^2|S|\log(|S|^2m)\log m}.$$
  Furthermore, for every vertex $j\in \tilde J$ (as defined by the algorithm), we have $\hat x_j^{h_0}\in[x_0\beta/(2\gamma),4x_0\beta/\gamma]$.
  \label{lem:mmsa4-back-deg}
\end{lemma}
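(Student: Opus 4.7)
The plan is to chain two pigeonhole/bucketing losses with a final averaging step over $h_0\in\tilde S$, with the key algebraic input being the symmetry $X^{(j)}_h=X^{(h)}_j$ of the Sherali--Adams-style lift from Appendix~\ref{sec:LP}, which lets me rewrite $\hat x_j^{(h)}=\hat w_h^{(j)}\,x_j/w_h$.

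First I would establish the ``furthermore'' part, since it is largely bookkeeping and supplies the pointwise bound that drives the averaging step. For any $j\in J_{\alg}$, the condition $h_0\in\Gamma^S_{\beta,\gamma}(j)$ means $h_0\in\hat S_t^{ji}\subseteq S_s^{ji}$ for some $i\in\Gamma^R_{\beta,\gamma}(j)$, so by construction $\hat w_{h_0}^{(j)}\in[\beta,2\beta]$, $w_{h_0}\in[\gamma,2\gamma]$, and $x_j\in[x_0,2x_0]$ (the latter because $j\in J_0$). Substituting into the symmetry identity yields $\hat x_j^{(h_0)}\in[x_0\beta/(2\gamma),\,4x_0\beta/\gamma]$, which is the stated interval.

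Next I would prove the main inequality. By the case hypothesis $|J_1|\ge |J_0|/2$ and Observation~\ref{obs:triple-bound}, the maximizing triple $\langle\beta,\gamma,D\rangle\in P_1$ satisfies $|J_2|\gtrsim |J_0|/(\log^2|S|\log(|S|^2m))$. Since each $j\in J_2$ has $|\Gamma^S_{\beta,\gamma}(j)|\ge D$, the incidence count between $J_2$ and $S$ is at least $|J_2|D$; distributing over the $O(\log m)$ buckets indexed by $\tilde D$ and picking the heaviest one, I obtain $\sum_{h\in\tilde S}|\{j\in J_2:h\in\Gamma^S_{\beta,\gamma}(j)\}|\gtrsim |J_2|D/\log m$. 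Averaging the inner sum $\sum_{j\in J_2:\,h\in\Gamma^S_{\beta,\gamma}(j)}\hat x_j^{(h)}$ over $h\in\tilde S$ and applying the pointwise lower bound $\hat x_j^{(h)}\ge x_0\beta/(2\gamma)$ from the previous paragraph, the chosen $h_0$ (which maximizes this inner sum) satisfies $\sum_{j\in J_{\alg}}\hat x_j^{(h_0)}\gtrsim x_0\beta\,|J_2|\,D/(\gamma\,|\tilde S|\log m)$. To eliminate $|\tilde S|$, I would use $\sum_hw_h\le\opt$ together with $w_h\ge\gamma$ on $\tilde S$, yielding $|\tilde S|\le\opt/\gamma$; assembling the bounds produces the claimed inequality.

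The main obstacle I expect is verifying the conditioning identity $\hat x_j^{(h)}=\hat w_h^{(j)}\,x_j/w_h$: this should follow immediately from the symmetric definition of the lifted variables in Appendix~\ref{sec:LP}, but it is the one step that goes beyond routine pigeonhole and averaging. Pinning down the exact constant $1/4$ (as opposed to a larger absolute constant) will also require tracking ceilings and the $\log m+1$ vs.\ $\log m$ losses in each bucketing carefully.
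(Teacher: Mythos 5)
Your proposal is correct and follows essentially the same route as the paper: the identity $\hat x_j^{(h)}=x_j\hat w_h^{(j)}/w_h$ from the symmetry $X_j^{(h)}=X_h^{(j)}$, the bucketing bounds giving $\hat x_j^{(h_0)}\in[x_0\beta/(2\gamma),4x_0\beta/\gamma]$, the incidence count $|J_2|D/\log m$ into $\tilde S$, and a final averaging over $\tilde S$ using $\sum_h w_h\le\opt$. The only (cosmetic) difference is that the paper averages with weights $w_h$ (using $w_h\hat x_j^{(h)}=x_j\hat w_h^{(j)}\ge x_0\beta$ directly) rather than uniformly with the bound $|\tilde S|\le\opt/\gamma$, which costs you a factor of $2$ in the constant but changes nothing substantive.
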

\begin{proof}
  We begin with the second claim. By definition of $S_{\hat D}$ and $\tilde J$, for every $j\in \tilde J$  we have $w_{h_0}\in[\gamma,2\gamma]$, $\hat w^{(j)}_h\in[\beta,2\beta]$ and of course $x_j\in[x_0,2x_0]$. Since by definition we have $\hat w^{(j)}_{h_0}=X_{h_0}^{(j)}/x_j$ and $\hat x^{(h_0)}_j/=X^{(h_0)}_j/w_{h_0}=X^{(j)}_{h_0}/w_{h_0}$, we have that $\hat x_j^{(h)}=x_j\hat w_h^{(j)}/w_h$, and then the bounds on $\hat x_j^{(h)}$ follow immediately from the respective bounds on $x_j$, $w_h$, and $\hat w_h^{(j)}$.
  
  Now let us show the lower bound on the conditional LP weight of $J_{\alg}$. We begin by noting that by our choice of $\langle\beta,\gamma,D\rangle$, Observation~\ref{obs:triple-bound} and the current case in the algorithm, the cardinality of the set $J_2$ can be bounded by
  $$|J_2|\geq\frac{|J_1|}{|P_1|}\geq\frac{|J_0|}{2|P_1|}\geq\frac{|J_0|}{4\log^2|S|\log(|S|^2\log m)}.$$ Now consider the set of pairs $E_1=\{\langle j,h\rangle\in J_2\times S\mid h\in\Gamma^S_{\beta,\gamma}(j)\}$. Since $J_2\subseteq J^{D}_{\beta,\gamma}$, we can bound the number of pairs in this set by $|E_1|\geq|J_2|D$. Thus, by our choice of $\tilde D$, we have
  $$|\{\langle j,h\rangle\in J_2\times \tilde S\mid h\in\Gamma^S_{\beta,\gamma}(j)\}|\geq \frac{|J_2|D}{\log m}.$$
  We can now use an averaging argument to bound the conditional LP value of $S_{\alg}$. First note by the bounds on $x_0$ and $\hat x_j^h$ observed above that we have
  \begin{align*}
      \sum_{h\in \tilde S}w_h\sum_{j\in J_2:\Gamma^S_{\beta,\gamma}(j)\ni h}\hat x_j^{(h)}&\geq |\{\langle j,h\rangle\in J_2\times \tilde S\mid h\in\Gamma^S_{\beta,\gamma}(j)\}|x_0\beta\\
      &\geq \frac{|J_2|Dx_0\beta}{\log m}.
  \end{align*}
  On the other hand, we know that $\sum_{h\in\tilde S}w_h\leq\opt$, and so there must be some $h\in\tilde S$ for which
  \begin{align*}
      \sum_{j\in J_2:\Gamma^S_{\beta,\gamma}(j)\ni h}\hat x_j^{(h)} &\geq \frac{|J_2|Dx_0\beta}{\opt}\cdot\frac{1}{\log m}\\
      &\geq \frac{|J_0|Dx_0\beta}{\opt}\cdot\frac{1}{4\log^2|S|\log(|S|^2\log^2m)}.&\text{by our bound on }J_2
  \end{align*}
  In particular, our choice of $h_0$ must satisfy this property.
\end{proof}

We can now show that in this case, the algorithm makes progress towards an $\tilde O(m/A^2)$-approximation. Trading this off with the progress towards an $\tilde O(A)$-approximation as guaranteed by Lemma~\ref{lem:mmsa4-case-1}, we get an $\tilde O(m^{1/3})$-approximation by setting $A=m^{1/3}$.

\begin{lemma}
  In Algorithm~\ref{alg-mmsa4}, if $|J_1|\geq |J_0|/2$, then with high probability, the algorithm makes progress towards an approximation guarantee of $\tilde O(m/A^2)$.
  \label{lem:mmsa4-case-2}
\end{lemma}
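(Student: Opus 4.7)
The plan is to verify feasibility, upper bound $|S_\alg|$, lower bound $|\Gamma_B(J_\alg)|$, and then combine these bounds using both defining inequalities of $P_1$. I would begin with feasibility: for each $j \in J_\alg$ and each $i \in \Gamma_R(j)$, the Sherali--Adams-lifted forms of Constraints~\eqref{LP4:i-cover} and~\eqref{LP4:xy} (lifting by $w_{h_0}$ and dividing) yield $\sum_{h \in \Gamma_S(i)} \hat w_h^{(h_0)} \geq \hat x_j^{(h_0)} \geq x_0\beta/(2\gamma)$, where the last step is from Lemma~\ref{lem:mmsa4-back-deg}. Consequently, the total sampling rate on $\Gamma_S(i)$ is at least $2\ln n$, so $i$ is missed with probability at most $1/n^2$, and a union bound handles all of $\Gamma_R(J_\alg)$. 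The size bound $\E[|S_\alg|] \leq \tilde O(\opt\gamma/(x_0\beta))$ follows directly from $\sum_h \hat w_h^{(h_0)} \leq \opt$ (the lifted budget constraint), with a Chernoff bound transferring it to a high-probability bound.

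For the blue coverage, Lemma~\ref{lem:mmsa4-back-deg} together with its upper bound $\hat x_j^{(h_0)} \leq 4x_0\beta/\gamma$ gives $\hat x^{(h_0)}(J_\alg) \geq \tilde\Omega(|J_0|Dx_0\beta/\opt)$. I would then apply Claim~\ref{clm:LP4-blue-neighborhood} to the conditioned variables $\{\hat x^{(h_0)}, \hat z^{(h_0)}, \hat x_{j}^{\ell, (h_0)}\}$, which satisfy Constraints~\eqref{LP4:B-weight}--\eqref{LP4:j-ell} because those constraints lift cleanly through Sherali--Adams. A crucial ingredient is the lifted version of Claim~\ref{clm:LP4-J-weight}, which gives $\hat x^{(h_0)}(J) \leq \tilde O(|B|/\Delta)$; together these yield
\[
|\Gamma_B(J_\alg)| \;\geq\; \tilde\Omega\!\left(\frac{|J_0|Dx_0\beta\Delta}{\opt}\right).
\]

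The main obstacle is combining these bounds to extract the $A^2$ factor. The initial ratio is $|S_\alg|/|\Gamma_B(J_\alg)| \leq \tilde O\bigl(\opt^2\gamma/(x_0^2\beta^2 D|J_0|\Delta)\bigr)$, and both defining inequalities of $P_1$ must be invoked in concert: $\beta/\gamma > A$ supplies a factor $\gamma/\beta \leq 1/A$, while $\beta D > A\opt/x(J_0) \geq A\opt/(2x_0|J_0|)$ supplies $x_0^2 \beta D|J_0| \geq A x_0\opt/2$. Together these reduce the ratio to $\tilde O(\opt/(A^2 x_0\Delta))$. The final required estimate is $x_0\Delta \geq \tilde\Omega(|B|/m)$, which I would derive by pairing the lower bound $x(J) \geq \tilde\Omega(|B|/\Delta)$ (from Claim~\ref{clm:LP4-J-weight} with Constraint~\eqref{LP4:B-weight}) against $x(J) \leq \tilde O(x_0|J_0|) \leq \tilde O(x_0 m)$. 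This gives $|S_\alg|/|\Gamma_B(J_\alg)| \leq \tilde O(m\opt/(A^2|B|))$, which is exactly progress towards an $\tilde O(m/A^2)$-approximation, as claimed.
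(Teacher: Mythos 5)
Your proposal is correct and follows essentially the same route as the paper: the same lifted-constraint feasibility argument for the sampled cover, the same $\tilde O(\opt\gamma/(x_0\beta))$ bound on $|S_\alg|$, and the same use of Lemma~\ref{lem:mmsa4-back-deg} plus Claim~\ref{clm:LP4-blue-neighborhood} on the $h_0$-conditioned solution, with both $P_1$ conditions supplying the two factors of $A$. The only difference is bookkeeping — you carry the raw parameters $\beta,\gamma,D,x_0,\Delta$ to the end and cancel via $x_0\Delta\geq\tilde\Omega(|B|/m)$, and you make the bound $\hat x^{(h_0)}(J)\leq\tilde O(|B|/\Delta)$ explicit via the lifted Claim~\ref{clm:LP4-J-weight}, a point the paper leaves implicit.
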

\begin{proof}
  Let us first bound the number of blue vertices covered by $J_\alg$. By Lemma~\ref{lem:mmsa4-back-deg}, we have
  \begin{align*}
      \hat x^{h_0}(J_\alg)&\geq \frac{|J_0|x_0D\beta}{\opt}\cdot\frac{1}{4\log^2|S|\log(|S|^2m)\log m}\\
      &\geq \frac{x(J_0)D\beta}{\opt}\cdot\frac{1}{4\log^2|S|\log(|S|^2m)\log m}&\text{since }\forall j\in J_0:2x_0\geq x_j\\
      &\geq \frac{x(J)D\beta}{\opt}\cdot\frac{1}{4\log^2|S|\log(|S|^2m)\log m}&\text{by Lemma~\ref{lem:J0-x0-bounds}}\\
      &>x(J)\cdot\frac{A}{x(J_0)}\cdot \frac{1}{4\log^2|S|\log(|S|^2m)\log m}.&\text{since }\forall\langle\beta,\gamma,D\rangle\in P_1:\frac{\beta D}{\opt}>\frac{A}{x(J_0)}
  \end{align*}
  Thus, since the conditioned LP solution satisfies the basic LP, we can apply Claim~\ref{clm:LP4-blue-neighborhood} to this solution and get that the size of the blue neighborhood of $J_{\alg}$ can be bounded by
  \begin{align*}
      |\Gamma_B(J_{\alg})|&\geq \frac{1
    }{4e\ln(2k)\log k\log m}\cdot \frac{A}{x(J_0)}\cdot\frac{1}{4\log^2|S|\log(|S|^2m)\log m}\cdot |B|\\
    &=\frac{1}{16\ln(2k)\log k\log^2|S|\log(|S|^2m)\log^2m}\cdot\frac{A}{x(J_0)}\cdot|B|.
  \end{align*}
  
  Note that by the LP constraints and Lemma~\ref{lem:mmsa4-back-deg}, for every red neighbor $i\in\Gamma_R(J_{\alg})$, we have
  $$\hat y^{h_0}_i\geq x^{h_0}_i\geq x_0\beta/(2\gamma),$$ and so by~\eqref{LP4:i-cover}, the rescaled solution $(\hat w^{h_0}_h\cdot 2\gamma/(x_0\beta))_{h\in S}$ is a fractional set cover for $\Gamma_R(J_{\alg})$. Thus, sampling every $h\in S$ with probability $\min\{1,2\ln n\cdot \hat w^{h_0}_h\cdot 2\gamma/(x_0\beta)\}$ produces a valid set cover with high probability. It remains to analyze the size of this set cover. Indeed, since $\hat w^{h_0}(S)\leq \opt$, our sampling procedure produces a set of expected size
  \begin{align*}
      \expec[|S_{\alg}|]&\leq \frac{4\gamma\ln n}{x_0\beta}\cdot\opt\\
      &\leq 8\ln n\cdot\frac{\gamma}{\beta}\cdot\frac{|J_0|}{x(J_0)}\cdot\opt&\text{since }\forall j\in J_0:x_j\leq 2x_0\\
      &\leq 8\ln n\cdot\frac{\gamma}{\beta}\cdot\frac{m}{x(J_0)}\cdot\opt\\
      &< 8\ln n\cdot\frac{1}{A}\cdot\frac{m}{x(J_0)}\cdot\opt,&\text{since }\forall\langle\beta,\gamma,D\rangle\in P_1:\frac{\beta}{\gamma}>A
  \end{align*}
  and so with high probability we have $|S_{\alg}|=O(\opt\cdot m\log n/(A\cdot x(J_0)))$.
 
  Putting our two bounds together, we get that in this case, the algorithm makes progress towards an approximation guarantee of
  $$\frac{|B|}{|\Gamma_B(J_{\alg})|}\cdot\frac{|S_{\alg}|}{\opt}=\tilde O(1)\cdot \frac{x(J_0)}{A}\cdot \frac{m}{A\cdot x(J_0)}=\tilde O(1)\cdot \frac{m}{A^2}.$$
\end{proof}

\section{Sketch of Integrality Gap for MMSA\texorpdfstring{$_{t}$}{t}}\label{sec:lb-mmsa}
In this section we provide a proof sketch of integrality gap for MMSA$_t$ by gluing together hard instances of Densest $k$-Subgraph as the layers in MMSA$_t$.
Our lower bound is based on the current best known bounds for Densest $k$-Subgraph in different parameter regimes of $k$. Consider the Erd\"os--R\'enyi graph $G(n,p)$ where $p=n^{\beta-1}$, and let $k=n^{\alpha}$ for some $\alpha,\beta\in(0,1)$. As we know, log-density-based algorithms (and Sherali-Adams) can detect the presence of a planted $k$-subgraph of average degree $n^{\gamma}$ for any $\gamma>\alpha\beta$, while on the other hand, for $\alpha\leq 1/2$, even Sum-of-Squares cannot certify the non-existence of such a subgraph for $\gamma<\alpha\beta$~\citep{chlamtavc2018sherali}. For $\alpha>1/2$, however, the situation is different. While the limit of Sherali-Adams and log-density techniques is still $\gamma>\alpha\beta$, a simple SDP relaxation (as well as other simple techniques) can certify the non-existence of dense subgraph for $\gamma>\beta/2$.

Thus, we can summarize the limits of current techniques (as well as the integrality gap of Sherali-Adams on top of a simple SDP) as: In $G(n,p)$ for $p=n^{\beta-1}$, for $k=n^{\alpha}$ we can certify the non-existence of a planted $k$-subgraph of average degree $\omega(n^{\gamma})$ iff $\gamma\geq\min\{\alpha,1/2\}\cdot\beta$.

\paragraph{Construction of the integrality gap instance.} Let $t$ be a positive odd integer, and let $\eps>0$. We construct a layered graph (representing an alternating circuit of depth $t$) as follows: Let $G_1,\ldots, G_{(t-1)/2}$ be independent instances of $G(n,p)$ with $p=n^{-(2-2\eps)/(2-\eps)}=n^{\beta-1}$ for $\beta=\eps/(2-\eps)$. For every $i\in[(t-1)/2]$, let $U_{2i}$ be a vertex set corresponding to the edges of $G_i$, and let $U_{2i+1}$ be a vertex set corresponding to the vertices of $G_i$. For every edge $e=(u,v)$ in $G_i$, add edges between the vertex representing the edge $e$ in $U_{2i}$, and the vertices representing $u$ and $v$ in $U_{2i+1}$. Let $U_1$ be a vertex set of size $n^{(2-2\eps)/(2-\eps)}$. Finally, for every $i\in[(t-1)/2]$, partition every vertex set $U_{2i}$ into $|U_{2i-1}|$ sets arbitrarily (thus, into $n^{(2-2\eps)/(2-\eps)}$ sets for $i=1$, and into $n$ sets for $i>1$), and connect the $j$th vertex in $U_{2i-1}$ to all vertices into the $j$th set in the partition of $U_{2i}$. See Figure~\ref{fig:lb_instance} for an example construction of this lower bound instance.
\begin{figure}[!h]
    \centering
    \includegraphics[width=\textwidth]{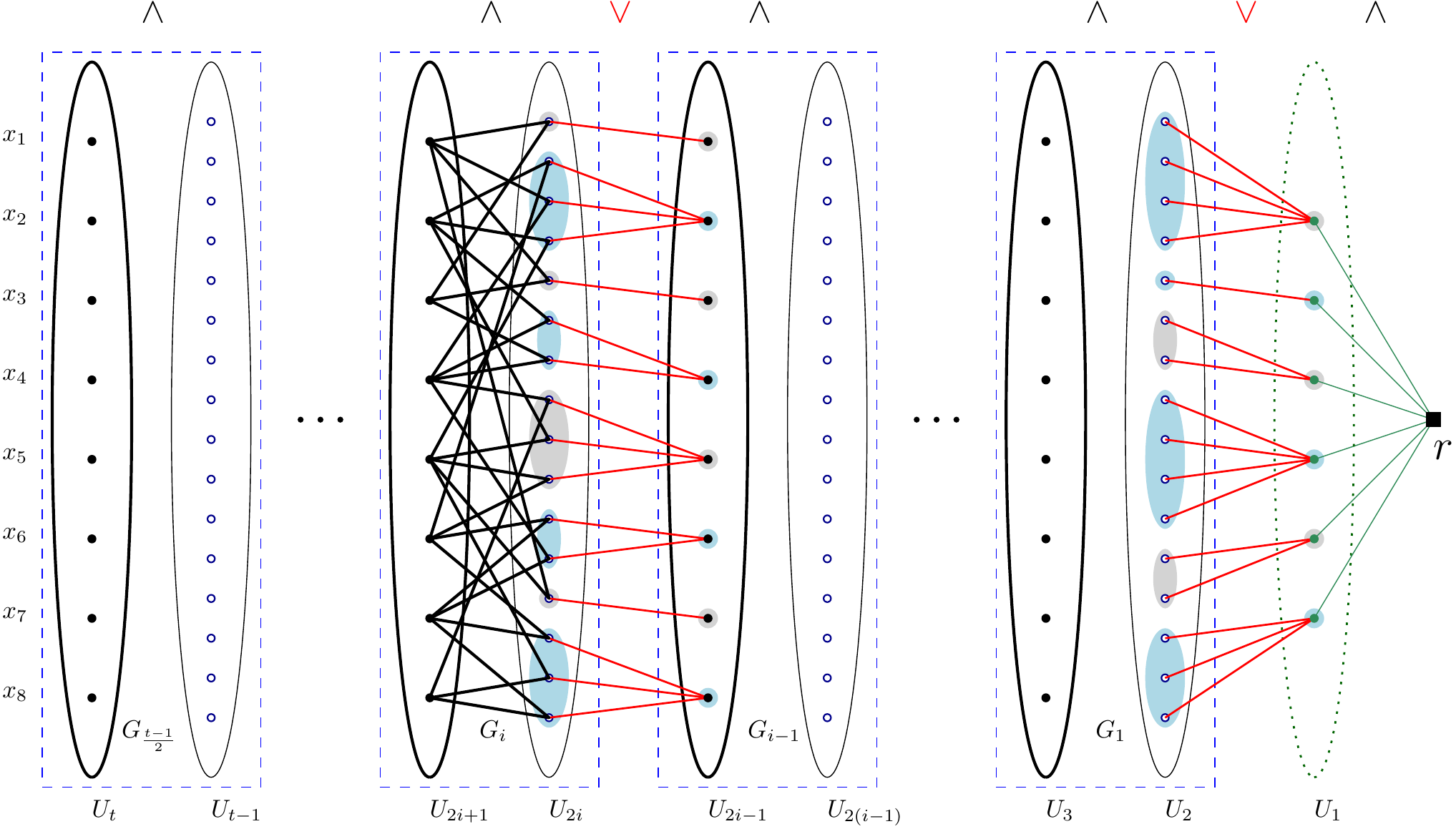}
    \caption{This figure provides a high-level construction of the hardness instance. The bipartite representation of $(t-1)/2$ graphs $G_1, \cdots, G_{\frac{t-1}{2}}$ along with a target set $U_1$ form the layers of our instance of MMSA$_{t}$. Each $G_i$ is represented as a bipartite graph with two set of nodes $U_{2i} = V(G_i)$ and $U_{2i+1} = E(G_i)$ where there is an edge between $e \in U_{2i}$ and $v\in U_{2i+1}$ iff $v$ is one of the endpoints of $e$ in $G_i$ (these are the bold black edges and correspond to AND gates). Moreover, layers of different $G_i$ are glued together via a mapping of edge-nodes of $G_i$ (i.e, the vertices in $U_{2i}$) to vertex-nodes of $G_{i-1}$ (i.e., the vertices in $U_{2i-1}$) and $U_1$ is added to handle the connection for $G_1$. These intermediate edges are the red edges in this figure and correspond to OR gates.}
    \label{fig:lb_instance}
\end{figure}

Note that the total number of vertices is very strongly concentrated around $n^{2/(2-\eps)}\cdot(t-1)/2$. Let us denote $m=n^{2/(2-\eps)}\cdot(t-1)/2$. Since we think of $t$ as a constant, we will analyze the lower bound as a function of $m$ (the total number of vertices up to a constant factor). To that end, let us note that in all the graphs $G_i$, we have $n\approx m^{1-\eps/2}$, and $p=n^{-(2-2\eps)/(2-\eps)}\approx m^{-(1-\eps)}$.

\paragraph{Proof sketch of integrality gap.} Let us begin by analyzing the optimum. We need to cover all $|U_1|\approx m^{1-\eps}$ vertices in $U_1$, which forces us to choose that many vertices in $U_2$. Since each vertex in $U_2$ is connected to exactly two vertices in $U_3$ via AND gates, satisfying any choice of $|U_1|$ vertices in $U_2$ require at most $2|U_1|$ vertices in $U_3$. However, this is also a lower bound, since an arbitrary choice of $m^{1-\eps}$ vertices in $U_3$ (corresponding to choosing $k_1=m^{1-\eps}$ vertices in $G_1$) will result in a subgraph with expected number of edges $p k_1^2
\approx m^{-(1-\eps)+2(1-\eps)}=m^{1-\eps}$, and a Chernoff bound shows that no significantly smaller subgraph can have this many induced edges. This argument may be repeated now for all the subsequent layers, since each time we will end up with $\Omega(m^{1-\eps})$ vertices in $U_{2i-1}$ that need to be satisfied, which forces us to choose that many vertices in $U_{2i}$ (since for any pair of vertices in $U_{2i-1}$, the number of their common neighbors in $U_{2i}$ is zero and the corresponding layer of $U_{2i} \rightarrow U_{2i-1}$ is an OR layer), and a subgraph with that many edges in $G_i$ must have $\Omega(m^{1-\eps})$ vertices, which means that we will incur a set of at least this size in $U_{2i+1}$. Thus, the optimum is $\Theta(m^{1-\eps})$.

Now let us see what fractional solution survives the basic SDP and Sherali-Adams.
At first we know that for a goal of $k_1^*\geq\sqrt{n}$ vertices (in $U_3$, we have a fractional solution that represents as $k_1^*$-subgraph with $k_1^*\sqrt{pn}$ edges. Thus, to achieve our goal of fractional weight $m^{1-\eps}$ in $U_2$ (representing edges in the subgraph in $G_1$), we need to choose $k_1^*$ such that
$k_1^*\sqrt{pn}=m^{1-\eps}$ which implies that $k_1^*\approx\frac{m^{1-\eps}}{m^{-(1-\eps)/2+(2-\eps)/4}}=m^{1-5\eps/4}$.
Thus, in our fractional solution, LP weight $m^{1-5\eps/4}$ is distributed evenly among the vertices of $U_3$. We can repeat this argument, showing that $U_{2i-1}$ has LP weight $\approx m^{1-(3+i)\eps/4}$ and that to get an fractional solution with this edge weight in $G_i$, it suffices to set the fractional vertex weight to $k^*_i=m^{1-(4+i)\eps/4}$, as long as $k^*_i\geq\sqrt{n}$. That is, as long as 
$m^{1-(4+i)\eps/4}\geq m^{(2-\eps)/4}$ which corresponds to $i\leq\frac{2}{\eps}-3$.
Once $i\geq\frac{2}{\eps}-3$, we can use the log-density lower bound for obtaining fractional edge weight $k^*_{i-1}=m^{1-(3+i)/4}$, which sets the fractional vertex weight to $k^*_i=m^{2\alpha_i/(1-2\eps)}\approx n^{\alpha_i}$ such that 
$$m^{1-(3+i)\eps/4}=k^*_i n^{\alpha_i\beta} = k_i^*n^{\alpha_i\cdot\eps/(2-\eps)} \approx n^{\alpha_i}n^{\alpha_i\cdot\eps/(2-\eps)}=n^{2\alpha_i/(2-\eps)}\approx m^{\alpha_i}.$$ Recall that this bound holds as long as $\alpha_i\leq 1/2$. Indeed, at this point, since $i\geq\frac{2}{\eps}-3$, we get
$\alpha_i=1-\frac{(3+i)\eps}{4}\leq 1-\frac{2}{4}=\frac12$. From this point on, the sequence of $k^*_i$ keeps decreasing, so it is always at most $\sqrt{n}$, and we can always use the log-density lower bound. 

We now follow a similar calculation. If the relaxation places fractional weight $k^*_{i-1}=m^{\delta_{i-1}}$ on $U_{2i-1}$, then to get this fractional edge weight in $G_i$, we need fractional vertex weight $k^*_i=n^{\alpha_i}$ such that 
$m^{\delta_{i-1}}\approx n^{2\alpha_i/(2-\eps)}\approx m^{\alpha_i}$,
and so the fractional weight on $U_{2i+1}$ will be $k^*_i=n^{\delta_{i-1}}=m^{\delta_{i-1}\cdot (2-\eps)/2}$. Thus, we get a decrease of a factor of $(2-\eps)/2$ in the exponent at each step, and an additional $\log(1/(2\eps))/\log(2/(2-\eps))$ steps will bring us down to $m^{\eps}$. Thus, the integrality gap is at least $m^{1-2\eps}$ for MMSA$_t$ for any
$t\geq \frac{2}{\eps}-3+\frac{\log(1/(2\eps))}{\log(2/(2-\eps))}$.
Thus the integrality gap does tend to linear as $t$ increases, with a dependence which is at most $t=O((1/\eps)\log(1/\eps))$, or in other words, an integrality gap of $m^{1-O((\log t)/t)}$.

\bibliographystyle{abbrvnat}
\bibliography{rbsc}

\begin{thebibliography}{16}
\providecommand{\natexlab}[1]{#1}
\providecommand{\url}[1]{\texttt{#1}}
\expandafter\ifx\csname urlstyle\endcsname\relax
  \providecommand{\doi}[1]{doi: #1}\else
  \providecommand{\doi}{doi: \begingroup \urlstyle{rm}\Url}\fi

\bibitem[Abidha and Ashok(2022)]{abidha2022red}
V.~Abidha and P.~Ashok.
\newblock Red blue set cover problem on axis-parallel hyperplanes and other
  objects.
\newblock \emph{arXiv preprint arXiv:2209.06661}, 2022.

\bibitem[Alekhnovich et~al.(2001)Alekhnovich, Buss, Moran, and Pitassi]{ABMP01}
M.~Alekhnovich, S.~Buss, S.~Moran, and T.~Pitassi.
\newblock Minimum propositional proof length is np-hard to linearly
  approximate.
\newblock \emph{The Journal of Symbolic Logic}, 66\penalty0 (1):\penalty0
  171--191, 2001.

\bibitem[Ashok et~al.(2017)Ashok, Kolay, and Saurabh]{ashok2017multivariate}
P.~Ashok, S.~Kolay, and S.~Saurabh.
\newblock Multivariate complexity analysis of geometric red blue set cover.
\newblock \emph{Algorithmica}, 79\penalty0 (3):\penalty0 667--697, 2017.

\bibitem[Awasthi et~al.(2010)Awasthi, Blum, and Sheffet]{awasthi2010improved}
P.~Awasthi, A.~Blum, and O.~Sheffet.
\newblock Improved guarantees for agnostic learning of disjunctions.
\newblock In \emph{Proceedings of the Conference on Learning Theory}, pages
  359--367, 2010.

\bibitem[Carr et~al.(2000)Carr, Doddi, Konjevod, and Marathe]{carr2000red}
R.~D. Carr, S.~Doddi, G.~Konjevod, and M.~Marathe.
\newblock On the red-blue set cover problem.
\newblock In \emph{Proceedings of the Symposium on Discrete Algorithms}, pages
  345--353, 2000.

\bibitem[Chan and Hu(2015)]{chan2015geometric}
T.~M. Chan and N.~Hu.
\newblock Geometric red--blue set cover for unit squares and related problems.
\newblock \emph{Computational Geometry}, 48\penalty0 (5):\penalty0 380--385,
  2015.

\bibitem[Charikar et~al.(2016)Charikar, Naamad, and Wirth]{CNW16}
M.~Charikar, Y.~Naamad, and A.~Wirth.
\newblock On approximating target set selection.
\newblock In \emph{Proceedings of the International Workshop on Approximation,
  Randomization, and Combinatorial Optimization}, 2016.

\bibitem[Chlamt{\'a}{\v{c}} and Manurangsi(2018)]{chlamtavc2018sherali}
E.~Chlamt{\'a}{\v{c}} and P.~Manurangsi.
\newblock {Sherali--Adams} integrality gaps matching the log-density threshold.
\newblock \emph{Approximation, Randomization, and Combinatorial Optimization.
  Algorithms and Techniques}, 2018.

\bibitem[Chlamt{\'a}{\v{c}} et~al.(2016)Chlamt{\'a}{\v{c}}, Dinitz, Konrad,
  Kortsarz, and Rabanca]{chlamtac2016densest}
E.~Chlamt{\'a}{\v{c}}, M.~Dinitz, C.~Konrad, G.~Kortsarz, and G.~Rabanca.
\newblock The densest {$k$}-subhypergraph problem.
\newblock In \emph{Proceedings of the International Workshop on on
  Approximation, Randomization, and Combinatorial Optimization. Algorithms and
  Techniques}, 2016.

\bibitem[Chlamt{\'a}{\v{c}} et~al.(2017)Chlamt{\'a}{\v{c}}, Dinitz, and
  Makarychev]{chlamtavc2017minimizing}
E.~Chlamt{\'a}{\v{c}}, M.~Dinitz, and Y.~Makarychev.
\newblock Minimizing the union: Tight approximations for small set bipartite
  vertex expansion.
\newblock In \emph{Proceedings of the Symposium on Discrete Algorithms}, pages
  881--899, 2017.

\bibitem[Dinur and Safra(2004)]{dinur2004hardness}
I.~Dinur and S.~Safra.
\newblock On the hardness of approximating label-cover.
\newblock \emph{Information Processing Letters}, 89\penalty0 (5):\penalty0
  247--254, 2004.

\bibitem[Elkin and Peleg(2007)]{elkin2007hardness}
M.~Elkin and D.~Peleg.
\newblock The hardness of approximating spanner problems.
\newblock \emph{Theory of Computing Systems}, 41\penalty0 (4):\penalty0
  691--729, 2007.

\bibitem[Goldwasser and Motwani(1997)]{GM97}
M.~Goldwasser and R.~Motwani.
\newblock Intractability of assembly sequencing: Unit disks in the plane.
\newblock In \emph{Proceedings of WADS}, volume~97, pages 307--320, 1997.

\bibitem[Madireddy and Mudgal(2022)]{madireddy2022constant}
R.~R. Madireddy and A.~Mudgal.
\newblock A constant--factor approximation algorithm for red--blue set cover
  with unit disks.
\newblock \emph{Algorithmica}, pages 1--33, 2022.

\bibitem[Madireddy et~al.(2021)Madireddy, Nandy, and
  Pandit]{madireddy2021geometric}
R.~R. Madireddy, S.~C. Nandy, and S.~Pandit.
\newblock On the geometric red-blue set cover problem.
\newblock In \emph{Proceedings of the International Workshop on Algorithms and
  Computation}, pages 129--141, 2021.

\bibitem[Miettinen(2008)]{miettinen2008positive}
P.~Miettinen.
\newblock On the positive--negative partial set cover problem.
\newblock \emph{Information Processing Letters}, 108\penalty0 (4):\penalty0
  219--221, 2008.

\end{thebibliography}

\appendix

\section{Adapting and Applying our Algorithm to Partial Red-Blue Set Cover}\label{sec:partial-rbsc}
Let us now consider the variation in which we are given a parameter $\hat k$, and are only required to cover at least $\hat k$ elements in a feasible solution. The algorithm and analysis work with almost no change other than the following.

In the algorithm, the stopping condition of the loop is of course no longer once we have covered all blue elements, but once we have covered at least $\hat k$ of them.

A slightly more subtle change involves the analysis of the LP rounding in the final iteration. The notion of progress towards a certain approximation guarantee may not be valid if the ratio of red elements to blue elements covered is still as small as required, but the number of blue elements added is far more than we need. Rather than derandomize the rounding, one can show that it succeeds (despite this issue) with high probability. Let us briefly sketch the argument here.

First, note that  we can always preemptively discard any sets with more than $\opt$ red elements, and so we may assume that $r_\alpha\leq\opt$. 
Suppose we need to cover an additional $k^*$ elements in order to reach the target of $\hat k$ blue elements total. 
Since our bound on $\expec[|\Gamma_{R_\alpha}(J^*)|]$ is a linear function of our bound on $\expec[|J^*\setminus J_+|]$, by a Chernoff bound we have $|\Gamma_{R_\alpha}(J^*)|=\tilde O(r_\alpha A)$ with all but exponentially small probability. On the other hand, $|\Gamma_{B}(J^*)|$ is always at most $|B|$, so by Markov, we have
$$\prob\left[|\Gamma_{B}(J^*)|\leq \frac{\expec[|\Gamma_{B}(J^*)|]}{2}\right]\leq \frac{|B|-\expec[|\Gamma_{B}(J^*)|]}{|B|-\expec[|\Gamma_{B}(J^*)|]/2}\leq 1-\frac{1}{2|B|}.$$ Thus, repeating the rounding a polynomial number of times (in a given iteration), with all but exponentially small probability we can find a set $\hat J\subseteq J$ that satisfies both $$|\Gamma_{B}(\hat J)|\geq \frac{\expec[|\Gamma_{B}(J^*)|]}{2}\qquad\text{and}\qquad |\Gamma_{R_{\alpha}}(\hat J)|]=\tilde O(r_\alpha A).$$ Now if $\expec[|\Gamma_{B}(J^*)|]\leq 2k^*$, then we have the required ratio and bound on the number of new red elements by the previous analysis. If $\expec[|\Gamma_{B}(J^*)|]> 2k^*$, then this will be the last iteration, as we will cover at least the required $k^*$ additional blue elements, and the number of red elements added at this final stage is at most $\tilde O(r_\alpha A)\leq \tilde O(A\cdot\opt)$, so we maintain the desired approximation ratio.

\section{Additional LP Constraints for MMSA\texorpdfstring{$_4$}{4}}\label{sec:LP}
The following is a complete list of lifted constraints that we use in addition to the basic LP relaxation for MMSA$_4$:
\begin{align*}
&X^{(h)}_j=X^{(j)}_h&\forall j\in J,\forall h\in S\\
&X^{(j)}_j=x_j&\forall j\in J\\
&\sum_{h\in\Gamma_S(i)}X_h^{(j)}\geq X_i^{(j)}&\forall j\in J\forall i\in R\\
&X_j^{(j)}\leq X_i^{(j)}&\forall j\in J\forall i\in\Gamma_R(j)\\
&0\leq X_a^{(j)}\leq x_j&\forall j\in J\forall a\in \{j\}\cup R\cup S\\
&\sum_{h'\in S}w_{h'}^{(h)}\leq \opt&\forall h\in S\\
&\sum_{\ell\in B}X_{\ell}^{(h)}\geq |B|/(\log k\log m)w_h&\forall h\in S\\
&X_{\ell}^{(h)}\leq \sum_{j\in\Gamma_{J}(\ell)}X_{\ell,j}^{(h)}\leq 2e\ln(2k)X_{\ell}^{(h)}&\forall h\in S\forall\ell\in B\\
&\Delta X_j^{(h)}\leq\sum_{\ell\in\Gamma_B(j)}X_{\ell,j}^{(h)}\leq 2\Delta X_j^{(h)}&\forall h\in S\forall j\in J\\
&0\leq X_{\ell,j}^{(h)}\leq X_j^{(h)},X_{\ell}^{(h)}\leq w_h&\forall h\in S\forall \ell\in B\forall j\in J\\
&\sum_{h'\in\Gamma_S(i)}X_{h'}^{(h)}\geq X_i^{(h)} &\forall h\in S\forall i\in R\\
&X_j^{(h)}\leq X_i^{(h)} &\forall h\in S\forall (j,i)\in E(J,R)\\
&0\leq X_a^{(h)}\leq w_h&\forall h\in S\forall a\in B\cup J\cup R\cup S
\end{align*}

\end{document}